\renewcommand{\epsilon}{\varepsilon}
\newcommand{\E}{\mathbb{E}}
\newcommand{\R}{\mathbb{R}}
\newcommand{\Z}{\mathbb{Z}}
\newcommand{\C}{\mathbb{C}}
\newcommand{\N}{\mathbb{N}}
\renewcommand{\P}{\mathbb{P}}
\newcommand{\Un}{\mathds{1}}
\newcommand{\norm}[1]{{\left\|{#1}\right\|}}
\newcommand{\ent}[1]{{\left[{#1}\right]}}
\newcommand{\abs}[1]{{\left|{#1}\right|}}
\newcommand{\scal}[1]{{\left\langle{#1}\right\rangle}}
\newcommand{\cala}{\mathcal{A}}
\newcommand{\cc}{\mathcal{C}}
\newcommand{\dd}{\mathcal{D}}
\newcommand{\ff}{\mathcal{F}}
\newcommand{\calg}{\mathcal{G}}
\newcommand{\pp}{\mathcal{P}}
\newcommand{\hh}{\mathcal{H}}
\newcommand{\mm}{\mathcal{M}}
\newcommand{\call}{\mathcal{L}}
\newcommand{\bb}{\mathcal{B}}
\newcommand{\cals}{\mathcal{S}}
\newcommand{\Tau}{\mathcal{T}}
\newcommand{\vv}{\mathcal{V}}
\newcommand{\calw}{\mathcal{W}}
\newcommand{\xx}{\mathcal{X}}
\newcommand{\yy}{\mathcal{Y}}
\newcommand{\rrho}{\varrho}
\newcommand{\eps}{\epsilon}
\newcommand{\ssig}{\varsigma}
\newcommand{\gu}{\mathfrak{U}}
\newcommand{\gs}{\mathfrak{S}}
\newcommand{\tr}[1]{\text{Tr}\left( {#1} \right) }
\newcommand{\tra}{\text{Tr} }
\newtheorem{theo}{Theorem}
\newtheorem{prop}[theo]{Proposition}
\newtheorem{defi}[theo]{Definition}
\newtheorem{lem}[theo]{Lemma}
\newtheorem{cor}[theo]{Corollary}
\theoremstyle{definition}
\newtheorem{rem}{Remark}
\newtheorem{assumption}{Assumption}
\begin{document}
\title{The Open Quantum Brownian Motion and continual measurements}
\author{Simon Andr\'eys}
\maketitle
\tableofcontents

\begin{abstract}
This article is a mathematical analysis of the Open Quantum Brownian Motion. This object was introduced in \cite{OQBM} as the limit of a family of Open Quantum Random Walks on the graph $\Z$. We prove the convergence for the three possible descriptions of this object: the quantum trajectory satisfying a Belavkin Equation, the unitary evolution on the Fock space satisfying a quantum Langevin Equation, and the Lindbladian evolution. We introduce a very general framework for the continual measurement of non-demolition observables, which is applied to the measurement of the position of the Open Quantum Brownian Motion, and we probe some questions related to the convergence of processes in this context.
\end{abstract}

\section{Introduction}

\subsection{General introduction}

 Open Quantum Random Walks (OQW) are a quantum generalization of discrete Markov chains and were introduced by Attal, Petruccione, Sabot and Sinayskyi in  \cite{OQWbirth}.
 They consists into a particle moving randomly on a discrete graph with transition probability depending on its internal quantum state. They model a quantum system subject to dissipation or repeated measurement with control, and are used for example as a toy model to study coherence in photosynthetic cells \cite{decoherence_assisted_transport}; they have been the subject of extensive mathematical study, see the end of Paragraph \ref{subsub:oqw} for more references.
 While OQW are defined on discrete graph and on discrete space, the Open Quantum Brownian Motion (OQBM) was introduced in \cite{OQBM} to model a particle moving on the line in continuous time. It was defined as the limit of a family of OQW on $\Z$, with a diffusive normalization, i.e. with a time scale $\tau$ going to zero and a space scale ${\delta=\sqrt{\tau}}$.
 The obtained process depends in two operators $N$ and $H$; in the trivial case where ${N=H=0}$ the classical Brownian motion is recovered. The Open Quantum Brownian Motion has been derived from a microscopic physical model in \cite{SinayskiyPetruccione_micro_derivation} and \cite{SinayskiyPetruccione_control_OQBM}.
 A mathematically interesting phenomenon was observed on the OQBM, namely the transition from diffusive to ballistic behavior as the parameters $N$ and $H$ are changed \cite{BauerBernardTilloy_ballistic} with the appearance of so-called spikes in the ballistic regime \cite{spikes1} \cite{spikes2}, which were then studied in the context of more general stochastic differential equations \cite{spikes3}, \cite{kolb_spikes_2019} and \cite{bernardin_spiking_2018}.
 
As for OQWs, the OQBM has three different descriptions. It can be seen has a Lindblad evolution ${\rho_t=\Lambda_S^t(\rho_0)}$ on the Hilbert space ${\hh_G\otimes L^2(\R)}$, where $\hh_G$ represents the internal state of the particle. The second description is a Stinespring dilation ${\rho_{tot, t}=\gu_t (\rho_0\otimes\ket{\Omega}\bra{\Omega}) \gu_t^*}$ on ${\hh_G\otimes L^2(\R)\otimes \Phi}$, where ${\Phi}$ is the Fock space and $\gu_t$ satisfies a Quantum Stochastic Differential Equation (QSDE) called the Hudson-Parthasaraty Equation. This representation is more complete than the Lindbladian one, since it allows to compute the quantum correlation between the events at two different times. Finally, upon the continual measure of the position of the particle, it admits a quantum trajectories unraveling, that is a random process $(\rrho_t, X_t)_{t\in \R}$ where $\rrho_t$ is a random state on $\hh_G$ and $X_t\in \R$ is a random position. When $\hh_G$ is of finite dimension this process obeys a classical stochastic differential equation, often called the diffusive Belavkin Equation \cite{BoutenGutaMaasen} \cite{Belavkin92}.

In the original article on the OQBM \cite{OQBM}, most results where derived formally but not rigorously proved. The main purpose of this article is to explicit the mathematical meaning of the statements of \cite{OQBM}, pointing out some of the mathematical issues and completing the proofs.

In the second part of the introduction, we introduce OQWs and the formal definition of the OQBM, and the mathematical problems raised by this definition, which are tackled in the rest of the article. Besides the problem of the convergence, a mathematical issue appears in the description of the Lindbladian: for an OQW, the evolution projects the states on the set of \emph{diagonal state}, i.e. states of the form $\rho=\sum_{x\in \vv} \rho(x)\otimes\ket{x}\bra{x}\in \gs(\hh_G\otimes L^2(\vv))$, where $\vv$ is the set of vertices of the graph on which the particle is moving. In the continuous case, diagonal operators are replaced by multiplication operators of the form $\int_{\R} \rho(x) d\ket{x}\bra{x}$, which cannot be trace class and hence cannot be a state. Hence, the discrete object which converges to the continuous OQBM is actually not an OQW in the strict meaning of the term, though it coincides with an OQW on the set of diagonal states. 

In the second section, we introduce the repeated measurement model and the quantum stochastic calculus and we prove the convergence of the discrete models for the OQBM to the continuous one in each description: for the unraveled process, we prove a convergence in distribution in the Skorokhod space as a direct consequence of a theorem of Pellegrini \cite{PellegriniDiffusive08}. For the unitary dilations, the strong convergence of the unitary operators is proved from a theorem of Attal and Pautrat \cite{AttalPautrat}. This strong convergence allows to prove the strong convergence for the Lindblad operators. 

In the third section we look into another claim of the article \cite{OQBM}, in which the unraveled process $(\rrho_t, X_t)_{t\in [0, T]}$ is obtained from the continual measurement of an observable under the evolution by the unitary operators $\gu_t$. This makes use of the quantum filtering theory \cite{GoughNotes} \cite{BoutenGutaMaasen} \cite{Belavkin92} and the notion of non-demolition measurement. We introduce rigorously the continual measurement of non-demolition observables in a way which is equivalent to the quantum filtering approach but we believe is more adapted to the Schr\"odinger picture of the evolution, and we apply it to the case of the OQBM. Finally, we ponder the relation between the convergence of the unitary operators $\gu_t$ and the convergence in distribution of the unraveling, obtaining only an incomplete result which generates a few open questions. 
\vspace{0.5cm}

{\bf Acknowledgment:} I thank Clement Pellegrini and Yann Pautrat for useful remarks and advice, Denis Bernard and Antoine Tilloy for their talks on the Open Quantum Brownian Motion, Ivan Bardet who pointed out some of the mathematical issues raised by their article, and my advisor St\'ephan Attal for encouraging me to write on this subject and suggested numerous improvements to the presentation of the paper.

\subsection{Open Quantum Random Walks and the Open Quantum Brownian Motion}
In this subsection we introduce the notion of Open and Unitary Quantum Walks (OQW and UQW) and we describe the formal definitions of the Open Quantum Brownian Motion (OQBM) and the related mathematical issues.

\subsubsection{General notations}

The basic object in quantum mechanics is a separable Hilbert space $\hh$ (all Hilbert spaces are implicitly supposed to be separable in this article). Let us gather some of the notations and definitions we will use:
\begin{itemize}
\item The identity operator on $\hh$ (respectively $\hh_A$ and $\C^n$) is written $I_\hh$ (respectively $I_A$ and $I_n$) or simply $I$ when it does not cause confusion. If $\hh_A$ and $\hh_B$ are two spaces and $A$ is an operator on $\hh_A$, we shall denote by $A$ the operator $A\otimes I_B$ on $\hh_A\otimes \hh_B$.
\item A vector $v\in \hh$ may also be written $\ket{v}$, and the corresponding linear form is denoted $\bra{v}$, so that $\ket{v}\bra{v}$ is the orthogonal projection on $\C v$. In any tensor space $\hh_A\otimes \hh_B$, The partial trace with respect to $\hh_B$ is written $\tra_B$ or $\tra_{\hh_B}$.
\item The algebra of bounded operators on $\hh$ is written $\bb(\hh)$, endowed with the operator norm $\norm{A}$ (sometimes written $\norm{A}_\infty$ to avoid confusion with other norms). The space of compact operators on $\hh$ is written $\bb^\infty_0(\hh)$. An operator on $\bb(\hh)$ is called a super-operator.
\item The adjoint of an operator $A$ is denoted $A^*$.
\item The Schatten space of order $p$ is the space $\cals^p(\hh)$ of bounded operators $A$ such that $\tr{\abs{A}^p}<+\infty$, endowed with the norm $\norm{A}_p=\tr{\abs{A}^p}^{1/p}$. In particular, $\cals^1(\hh)$ is the space of trace-class operators.
\item The $\sigma$-weak (or ultraweak) topology on $\bb(\hh)$ is the topology generated by the seminorms 
\[
\norm{A}_{(u_i)_{i\in \N}, (v_i)_{i\in \N}}=\sum_{i\in\N} \scal{u_i,\,Av_i}
\]
where the $u_i$ and $v_i$'s are vectors in $\hh$ with $\sum_{i\in \N} \norm{u_i}^2+\norm{v_i}^2<+\infty$.
\item For a measured space $(\xx, \ff, \mu)$ we write the corresponding $L^p$ space as $L^p(\xx, \ff, \mu)$ or when it does not cause confusion $L^p(\xx, \mu)$ or even $L^p(\xx)$.
\item For any Banach space $B$, we write $L^2(\xx, B, \mu)$ the space of $L^2$ function from $\xx$ to $B$, and the Sobolev space of functions $f:\R^n\rightarrow B$ with distributional derivatives $f^{(k)}\in L^p$ for $k<l$ is written $W^{l, p}(\R^n, B)$. For $p=2$ and $B=\hh$ a Hilbert space, it is itself a Hilbert space and is written $H^l(\R^n, \hh)$. It is isomorphic to $\hh\otimes H^l(\R^n)$ and injected to a dense subset of $L^2(\R^n, \hh)=\hh\otimes L^2(\R^n)$. We write $X$ the position operator defined by $Xf(x)=xf(x)$, and $P=-i\partial_x$ the impulsion operator with domain $H^1(\R, Leb)$.
\item On the space $L^2(\xx, \mu)$, for any measurable function $f : \xx \rightarrow \C$ we write $M_f$ the operator of multiplication by $f$, defined by $M_f g(x)=f(x)g(x)$ for any $g$ such that $fg \in L^2(\xx, \mu)$. 
\item We write $\Un_A$ the indicator function of the set $A$, and $\Un=\Un_\xx$.
\item We write $\otimes_{alg}$ the algebraic tensor product and $\otimes$ the completed tensor product of Hilbert spaces.
\item We generally use the letter $\calg$ for isometries or unitary operators whose role is to identify a space as the subspace of another, or to identify two representations of the same space. This type or map is often implicit in the literature of quantum mechanics, so there is no standard notation; we chose the letter $\calg$ because it evoques the curved arrow $\hookrightarrow$ used for injections in category theory. 
\end{itemize}

\subsubsection{Unitary and Open Quantum Walks}\label{subsub:oqw}

Unitary quantum walks are generally called simply \enquote{quantum random walks}, we add \enquote{unitary} to distinguish them from open quantum walks. They were formally introduced in \cite{Aharonov_Davidovich_Zagury_93_quantum_random_walks} as quantum version of classical random walks on graphs, and they were extensively studied, notably in relation to quantum computing: universal quantum computation can be obtained with UQW \cite{universal_quantum_computation_10} and has been used to develop quantum algorithm, generally for the search of marked node in graphs	(see \cite{quantum_walks_spatial_search_16}\cite{quantum_walks_tree_17} among many other articles). See the comprehensive review \cite{quantum_walks_review_12} on unitary quantum random walks. 

For the sake of completeness, let us briefly describe unitary quantum walks. A UQW represents a quantum particle moving on a graph $G=(\vv, E)$, where the set of vertices $\vv$ is countable or finite. The internal state of the particle is described by a space $\hh_G$ (which is called the \enquote{gyroscope} in this article; it is also called the \enquote{quantum coin}, the internal space or the chirality space in the literature). The Hilbert space of the position of the particle is $\hh_z=L^2(\vv, \nu)$ where $\nu$ is the counting measure on $\vv$. We write $(x\rightarrow y)$ or $(y\leftarrow x)$ an edge oriented from $x$ to $y$ and $x_-=\set{y\in \vv|(y\rightarrow x)\in E}$ and $x_+=\set{y\in \vv| (x\rightarrow y) \in E}$. The definition of an unitary quantum walk is the following: 

\begin{defi}
A unitary quantum walk on $G$ with gyroscope $\hh_G$ is the quantum dynamics represented by a unitary operator $U$ on $\hh_G\otimes \hh_z$ which is of the form
\begin{align}\label{eq:defi_uqw}
U=\sum_{(y\leftarrow x) \in E} U_{(y\leftarrow x)}\otimes \ket{y}\bra{x} ~
\end{align}
where the $U_{(y\leftarrow x)}$'s are bounded operators on $\hh_G$. 
\end{defi}

An operator of the form of Equation \eqref{eq:defi_uqw} is unitary if and only if for any $x,y\in \vv$ we have
\[
\sum_{z \in x_+\cap y_+} U_{(x\leftarrow z)}^* U_{(y\leftarrow z)}=\delta_{x,y} \Un_{\hh_G}~.
\]
The most classical example is the one of translation-invariant UQW on the graph $\Z$ (with edges between nearest neighbours). It can always be written of the form
\[
U= B_-\otimes D^* + B_+\otimes D
\]
where $D$ is the translation to the right
\[
D=\sum_{i \in \Z} \ket{i+1}\bra{i}
\]
and $B_-$ and $B_+$ are operators such that $B_-^* B_-+B_+^* B_+=I_G$ and $B_-^*B_+=0$ (or equivalently if $\hh_G$ is of finite dimension, there exists an orthogonal projector $P$ and a unitary $V$ on $\hh_G$ such that $B_-=VP$ and $B_+=V(I_G-P)$).
\vspace{0.5cm}

Open quantum Walks where introduced in \cite{OQWbirth} as another dynamics on $\hh_G\otimes \hh_z$, which is not unitary as for UQWs but completely positive, meaning that it corresponds to the dynamics of an open quantum system. Let us briefly describe this concept. We look at a system described by a Hilbert space $\hh_S$ in interaction with an exterior system $\hh_p$, which are initially independent (that is in a state of the form $\rho_S\otimes \rho_B$) and evolve during some time $\tau$, with an evolution described by a unitary $V$. The state on $\hh_S$ after the evolution is then described by $\tra_{\hh_p} \left(U (\rho_S\otimes \rho_B) U^* \right)$. This leads to the following definition:
\begin{defi}\label{def:channel}
We call \emph{quantum channel} on $\hh_S$ a linear map $\Lambda$ on $\gs(\hh_S)$ which is of the form
\[
\Lambda(\rho)=\tra_{\hh_p}\left(V (\rho\otimes \rho_p)V^*\right)
\]
for some space $\hh_p$ and some state $\rho_p$ on $\hh_p$ and some unitary $V$ on $\hh_S\otimes \hh_p$. 
\end{defi}

Quantum channels can be characterized as the completely positive, trace-preserving and $\sigma$-weakly continuous maps on bounded operators (see for example Chapter 6 of \cite{AttalLecture}). Alternately, they are the maps which are of the form
\[
\Lambda(\rho)=\sum_{k=1}^{r} K_k \rho K_k^*
\] 
where $r\in \N\cup \set{+\infty}$, the $K_k$'s are bounded operators on $\hh_A$ with $\sum_{k=1}^{r} K_k^*K_k=I_A$ and are called the Krauss operators of $\Phi$.  The tripe $(\hh_p, V, \rho_p)$ corresponding to $\Lambda$ is called a Stinespring dilation of the channel (it is not unique).

We can now define open quantum walks:

\begin{defi}
An Open Quantum Walk (OQW) on the graph $G=(\vv, E)$ with gyroscope space $\hh_G$ is a quantum channel $\Lambda$ on $\hh_G\otimes \hh_z$ which is of the form
\[
\Lambda(\rho)=\sum_{(y\leftarrow x) \in E}  \Big(K_{(y\leftarrow x)}\otimes \ket{y}\bra{x}\Big)\rho\left( K_{(y\leftarrow x)}^*\otimes \ket{x}\bra{y}\right)~
\]
for some operators $K_{(y\leftarrow x)}$ satisfying
\[
\sum_{y\in x_+} K_{(y\leftarrow x)}^* K_{(y\leftarrow x)}=I_G~
\]
for all $x\in \vv$.
\end{defi}

A peculiar feature of OQWs is that $\Lambda(\rho)$ is always block-diagonal with respect to the basis $(\ket{x})_{x\in \vv}$ of $\hh_z$, that is, we can write
\[
\Lambda(\rho)=\sum_{x\in \vv} \rho_x\otimes \ket{x}\bra{x}
\]
for some familly of positive semi-definite operators $(\rho_x)_{x\in \vv}$. 

To any OQW corresponds a stochastic process called the quantum trajectories of the OQW:

\begin{defi}
The quantum trajectory of an OQW $\Lambda$ is the process $(X_n, \rrho_n)_{n\in \N}$ with $X_n\in \vv$ and $\rrho_n \in \gs(\hh_G)$ such that $(X_{n+1}\leftarrow X_n)\in E$ for all $n$ and with the following transition probabilities:
\begin{align}\label{eq:defi_traj_oqw}
\P\left(X_{n+1}=y~|~X_n=x\right)&=\tr{K_{(y\leftarrow x)}\rrho_n K_{(y\leftarrow x)}^*} \\
\rrho_{n+1}&=\frac{K_{(X_{n+1}\leftarrow X_n)}\rrho_n K_{(X_{n+1}\leftarrow X_n)}^*}{\tr{K_{(X_{n+1}\leftarrow X_n)}\rrho_n K_{(X_{n+1}\leftarrow X_n)}^*}}~.
\end{align}

\end{defi}

When we fix for initial state $X_0=x$ and $\rrho_0=\rho$ the quantum trajectory is related to the OQW by the formula
\[
\E\left(\rrho_n \otimes \ket{X_n}\bra{X_n}\right)=\Lambda^n(\rho\otimes \ket{x}\bra{x})~.
\]

This direct relation between the OQW and a random walk on the graph makes it closer to classical random walks than UQW. The concept of OQW has attracted significant interest; a central limit Theorem on the trajectories of translation invariant OQW on $\Z^n$ has been proved in \cite{OQW_CLT} and extended to more general lattices in \cite{OQW_CP14_lattice}, \cite{OQW_CKSY18_crystal} and completed by a large deviation principles in \cite{OQW_CP14_lattice}, while criterions for the ergodic properties of the quantum channel $\Lambda$ where proved in \cite{OQW_CP15_ergodic}. Two notable generalisation of OQW have been defined, one which interpolates between OQW and UQW \cite{OQW_XY12_coin}, and another which considers continuous-time OQW \cite{OQW_P14_continuous_time}, still on discrete graphs. 

\subsubsection{The formal definition of the Open Quantum Brownian Motion}

The idea of the Open Quantum Brownian Motion is to define a dynamics which is similar to the OQW dynamic, but in continuous time and continuous space (with a particle moving on the line). It is defined as a limit of a family of OQW on the graph $\Z$, with a time scale $\tau$ and a space scale $\delta=\sqrt{\tau}$ going to zero. {\bf In the rest of the article, we will write $\delta=\sqrt{\tau}$ the space scale.} Let us define formally the Open Quantum Brownian Motion, following \cite{OQBM}. we consider the graph $\delta \Z$ (with nearest neighbours edges) and $\hh_{\tau,z}=l^2(\delta \Z)$ and a gyroscope space $\hh_G$. We define the OQW $\Lambda_{\tau}$ on $\gs(\hh_G\otimes \hh_{\tau, z})$ by
\begin{align}\label{eq:discreteOQBM_OQW}
\Lambda_\tau(\rho)=\sum_{x\in \delta\Z} \left(B_{\tau, -} \otimes \ket{x-\delta}\bra{x} \right)\rho\left(B_{\tau, -}^*\otimes \ket{x}\bra{x-\delta}\right) +\left(B_{\tau, +} \otimes \ket{x+\delta}\bra{x} \right)\rho\left(B_{\tau, +}^*\otimes \ket{x}\bra{x+\delta}\right)
\end{align}

where the Krauss operators $B_{\tau, +}$ and $B_{\tau, -}$ satisfy
\begin{align}\label{eq:defi_B}
B_{\tau,\pm 1}&=\frac{1}{\sqrt{2}}\left(I\pm\delta N+\tau\left(-iH-\frac{1}{2} N^*N\pm M \right)\right)+O(\tau^{3/2})~.
\end{align}
for some bounded operators $N, H, M$ on $\hh_G$ with $H$ self-adjoint. It is argued in \cite{OQBM} that it is the only choice of $B_{\tau, \pm}$ such that $\Lambda_\tau^{\ent{t/\tau}}$ converges for all $t$ as $\tau=\delta^2 \rightarrow 0$. Let us derive formally the limit: consider the state $\rho(n)=\Lambda^n_\tau(\rho)$. For any $n>0$ it is of the form
\[
\rho(n)=\sum_{x\in \delta \Z} \rho(n, x)\otimes \ket{x}\bra{x}
\]
for some positive semi-definite operators $\rho(n,x)$ on $\hh_G$. By the definition of $\Lambda_\tau$ we have
\begin{align}\label{eq:estimate_rho_n_x}
\begin{split}
\rho(n+1,x)&=\frac{\rho(n, x+\delta)+\rho(n, x-\delta)}{2}\\
&+\delta\left( N \frac{\rho(n, x+\delta)-\rho(n, x-\delta)}{2}+\frac{\rho(n, x+\delta)-\rho(n, x-\delta)}{2}N\right)\\
&+\tau \left(\call\left(\frac{\rho(n, x+\delta)+\rho(n,x-\delta)}{2}\right)+M\frac{\rho(n,x+\delta)-\rho(n, x-\delta)}{2}+\frac{\rho(n,x+\delta)+\rho(n,x-\delta)}{2}M^* \right)+O(\tau \sqrt{\tau})
\end{split}
\end{align}
where the super-operator $\call$ on $\gs(\hh_G)$ is defined by 
\begin{align}\label{eq:defi_call}
\call(\rho)=-i[H, \rho]-\frac{1}{2}\set{N^*N, \rho}+N\rho N^*~.
\end{align}

Assume that $(t,x)\mapsto \rho(\ent{t/\tau}, x)$ converges as $\tau\rightarrow +\infty$ to some function $(t,x) \mapsto \rho(t,x)$ from $[0, +\infty)\times \R$ to $\cals^1(\hh_G)$. We have the formal, non-rigorous estimates
\begin{align}
\frac{\rho(t+\tau, x)-\rho(t,x)}{\tau}&\simeq \frac{\partial}{\partial t} \rho(t, x) \\
\frac{\rho(t, x+\delta)-\rho(t, x-\delta)}{2 \delta}&\simeq \frac{\partial}{\partial x} \rho(t,x) & \frac{\rho(t, x+\delta)+\rho(t,x-\delta)-2\rho(t,x)}{\delta^2}&\simeq \frac{\partial^2}{\partial x^2}\rho(t,x) \\
\end{align}

Assuming that these estimates are justified and replacing them in Equation \eqref{eq:estimate_rho_n_x} we obtain
\begin{align}\label{eq:LindbladOQBM}
\frac{\partial}{\partial t} \rho(t, x)&= \frac{1}{2} \frac{\partial^2}{\partial x^2} \rho(t,x)+ N \frac{\partial}{\partial x} \rho(t,x)+\frac{\partial}{\partial x} \rho(t,x)N^* +\call(\rho(t, x))~.
\end{align}

This equation defines the dynamics of the Open Quantum Brownian Motion. We call it the Lindblad Equation for the OQBM as it represents the generator of a continuous-time dynamics $t\mapsto \Lambda^t(\rho)$ where $\Lambda^t$ is a quantum channel for all $t\in [0, +\infty)$.  This raises several problems, but before listing them let us describe the two other descriptions of the OQBM. The first is the stochastic process of quantum trajectories. Let $(X_n, \rrho_n)_{n\in \N}$ be the quantum trajectories of the Open Quantum Walk $\Lambda_\tau$ (with $X_n\in \delta \Z$), then another formal estimate gives a stochastic differential equation for the limit of $(X_{\ent{t/\tau}}, \rrho_{\ent{t/\tau}})_{t\in [0, +\infty)}$ as $\tau\rightarrow 0$:
\begin{align}\label{eq:belavkinOQBM}
\left\{
\begin{array}{ll}
d\rrho_t&=\call(\rrho_t)dt+\left(N\rrho_t+\rrho_t N^*-\rrho_t\Tau(\rho_t)\right)dB_t\\
dX_t&=\Tau(\rrho_t)dt+dB_t
\end{array}
\right.
\end{align}
where $B_t$ is a Brownian motion and $dB_t$ is its It\^o differential, and $\Tau(\rho)=\tr{(N+N^*)\rho}$.

The last representation of the OQBM is the \enquote{dilated} one, it consists in a unitary evolution $(\gu_t)_{t\in [0, +\infty)}$ on a space $\hh_G\otimes \hh_z\otimes \Phi$ where $\Phi$ is the bosonic Fock space on $L^2(\R)$. It satisfies a Hudson-Parthasaraty Equation (whose formalism is introduced later in the article):
\begin{align}\label{eq:HPOQBM}
d\gu_t=\Big((-iH-\frac{1}{2}N^*N+\frac{1}{2}\partial_x^2-\partial_x N)dt+(N-\partial_x)da^0_1(t)+(-N^*-\partial_x)da^1_0(t) \Big) \gu_t~
\end{align}
where $a^1_0(t)$ is the creation operator of $\Un_{[0, t)}$ and $a_1^0(t)=\left(a^1_0(t)\right)^*$ is the corresponding annihilation operator. It is related to the Lindblad dynamics of the OQBM through the partial trace, 
\[
\Lambda^t(\rho)=\tra_{\Phi}\left( \gu_t \left(\rho\otimes \ket{\Omega}\bra{\Omega}\right) \gu_t^*\right)~
\]
and it is related with the quantum trajectories through the concept of continual measurement described in the third section of this article.

The purpose of this article is to adress the many mathematical problems aroused by these definitions, as listed below.

\begin{enumerate}
\item A first problem is the projection on diagonal states. Indeed, for any $t\geq\tau$ the state $\rho_{\tau,t}=\Lambda^{\ent{t/\tau}}(\rho)$ is of the form
\[
\sum_{x\in \delta \Z} \rho(t,x)\otimes \ket{x}\bra{x}
\] 
i.e. it is in the algebra $\bb(\hh_G)\otimes L^\infty(\delta \Z)$. Likewise, in the limit $\tau\rightarrow 0$, for any $t>0$ the state at time $t$ should be in the algebra $\bb(\hh_G)\otimes L^\infty(\R)$, which we may write in the formalism of spectral measure as
\[
\rho(t)=\int_{x\in \R} \rho(t,x)d\ket{x}\bra{x}~.
\]
It means that it operates on $u\otimes f \in\hh_G\otimes L^2(\R)=L^2(\R, \hh_G)$ as
\[
\rho(t) \left(u\otimes f\right)(x)=f(x) \rho(t,x) u ~.
\]
 But such an operator \emph{cannot be a state}, since it is not even trace-class when it is nonzero (indeed, either it has a continuous spectrum, either it has nonzero eigenvalues with infinite-dimensional eigenspace). Thus, it is impossible for $\Lambda_\tau^\ent{t/\tau}$ to converge on the full algebra $\bb(\hh_G\otimes \hh_z)$. This problem is adressed in this article the following way: at first, we do not consider the OQW map $\Lambda_\tau$ but another map $\tilde{\Lambda}_\tau$ which coincides with $\Lambda_\tau$ on the algebra of diagonal states $\bb(\hh_G)\otimes L^\infty(\delta\Z)$. It is defined by
\[
\tilde{\Lambda}_\tau(\rho)=\left(B_{\tau, -} \otimes D_\tau^*\right)\rho\left(B_{\tau, -}^*\otimes D_\tau\right)+\left(B_{\tau, +} \otimes D_\tau\right)\rho\left(B_{\tau, +}^*\otimes D_\tau^*\right)
\]
where $D_\tau=\sum_{x\in \delta\Z} \ket{x+\delta}\bra{x}$ is the translation by $\delta$ to the right.
We show that $\tilde{\Lambda}_\tau^{\ent{t/\tau}}$ converges strongly to a map $\Lambda^t$, and that $\left(\Lambda^t\right)^*$ preserves the algebra $\bb(\hh_G)\otimes L^\infty(\R)$. Thus, we may consider states \emph{on} the algebra $\bb(\hh_G)\otimes L^\infty(\R)$, for which $\Lambda_\tau$ and $\tilde{\Lambda}_\tau$ coincide, and converges.
\item The convergence for the quantum trajectories is the less problematic. We prove it in the case where $\hh_G$ is of finite dimension, on the time interval $[0, T]$ for some fixed $T$, as a convergence in distribution in the Skorokhod space of continuous functions (Proposition \ref{prop:trajectoryOQBM}). The convergence is a direct consequence of a theorem of Pellegrini \cite{PellegriniDiffusive08}.
\item For the convergence to the unitary dilation $\gu_t$, we construct a unitary dilation $(\gu_{\tau,n})_{n\in \N}$ of the discrete-time semigroup $(\tilde{\Lambda}_\tau^n)_{n\in \N}$, and show its strong convergence to a unitary $\gu_t$ satisfying the Hudson-Parthasarathy Equation \eqref{eq:HPOQBM} (Theorem \ref{theo:HPOQBM}).
For this, we use a theorem of Attal and Pautrat \cite{AttalPautrat}; this theorem is designed to work with bounded operators, while the operator $\partial_x$ is unbounded. This problem is bypassed by considering the restriction to the space $\dd_C=\set{f \in L^2(\R) | \mbox{supp}(\ff f) \subset [-C, C]}$ where $\ff$ is the Fourier transform, and using the density of the space $\cup_{C>0} \dd_C$. 
\item We show the strong convergence of $\tilde{\Lambda}_\tau^{\ent{t/\tau}}$ to the quantum channel 
\[
\Lambda^t(\rho)=\tra_{\Phi}\left(\gu_t\left(\bullet\otimes \ket{\Omega}\bra{\Omega}\right) \gu_t\right)~.
\]
 The problem is then to show that $\Lambda^t(\rho)$ satisfies indeed Equation \eqref{eq:LindbladOQBM}, provided $\rho$ is a state on $\bb(\hh_G)\otimes L^\infty(\R)$ and is sufficiently regular. We define the space of Sobolev states, show an extended version of Equation \eqref{eq:LindbladOQBM} for Sobolev states on $\bb(\hh_G\otimes \hh_z)$ (Theorem \ref{theo:LindbladOQBM1}) with the use of the quantum stochastic calculus on $\gu_t$, and restrict this equation to states on $\bb(\hh_G\otimes \hh_z)$ to obtain Equation \eqref{eq:LindbladOQBM} (Theorem \ref{theo:LindbladOQBM2}).
\item One conceptual problem is to relate quantum trajectories and the dilation $\gu_t$. This is the object of the third section of this article, where we expose the formalism of continual measurement of non-demolition evolution. We prove a partial result relating the convergence of quantum trajectories and of the dilation; this theorem is redundant in the case of the OQBM since the convergence of quantum trajectories can be proved by other ways, but it applies to more general evolution under continual measurement.
\item The OQBM can be generalized in various ways, as noted in \cite{OQBM}, and related problems are listed at the end of this article.
\end{enumerate}

\section{The Open Quantum Brownian motion}

 In this section, we outline some of the mathematical objects of quantum mechanics, describing states on some von Neumann algebras, measurement, the repeated interactions setup and the Belavkin Equation. We show that the discrete OQBM can be seen as a repeated interactions model supplemented with a quantum description of a pointer linked with some repeated measurement.

\subsection{von Neumann algebras and quantum states}

The notion of standard measured space is crucial in the mathematical definition of measure, since for every Hilbert space $\hh$ there exists a standard measured space $(\xx, \ff, \nu)$ such that $\hh$ is isomorphic to $L^2(\xx, \nu)$. This also allows to study commutative von Neumann algebras, and to relate the notion of quantum state to classical probabilities and the measurement of observables.

 \subsubsection{Standard measured space}\label{subsub:meas}

Standard measured spaces form a very large class of measured space; notably, two spaces of special interest in this article are $\xx=\R$ with the Lebesgue measure, and $\xx=\calw([0,+\infty))$ the Wiener space on $[0, +\infty)$ equiped with the Wiener measure (i.e. the space of continuous functions on $[0, +\infty)$ equipped with the measure corresponding to the Brownian motion). Standard measured spaces have many different characterizations, see the chapter on Lebesgue-Rohlin spaces in Bogachev II \cite{Bogachev}; let us describe two of them:
\begin{defi}
Let $(\xx, \ff, \mu)$ be a measured space (every measured are nonnegative in this article). It is called a \emph{standard measured space} if it satisfies one of the following equivalent properties:
\begin{enumerate}
\item There exists a measure $\nu$ on $\R$ of the form $\nu=\nu_1+\sum_{i\in \N} c_i \delta_i$ where $\nu_1$ is absolutely continuous with respect to the Lebesgue measure, the $\delta_i$ are the Dirac distributions at $i$ and the $c_i$ are nonnegative numbers, such that $(\xx, \ff, \mu)$ is almost isomorphic to $(\R,\bb(\R),\nu)$, that is, there exists sets of full measure $A\subset \xx$ and $B\subset \R$ and a measure-preserving isomorphism between $(A, \mu)$ and $(B, \nu)$.
\item There exists a complete metric $d$ on a set of full measure $D\subset \xx$ such that $\ff|_D$ is the $\sigma$-algebra generated by open sets for $d$ and $\mu$ is a Radon measure for this topology.
\end{enumerate}
\end{defi}

Note that standard measured spaces are necessarily almost separated (i.e. for almost every $x\neq y \in \xx$ there exists two disjoint measurable sets $A,B\in \ff$ with $x\in A$ and $y\in B$). More importantly, if $\ff_1\subset \ff$ is another $\sigma$-algebra, the measured space $(\xx, \ff_1, \mu)$ is standard if and only if $\ff_1=\ff$. If $\ff_1\neq \ff$, we make $(\xx, \ff_1, \mu)$ into a standard probability space by quotient: 

\begin{defi}
For any standard measured space $(\xx, \ff, \mu)$ with a sub-$\sigma$-algebra $\ff_1$, let $\xx/\ff_1$ the quotient of $\xx$ by the relation: $x\sim y$ if every set $A\in \ff_1$ containing $x$ also contains $y$. There is a surjective map $s_{\ff_1}:\xx\rightarrow \xx/\ff_1$, we endow $\xx/\ff_1$ with the image of $\ff_1$ by $s_{\ff_1}$ and the push-forward measure of $\mu$ by $s_{\ff_1}$, which we still write $\ff_1$ and $\mu$. The space $(\xx/\ff_1, \ff_1, \mu)$ is a standard measured space, called the \emph{quotient} of $(\xx, \ff, \mu)$ by $\ff_1$. 

There exists many different maps $r_{\ff_1}: \xx/\ff_1\rightarrow \xx$ such that $s_{\ff_1}\circ r_{\ff_1}=I_{\xx_1}$. 
Each of them gives an identification of $\xx_1$ with a subspace of $\xx$, and we have a map $c=r_{\ff_1}\circ s_{\ff_1}: \xx\rightarrow \xx$ onto this subspace.

An \emph{extension} of a standard measured space $(\xx_1, \ff_1, \mu_1)$ is another standard measured space $(\xx, \ff, \mu)$ with a surjective measurable map $s:\xx\rightarrow \xx_1$ such that the push forward measure $s_*\mu$ of $\mu$ by $s$ is $\mu_1$.
\end{defi}

These notions are useful in the description of commutative von Neumann algebras.

\subsubsection{Commutative von Neumann algebras}\label{subsub:von}

The set of quantum observables of a system is described by a von Neumann algebra on $\hh$, i.e. a unital subalgebra of $\bb(\hh)$ which is stable by adjoint and closed for the strong topology.
 This article does not involve most of the subtleties of von Neumann algebra theory, since we are essentially interested in the simplest cases: the full algebra $\bb(\hh)$, the commutative von Neumann algebras and the tensor products of these. Let us recall a few facts about commutative von Neumann algebras:
 
 \begin{enumerate}
 \item For any standard probability space $(\xx, \ff, \mu)$ and any sub-$\sigma$-algebra $\ff_1\subset \ff$ the space $L^\infty(\xx,\ff_1, \mu)$ is identified with a commutative von Neumann algebra on $L^2(\xx, \ff, \mu)$ by $f\in L^\infty(\xx, \ff_1, \mu) \mapsto M_f$ (the operator of multiplication by $f$).
 \item Let $\cala\subset \bb(\hh)$ be a commutative von Neumann algebra. Then there exists a standard measured space $(\xx, \ff, \mu)$, a sub-$\sigma$-algebra $\ff_1\subset\ff$ and a unitary operator $\pi$ from $L^2(\xx, \ff, \mu)$ to $\hh$ such that $\cala=\pi^*~L^\infty(\xx, \ff_1, \mu)~\pi$.  Thus, if we consider the quotient $\xx_1=\xx/\ff_1$, then $\cala$ is isomorphic (as a $C^*$-algebra) to $L^\infty(\xx_1, \ff_1, \mu)$. The algebra $\cala$ is a maximal commutative von Neumann algebra if and only if $\ff_1=\ff$ (up to measure-zero sets). It is called \enquote{discrete} if $\xx_1$ is countable or finite, the $\sigma$-algebra $\ff_1$ is then called \enquote{coarse}\footnote{the term \enquote{discrete} $\sigma$-algebra often refers to the $\sigma$-algebra of all subsets of $\xx$, so we use coarse to avoid confusion}.
\item Let $\cala_1 \subset \cala_2$ be two commutative von Neumann algebras on a von Neumann algebra with two isomorphisms of $C^*$-algebras $\psi_1: \cala_1\rightarrow L^\infty(\xx_1, \ff_1, \mu_1)$ and $\psi_2:\cala_2 \rightarrow L^\infty(\xx_2, \ff_2, \mu_2)$. Then there exists a measurable map $\eta: \xx_2 \rightarrow \xx_1$ such that $\mu_1$ is absolutely continuous with respect to the push forward measure $\eta_*\mu_2$ and for any $f\in L^\infty(\xx_1, \ff_1, \mu_1)$ we have $\psi_2\circ \psi_1^{-1}(f)=f\circ \eta$.
 \end{enumerate}

See Takesaki's book \cite{TakesakiBook}, notably Theorem 8.21 and Lemma 8.22. An application of the last fact is that if $U$ is an isometry of $\hh$ with $U \cala_1 U^* \subset \cala_2$ then its action on $\cala_1$ can be implemented by some map $\eta$ between the underlying spaces $\xx_1$ and $\xx_2$.  

  A full study of a non-maximal commutative von Neumann algebra involves direct integrals of Hilbert spaces. We don't need it here, so let us just give a taste of it: if $\cala\simeq L^\infty(\xx_1, \ff_1, \mu)$ then we can decompose $\hh$ as $\int_{\xx_1}^\oplus \hh(x)d\mu(x)$ where $x\mapsto \hh(x)$ is a measurable field of Hilbert spaces, and the elements of $\cala$ are operators of the form $\int_\xx^\oplus f(x) I_{\hh(x)} d\mu(x)$.

\subsubsection{Quantum states}\label{subsub:states}

Let us describe states on several von Neumann algebras. The state of a quantum system with observables in a von Neumann algebra $\cala$ is modeled the following way: 
\begin{defi}
A (normal) state on a von Neumann algebra $\mm$ is a linear form $\rho$ on $\mm$ which is:
\begin{itemize}
\item positive, i.e. $\rho(A)\geq 0$ for any positive semi-definite operator $A \in \mm$.
\item normed, i.e. $\rho(I)=1$
\item normal, i.e. continuous for the $\sigma$-weak topology, or equivalently  for any sequence of mutually orthogonal projections $(P_n)_{n\in \N}\in \mm^\N$ we have $\sum_{n\in \N} \rho(P_n)=\rho(\sum_{n\in \N} P_n)$.
\end{itemize} 
The set of states on $\mm$ is written $\gs(\mm)$ or simply $\gs(\hh)$ if $\mm=\bb(\hh)$.
\end{defi}
Let us consider the two cases of maximal commutative von Neumann algebra and of the full von Neumann algebra:

{\bf States on $\cala=L^\infty(\xx,\ff, \mu)$: } any state $\rho$ on $\cala$ is of the form
\[
\rho(f)=\int_\xx f(x) p_\rho(x) d\mu(x)
\]
where $p$ is a positive function on $\xx$ with $\int_\xx p(x)d\mu(x)=1$. Hence the set $\gs(L^\infty(\xx, \mu))$ can be identified with the set of probability measures which are absolutely continuous with respect to $\mu$. 

{\bf States on $\bb(\hh)$:} any state $\rho$ on the full algebra is the form
\[
\rho(A)=\tr{AT_\rho}
\]
where $T_\rho$ is a positive semi-definite trace-class operator on $\hh$ with $\tr{T_\rho}=1$. By convention, we use the letter $\rho$ for both the state and  the corresponding trace-class operator, and we identify the set $\gs(\hh)$ with the set of positive semi-definite trace-class operators of trace 1.

{\bf States on $\bb(\hh)\otimes L^\infty(\xx, \ff, \mu)$: } This is the mix of the two previous situations: a state $\rho$ on $\bb(\hh)\otimes L^\infty(\xx, \ff, \mu)\subset \bb(\hh\otimes L^2(\xx,\ff, \mu))$ is of the form 
\[
\rho(A\otimes f)=\int_\xx \tr{A Q_\rho(x)}f(x) d\mu(x)
\]
where $x\mapsto Q_\rho(x)$ is a measurable function from $\xx$ to the set of positive semi-definite trace-class operators on $\hh$ such that $\int_\xx \tr{Q_\rho(x)}d\mu(x)=1$. We call $Q_\rho$ the density matrix function. 
\vspace{0.5cm}

\begin{rem} 
\begin{enumerate}
\item If $\mm_1\subset \mm_2$ are two von Neumann algebras, we may extends states on $\mm_1$ to states on $\mm_2$, and restrict states on $\mm_2$ to states on $\mm_1$. In particular, if $\mm_1=L^\infty(\xx, \mu)$ and $\mm_2=\bb(L^2(\xx, \mu))$, a state on $\mm_1$ can be extended in many different ways to a state on $\mm_2$, notably we can make it a pure state: take $f=\sqrt{p}$ where $p$ is the probability density of the state with respect to $\mu$, and consider the state $\ket{f}\bra{f}$ on $\mm_2$. We may also be tempted to take the multiplication operator $M_p$ as another extension, but this operator may not be trace class when $\xx$ is not coarse.

\item Another important example is the case of a bipartite system. If $\hh=\hh_A\otimes\hh_B$ and we are given a state $\rho$ on $\mm_2=\bb(\hh)$, its restriction to $\mm_1=\bb(\hh_A)\otimes \set{I_B}$ has for density matrix the partial trace of $\rho$ with respect to $B$, that is $\rho_B=\tra_B(\rho)$. 

\item With $\hh_B=L^2(\xx, \ff, \mu)$ and $\mm_1=\bb(\hh_A)\otimes L^\infty(\xx, \ff, \mu)$ and $\mm_2=\bb(\hh_A)\otimes \bb(\hh_B)$ the situation is more subtle. A state $\rho$ on $\mm_2$ can always be described by a kernel $(x, y)\mapsto K_\rho(x, y)$ from $\xx\times \xx$ to $\cals^1(\hh)$, such that for any function $f\in L^2(\xx, \hh_A)=\hh_A\otimes L^2(\xx, \ff, \mu)$ we have
\[
(\rho f)(x)=\int_{\xx} K_\rho(x, y) f(y)d\mu(y)
\]
(where we see $\rho$ as an operator on $\hh$). To describe the state $\rho_{\mm_1}$ on $\mm_1$ it seems natural to take for density matrix function $Q_{\rho_{\mm_1}}(x)=K_\rho(x,x)/\tr{K_\rho(x,x)}$.
Unless $K$ is continuous with respect to some metric, this requires technicalities since the diagonal $\{(x,x) | x \in \xx\}$ is possibly of measure zero in $(\xx\times \xx, \mu\otimes \mu)$. This issue can be solved with the help of the Lebesgue differentiation theorem, either by averaging on small rectangles (see Brislawn \cite{brislawn91}) or with the notion of virtual continuity (see Vershik et al. \cite{vershik16}).
\end{enumerate}

\end{rem}

\subsubsection{Measure of an observable}\label{subsub:discretemeas}

Let $A$ be a self-adjoint operator on $\hh$ (which is not necessarily bounded). 
Assume that the system is in the state $\rho$. The measurement of $A$ is mathematically described the following way: the von Neumann algebra $\cala$ generated by $A$ is commutative, so there exists a unitary operator $\pi : \hh \rightarrow L^2(\xx, \mu)$ for some standard measured space $(\xx,\ff, \mu)$ and a measurable function $g : \xx \rightarrow \R$ such that $\pi^* A \pi=M_g$. 
Let $\rho$ be the state on the system, then $\pi^*\rho \pi$ restricts to a state on $L^\infty(\xx, \mu)$, that is, a probability measure $\P_\rho$ on $\xx$ which is absolutely continuous with respect to $\mu$. This makes $(\xx, \P_\rho)$ a probability space.
The result of the measurement is then the random variable $\tilde{A}_\rho$ on $(\xx, \P_\rho)$ defined by the function $g$.

Note that for a commuting family of self-adjoint operators $(A_\alpha)_{\alpha \in I}$ we can consider their joint spectral theory: there exists a unitary operator $U : \hh\rightarrow L^2(\xx, \mu)$ with $U^*A_\alpha U=M_{g_\alpha}$ for a family of functions $(g_\alpha)_{\alpha \in I}$. Thus, we can consider the family of random variables $\tilde{A}_{\alpha, \rho}$ on the same probability space $(\xx, \P_\rho)$. However, if $A$ and $B$ are not commuting, there is no consistent way to consider jointly $\tilde{A}_\rho$ and $\tilde{B}_\rho$ as random variables on the same probability space.

Now, it is not always possible to describe the quantum mechanical state of $\rho$ after the exact measurement. In the case where $A$ has only pure point spectrum, it is possible and we do it as follows.

\begin{defi}[State after the measurement]
Let $A$ be an observable of the form
\[
A=\sum_{a \in sp(A)} a P_a
\]
where the $P_a$ are mutually orthogonal projections. Write $\cala$ the commutative von Neumann algebra generated by $A$, it is isomorphic to $L^\infty(sp(A),~ \sum_a \delta_a)$. We endow $sp(A)$ with the probability $\P_\rho(a)=\tr{\rho P_a}$.  The state after the measurement of $A$ is the radom variable $\rho_{|\cala}$ on $(sp(A), \P)$ defined by
\[
\rho_{|\cala}(a)=\frac{P_a\rho P_a}{\tr{P_a\rho}}\,.
\] 
We may also write $\rho_{S|\cala}:=\tra_B(\rho_{|\cala})$, and to shorten notation we will often use the variant calligraphy $\rrho$ for a random density matrix corresponding to a deterministic density matrix $\rho$.
\end{defi}

The action of \emph{not reading the result of the measurement} consists in discarding the random variable $\tilde{A}_\rho$ and replacing $\rho_{|\cala}(a)$ by its expectancy $\rho'=\E(\rho_{|\cala})$. The operator $\rho'=\sum_{a\in \sp(A)} P_a \rho P_a$ is in $\gs(\hh)$. It carries all the information which can be obtained without the knowledge of $\tilde{A}_\rho$, since $\E\big(\tr{\rho_{|\cala} B}\big)=\tr{\rho'B}$ for any observable $B\in \bb(\hh)$. 
\newline

If $A$ has singular spectrum it is no more possible to describe the state after the exact measurement as a random variable on $\gs(\hh)$. For example, if we measure the position observable $X$ on $L^2(\R, Leb)$ the state of the system after the measurement should correspond to the Dirac measure $\delta_{\tilde{X}_\rho}$ on the algebra $L^\infty(\R, Leb)$, but it is not possible since states on this algebra are absolutely continuous with respect to the Lebesgue measure. This is linked to the fact that every repeatable instrument is discrete, see Ozawa \cite{ozawa_conditional_1985}.

This is not really a physical problem since no real-life measurement is exact, hence we only measure discrete observables in real life. Though, it is always better to have an idealization of the measure of continuous observables, and we show a way to circumvent these issues below.

\subsubsection{The quantum state after the measurement of a continuous observable and indirect measurement}\label{subsub:after_meas}

 This part is not used before Section \ref{sec:measure}, but it introduces the notion of \enquote{pointer unitary operator} which helps the understanding of the OQBM. The idea to describe the state after the exact measurement is to restrict the state to some subalgebra of $\bb(\hh)$. The case we consider is the following:
\begin{itemize}
\item The space $\hh$ is the tensor product of two Hilbert spaces $\hh_G$ and $\hh_B$. 
\item We want to measure a family of mutually commuting operators $(B_\alpha)_{\alpha \in I}$ acting on $\hh_B$. Write $\cala$ the von Neumann algebra generated by the $B_\alpha$'s.
\item We are interested on the state after the measurement on $\bb(\hh_G)$ only. It will be written $\rho_{G|\cala}$.
\end{itemize}
We will see that concentrating on the state on $\hh_G$ and ignoring the full picture on $\hh_G\otimes \hh_B$ allows us to get a rigorous definition of $\rho_{G|\cala}$. This setup is geared to describe indirect measurement.

Since the $B_\alpha$ are commuting, we can identify $\hh_B$ with $L^2(\xx, \mu)$ for some standard measured space space $(\xx, \ff, \mu)$ such that there exists measurable functions $g_\alpha$ with $B_\alpha=M_{g_\alpha}$. We want to define $\rho_{G|\cala}$ as a random variable with values in $\gs(\hh_G)$ on the probability space generated by the random variables $g_\alpha=\tilde{B}_{\alpha,\rho}$.

\begin{theo}\label{theo:random_rho}
Let $\rho$ be a state on $\bb(\hh_G\otimes L^2(\xx,\mu))$. Then there exists a measurable map $\ssig$ from $\xx$ to $\cals^1(\hh_B)$ such that for any $f\in L^\infty(\xx, \mu)$ and for any observable $A\in \bb(\hh_G)$ we have
\[
\tr{\rho~A\otimes M_f}=\int_\xx \tr{\ssig(x) A}f(x)d\mu(x)~. 
\]
It is unique (up to a $\mu$-negligible set), and $\ssig(x)$ it positive semi-definite and satisfies 
\[
\tr{\ssig(x)}=\frac{d\P_\rho}{d\mu}(x)
\]
 for $\mu$-ae $x$. It is called the \emph{unnormalized state} on $\hh_G$ associated to $(\xx, \mu)$. Note that its trace depends on the measure $\mu$ which is chosen. 

Now, consider a sub-$\sigma$-algebra $\ff_1\subset \ff$ and let $\cala=L^\infty(\xx, \ff_1, \mu)$. Let $\P_\rho$ the probability measure induced by $\rho$ on $\xx$. Then there exists a random variable $\rho_{G|\cala}$ on $(\xx, \ff_1, \P_\rho)$ with values in $\gs(\hh_G)$ such that for any operator $A \in \bb(\hh_G)$ and any random variable $f \in L^\infty(\xx,\ff_1,\P_\rho)$ we have
\[
\tr{\rho A\otimes M_f}=\E_{\rho}\left( \tra\big(\rho_{G|\cala}\,A\big) f\right)
\]
where on the right $f$ is seen as a random variable. The random variable $\rho_{G|\cala}$ is unique up to a set of probability zero, and for $\P$-almost a $x\in \xx$ we have $\rho_{G|\cala}(x)=\ssig(x)/p_\rho(x)$.

We will often write $\rrho$ for $\rho_{G|\cala}$ when it does not cause confusion, and we write $\ssig=u_{\xx}(\rho)$ (or $u_{(\xx, \mu)}(\rho)$ when the measure needs to be precised). 
\end{theo}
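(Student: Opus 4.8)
The plan is to construct the unnormalised map $\ssig$ explicitly and then obtain $\rho_{G|\cala}$ from it by a conditional expectation. For existence in the first part I would diagonalise the trace-class operator $\rho=\sum_k\lambda_k\ket{\psi_k}\bra{\psi_k}$ on $\hh_G\otimes L^2(\xx,\mu)\cong L^2(\xx,\hh_G,\mu)$, with $\lambda_k\geq0$, $\sum_k\lambda_k=1$, and each $\psi_k$ an $\hh_G$-valued $L^2$ function $x\mapsto\psi_k(x)$. Setting
\[
\ssig(x):=\sum_k\lambda_k\,\ket{\psi_k(x)}\bra{\psi_k(x)},
\]
the bound $\int_\xx\sum_k\lambda_k\norm{\psi_k(x)}^2\,d\mu(x)=\sum_k\lambda_k\norm{\psi_k}^2=1$ shows, by monotone convergence, that the series converges in $\cals^1(\hh_G)$ for $\mu$-almost every $x$; each partial sum is measurable in $x$, so $\ssig$ is measurable. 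Expanding $\tr{\rho\,A\otimes M_f}=\sum_k\lambda_k\scal{\psi_k,(A\otimes M_f)\psi_k}$ and exchanging sum and integral (legitimate by absolute convergence, using $\norm{A}_\infty$, $\norm{f}_\infty$ and the above bound) yields the identity $\tr{\rho\,A\otimes M_f}=\int_\xx\tr{\ssig(x)A}f(x)\,d\mu(x)$. An equivalent, more conceptual route is to define the $\cals^1(\hh_G)$-valued measure $E\mapsto\tra_{L^2(\xx,\mu)}\big((I_G\otimes M_{\Un_E})\rho\big)$, check its countable additivity and absolute continuity with respect to $\mu$, and invoke the Radon--Nikodym property of $\cals^1(\hh_G)$ (it is the dual of the separable space $\bb^\infty_0(\hh_G)$) to produce the density $\ssig$ directly.

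Positivity and the trace formula are then read off immediately: for $A\geq0$ and measurable $E$ one has $A\otimes M_{\Un_E}\geq0$, so $\int_E\tr{\ssig(x)A}\,d\mu(x)=\tr{\rho\,(A\otimes M_{\Un_E})}\geq0$, whence $\ssig(x)\geq0$ for $\mu$-a.e.\ $x$; and taking $A=I_G$ gives $\int_E\tr{\ssig}\,d\mu=\P_\rho(E)$ for every $E$, i.e.\ $\tr{\ssig(x)}=\frac{d\P_\rho}{d\mu}(x)=:p_\rho(x)$ almost everywhere. For uniqueness, note first that $\norm{\ssig(x)}_1=p_\rho(x)$ is $\mu$-integrable, so for a fixed matrix unit $A=\ket{e_j}\bra{e_i}$ (with $(e_i)$ an orthonormal basis of $\hh_G$) the function $x\mapsto\tr{\ssig(x)A}=\scal{e_i,\ssig(x)e_j}$ lies in $L^1(\mu)$; the identity forces it to vanish $\mu$-a.e.\ whenever two candidate maps differ, and intersecting the countably many exceptional sets over all $(i,j)$ gives uniqueness up to a $\mu$-null set.

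For the conditional statement I would form the $\ff$-measurable fibre state $R(x):=\ssig(x)/p_\rho(x)$, which lies in $\gs(\hh_G)$ for $\P_\rho$-almost every $x$ and is Bochner-integrable because $\norm{R(x)}_1\equiv1$. I then set
\[
\rho_{G|\cala}:=\E_{\P_\rho}\!\big[R\mid\ff_1\big],
\]
the $\cals^1(\hh_G)$-valued conditional expectation under $\P_\rho$ (which exists for elements of $L^1(\xx,\cals^1(\hh_G),\P_\rho)$). It again takes values in $\gs(\hh_G)$: positivity passes through $\scal{u,\rho_{G|\cala}u}=\E_{\P_\rho}[\scal{u,Ru}\mid\ff_1]\geq0$ for every $u$, and $\tr{\rho_{G|\cala}}=\E_{\P_\rho}[\tr{R}\mid\ff_1]=1$. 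The defining identity follows from the tower property: for $\ff_1$-measurable $f$,
\begin{align*}
\E_\rho\big(\tr{\rho_{G|\cala}A}\,f\big)
&=\int_\xx\E_{\P_\rho}\!\big[\tr{RA}\mid\ff_1\big]\,f\,d\P_\rho\\
&=\int_\xx\tr{R(x)A}\,f(x)\,p_\rho(x)\,d\mu(x)
=\int_\xx\tr{\ssig(x)A}\,f(x)\,d\mu(x),
\end{align*}
which equals $\tr{\rho\,A\otimes M_f}$ by the first part. Uniqueness among $\ff_1$-measurable random variables follows exactly as in the previous paragraph, testing against matrix units and $\ff_1$-measurable $f$. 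When $\ff_1=\ff$ this construction returns $\rho_{G|\cala}=\ssig/p_\rho$ verbatim, and in general it is the $\ff_1$-conditional expectation of that fibre state, equivalently the normalisation of the unnormalised state computed on the quotient $\xx/\ff_1$.

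The analytic core is light; the delicate points deserving the most care are two null-set reconciliations. First, a function $f\in L^\infty(\xx,\ff_1,\P_\rho)$ need only be essentially bounded for $\P_\rho$, so before feeding it to the first part one truncates it on $\{p_\rho=0\}$; since $\ssig$ and $R$ vanish there, none of the integrals change. Second, one must ensure the vector-valued objects are legitimate: the Radon--Nikodym density in the conceptual route relies on the Radon--Nikodym property of $\cals^1(\hh_G)$, which holds because $\hh_G$ is separable, while the conditional expectation above is the standard one for Bochner-integrable functions. The single genuinely substantive step is verifying that $\rho_{G|\cala}$ lands in the \emph{states} $\gs(\hh_G)$ rather than merely in the self-adjoint trace-class operators; this is where the probabilistic (conditional expectation) and operator-theoretic (positivity, unit trace) structures must be matched, and is where I expect to spend the most effort.
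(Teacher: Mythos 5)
Your proof is correct and follows essentially the same route as the paper's: construct the unnormalized density $\ssig$ as the density matrix function of the restriction of $\rho$ to $\bb(\hh_G)\otimes L^\infty(\xx,\mu)$, normalize it to $R(x)=\ssig(x)/\tr{\ssig(x)}$ on the set where the trace is nonzero, and define $\rho_{G|\cala}$ as the $\cals^1(\hh_G)$-valued conditional expectation $\E_{\P_\rho}\big[R\,\big|\,\ff_1\big]$. The only difference is one of completeness: where the paper obtains the existence of $\ssig$ by citing the Riesz theorem (the structure of normal states on $\bb(\hh_G)\otimes L^\infty(\xx,\mu)$ stated earlier in the text), you prove it explicitly via the spectral decomposition $\rho=\sum_k\lambda_k\ket{\psi_k}\bra{\psi_k}$ (or alternatively the Radon--Nikodym property of $\cals^1(\hh_G)$), and you also carry out the tower-property verification, the uniqueness argument via matrix units, and the null-set bookkeeping that the paper dismisses as \enquote{easy} and \enquote{straightforward}.
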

Note that $u_{\xx}: \rho\mapsto \ssig$ is an isometry, contrarily to the map $\rho\mapsto \rrho$.

\begin{proof}
The function $\ssig$ is just the matrix density function of the restriction of $\rho$ to $\mm=\bb(\hh_G)\otimes \cala$, so its existence is just a consequence of the Riesz theorem. 

We have $\int_\xx f(x) \tr{\ssig(x)}d\mu(x)=\rho(M_f)=\E_\rho(f)$ so $\tr{\ssig(x)}=dP_\rho/d\mu$, and so $\tr{\ssig(x)}$ is nonzero $\P_\rho$-almost surely. We now define
\[
R(x)=\frac{\ssig(x)}{\tr{\ssig(x)}}
\]
on $x$ such that $\ssig(x)\neq 0$. It is a random variable on $(\xx, \P_\rho)$. Now we take the conditional expectation with respect to the $\sigma$-algebra $\ff$ generated by the $g_\alpha$ on $\xx$~:
\[
 \rho_{G|\cala}=\E\big(R\,\big|\,\ff\big).
\]
It is easy to show that it fits the requirement of the theorem.

The uniqueness is straightforward.
\end{proof}

\begin{rem}
\begin{enumerate}
\item With this approach, we clearly separate the quantum superposition, described by a density matrix, and the classical randomness on the probability space $(\xx,\ff, \P_\rho)$.
 It is frequent in quantum filtering theory to define $\rrho$ as a state on the commutant of $\cala$, which is in general bigger than $\bb(\hh_B)\otimes L^\infty(\xx)$, but this does not define $\rrho$ explicitly as a random variable on some probability space. 
 \item  Note that the state $\rho_{G|\cala}$ contains more information that $\rho_G=\tra_{\hh_B}(\rho)$ since $ \rho_G=\E_{p_\rho d\mu}(\rho_{G|\cala})$. Thus, we have three descriptions of the state of the system, containing less and less information: the full state $\rho$ on $\hh_G\otimes\hh_B$, the random state $\rho_{G|\cala}$ and the state $\rho_G$. We could define a fourth description between $\rho_{G|\cala}$ and $\rho_G$ by using the theory of direct integral: if $\cala$ is the set of decomposable operators on $\hh=\int_{\xx}^\oplus \hh(x)d\mu(x)$ we may consider a random state $\rrho(X)$ on the random Hilbert space $\hh(X)$. This level of precision is not needed for our purpose.
\end{enumerate}
\end{rem}

As an application of this theorem, we can model the indirect measurement of an observable; it is a framework often called von Neumann measurement of an observable in the literature (\cite{Belavkin94},\cite{BushLahtiMittelstaedt}, \cite{GoughNotes}). Let us describe the measurement of the observable $X$ on $\hh_B=L^2(\R, Leb)$. We couple the system with the pointer of some measurement device, described by $\hh_B=L^2(\R, Leb)$. we call $\hh_B$ the pointer space (think of it as the needle of a weighting scale or a seismometer). We move the pointer depending on the value of $X$, which has the effect of applying a unitary operator $Z$ on $\hh_B\otimes \hh_B=L^2(\R^2, Leb_2)$ which is defined by 
\[
(Zf)(x,a)=f(x, a-x)\,.
\]

Then, we perform the measurement of the pointer: we measure $A=M_{a\mapsto a}$ on $\hh_B$. The result is a random variable $\tilde{A}$ and the state after the measurement is $\rho_{G|\cala}$ (where $\cala$ is the algebra generated by $A$). Note that the noise is described by the initial state of the pointer. For example, if the system is in the pure state $f \in L^2(\R, Leb)$ and the pointer in the pure state $g \in L^2(\R, Leb)$, the probability density of $\tilde{A}$ is 
\[
p(a)=\int_\R \abs{f(x)}^2\abs{g(a-x)}^2 dx=\abs{f}^2*\abs{g}^2(a)
\]
and for any $a \in \R$ the state $\rho{G|\cala}(a)$ is the pure state $\ket{f_a}\bra{f_a}$ where
\[
f_a(x)=\frac{f(x)g(a-x)}{p(a)}.
\]
This really corresponds to a classical noisy measurement : if $X$ is a random variable with density $\abs{f}^2$ and $B$ a random variable with density $\abs{g}^2$ then $p$ is the density of $X+B$ and $\abs{f_a}^2$ is the density of $X$ conditioned to $X+B=a$. Note however that this situation is truly quantum: if we do not perform the measurement, the density matrix of the system after the evolution is
\[
\rho_G'=\E(\tilde{\rho}_G)=\tra_B(Z\,(\rho_G\otimes\rho_B)\,Z^*)
\]
which is of kernel
\[
K_{\rho_G'}(x,y)=f(x)\overline{f(y)}\int_\R g(a-x)\overline{g(a-y)}da=f(x)\overline{f(y)}C_g(x-y)\,.
\]
where 
\[
C_g(z)=\int_\R g(a-z)\overline{g(a)}da.
\]
It is no more a pure state. 

A more general version of this process is the following:
 
\begin{defi}\label{def:indirect_meas}
Let $\hh_G$ be a Hilbert space and $\cala$ a commutative von Neumann algebra on $\hh_G$, with an isometry $\calg: L^2(\xx, \mu)\rightarrow \hh_G$ implementing an isomorphism $\cala\simeq L^\infty(\xx, \mu)$. Consider an auxiliary space $\hh_B=L^2(\yy, \nu)$. A \emph{pointer map} is some measurable function ${\psi : \xx\times \yy\rightarrow \xx}$ such that for all $x\in \xx$ the map $\psi(x,\bullet)$ is a measure-preserving bijection on $\yy$. 
The pointer unitary operator $Z_\psi$ on $\hh_G\otimes \hh_B$ corresponding to $\psi$ is the operator defined as $Z_\psi=\calg \tilde{Z}_\psi \calg^*$ where $\tilde{Z}_\psi$ is the unitary on $L^2(\xx\times \yy, \mu\times \nu)$ defined by
\[
( \tilde{Z}_\psi f)(x,y)=f(x,~\psi(x, y)~)\,.
\]
The indirect measurement corresponding to $\psi$ is the measurement of the algebra $L^\infty(\yy, \nu)$ on $\hh_B$, resulting in the random value $Y\in \yy$ of the pointer and the random state $\rho_{G|Y}\in \gs(\hh_B)$. 
\end{defi}

This is a little more restrictive than the processes considered by Belavkin \cite{Belavkin94}, in which the unitary operator $Z$ (written $S$ by Belavkin) is only assumed to commute with elements of $L^\infty(\xx, \mu)\otimes\{I_B\}$. This restrictive definition has the advantage of making it more explicit.

 This definition include the perfect measurement of a discrete observable $A$: take $\xx=\yy=sp(A)$ with $\mu$ the counting measure and fix an initial state $a_0 \in \yy$, choose $\rho_B=\ket{\delta_{a_0}}\bra{\delta_{a_0}}$ and any pointer function $\psi$ such that $\psi(a,a_0)=a$.

\subsection{ Repeated measurement process and the trajectories of the OQBM}

In this section we introduce repeated interactions and repeated measurement processes, and we show how the discrete OQBM can be seen as an extension of these measurement. We use this picture to show the convergence of the quantum trajectories of the discrete OQBM, thanks to a theorem of Pellegrini \cite{PellegriniDiffusive08}.

\subsubsection{The repeated measurement process}

The repeated measurement model relates to many experimental protocols, notably with the experiments of Haroche's team. It describes a process on discrete time, and we are interested in its continuous-time limit.

 We consider a Hilbert space $\hh_G$ describing a system of interest in the state $\rho_0\in \gs(\hh_G)$, and a space modeling a probe $\hh_p$ in the fixed pure state $\rho_p=\ket{0}\bra{0}$. In this article the probe space is always $\hh_p=\C^2$. Make it evolve according to some unitary operator $V$ on $\hh_G\otimes \hh_p$ and measure some observable $A\in \bb_{sa}(\hh_p)$. Then take a copy of $\hh_p$, also in the state $\rho_p=\ket{0}\bra{0}$, and repeat this procedure again and again. What we obtain is a stochastic process $(\rrho_n)_{n\in \N}$ where $\rrho_n\in \gs(\hh_G)$ is the state of the system after the $n$-th measurement, together with another process $(\Delta_n)_{n\in \N}$ where $\Delta_n\in \R$ is the result of the $n+1$-th measurement of $A$. Since the probe space $\hh_p$ is constantly renewed, $(\rrho_n, \Delta_n)_{n\in \N}$ is a Markov process. We can also note that for any $n$ the state $\rrho_n$ deterministically depends in the sequence $(D_k)_{k<n}$, since if $P_d$ is the spectral projection for the eigenvalue $d$ of $A$, we have
 \[
 \rrho_{n+1}=\frac{\tra_B\left(P_{\Delta_n} V (\rrho_n\otimes \rho_p) V^* P_{\Delta_n}\right)}{\tr{P_{\Delta_n} V (\rrho_n\otimes \rho_p) V^* P_{\Delta_n}}}~.
 \]

It is also interesting to study the evolution when the result of the measurement is discarded, that is, the evolution of $\rho_n=\E(\rrho_n)$. We have
\[
\rho_{n+1}=\tra_B\left(V (\rrho_n\otimes \rho_p) V^*\right)~
\]
The evolution of $\rho_n$ is called a quantum dynamical system, and its description as the interaction of the system with a bath is called a repeated interactions model \cite{AttalPautrat}.

 \subsubsection{The Belavkin diffusive Equation and the Lindblad Equation}\label{subsub:belavkinlimit}
 
We want to study the continuous time limit of this type of process. Thus, we will consider that each step of the process lasts a time $\tau>0$ and we make $\tau$ go to zero with suitable normalization. The case we consider is the following: 
\begin{enumerate}
\item We take $\hh_p=\C^2$ with $\rho_p=\ket{0}\bra{0}=\begin{pmatrix} 1 & 0 \\ 0 & 0 \end{pmatrix}$.\\
\item The unitary evolution $V_\tau$ on $\hh_G\otimes \hh_p$ is described as follows:  fix a self-adjoint bounded operator $H\in \bb(\hh_G)$ and a bounded operator $N\in \bb(\hh_G)$ and take
\begin{align}\label{eq:vtau}
V_\tau&=\exp\left(-i\tau H +\sqrt{\tau}\begin{pmatrix}
0 & N^* \\
-N & 0
\end{pmatrix}\right)\\
&=I+\sqrt{\tau}\begin{pmatrix}
0 & N^* \\
-N & 0
\end{pmatrix}
+\tau \left(-iH-\frac{1}{2}\begin{pmatrix}
N^* N & 0 \\
0 & N N^*
\end{pmatrix} \right)+O(\tau^{3/2})~.
\end{align}
\item We measure the observable $A=\begin{pmatrix} 0 & 1 \\ 1 & 0 \end{pmatrix}$.
\item The process of states obtained is written $(\rrho_{\tau,n})_{n\in \N}$, and the result of the $(n+1)$-th measurement is written $\Delta_{\tau, n}\in \set{-1, +1}$. We also define 
\[
W_{\tau, n}=\sqrt{\tau} \sum_{k=0}^{n-1} \Delta_{\tau,k}~.
\]
\end{enumerate}

The normalization in $\sqrt{\tau}$ to define $W_{\tau, n}$ corresponds to a diffusive limit in physics, where the time scale $\tau$ is proportional to the square of the space scale. {\bf In the rest of the article, we will write $\delta=\sqrt{\tau}$ the space scale.}

In this setup, the eigenvectors for the eigenvalues $\pm 1$ of $A$ are 
\begin{align}\label{eq:pm}
\ket{\pm}=\frac{1}{\sqrt{2}}\left(\ket{0}\pm\ket{1}\right)~,
\end{align}
 and we have
\[
\rrho_{\tau, n+1}=\frac{K_{\tau,\Delta_n}\rrho_{\tau, n} K_{\tau,\Delta_n}^*}{\tr{K_{\tau,\Delta_n}\rrho_{\tau, n} K_{\tau,\Delta_n}^*}}
\]
where 
\begin{align}\label{eq:BfromRM}
K_{\tau,\pm 1}&=\frac{1}{\sqrt{2}}\left(I\pm\delta N+\tau\left(-iH-\frac{1}{2} N^*N\right)\right)+O(\tau^{3/2})~.
\end{align}

The following theorem describes the limit in distribution of this process as $\tau\rightarrow 0$. It was proved by Pellegrini.

\begin{theo}[Theorem 8 of \cite{PellegriniDiffusive08}]\label{theo:pellegriniBelavkin}
Assume that $\hh_G$ is finite-dimensional. Fix some $T>0$. Then the process $(\rrho_{\tau, \ent{t/\tau}}, W_{\tau, \ent{t/\tau}})_{0\leq t \leq T}$ described above converges in distribution as $\tau\rightarrow 0$ (in the space of bounded functions with the uniform norm) to a process $(\rrho_t, W_t)_{0\leq t \leq T}$ satisfying the following stochastic equation (in the It\^o sense):
\begin{align}\label{eq:belavkin}
\left\{\begin{array}{ll}
d\rrho_t&=\call(\rrho_t)dt+(N\rrho_t+\rrho_t N^*-\rrho_t\Tau(\rrho_t))dB_t \\
dW_t&=\Tau(\rrho_t)dt+dB_t\\
\end{array}\right.
\end{align}
where $(B_t)$ is a standard Wiener process, $\call$ is the super-operator defined by
\begin{align}\label{eq:call}
\call(\rho)=-i[H, \rho]+N\rho N^*-\frac{1}{2}\left(N^*N\rho+\rho N^*N\right)~
\end{align}
and
\begin{align*}
\Tau(\rho)&=\tr{(N+N^*)\rho}~.
\end{align*}
\end{theo}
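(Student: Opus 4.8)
The plan is to regard $(\rrho_{\tau,n},W_{\tau,n})_{n\in\N}$ as a time-homogeneous Markov chain on the compact set $\gs(\hh_G)\times\R$ and to obtain the limit by the generator / martingale-problem method for diffusion approximations. At each step there are only two outcomes $\Delta_{\tau,n}=\pm1$, occurring with probabilities $p_{\tau,\pm}(\rrho)=\tr{K_{\tau,\pm1}\rrho K_{\tau,\pm1}^*}$; conditionally on the outcome, $W$ jumps by $\pm\delta$ and $\rrho$ is sent to the normalized post-measurement state $K_{\tau,\pm1}\rrho K_{\tau,\pm1}^*/p_{\tau,\pm}(\rrho)$. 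First I would Taylor expand everything in $\delta=\sqrt\tau$ using \eqref{eq:BfromRM}. Since $\tr{\call(\rrho)}=0$, the outcome law is
\[
p_{\tau,\pm}(\rrho)=\tfrac12\big(1\pm\delta\,\Tau(\rrho)\big)+O(\tau),
\]
so the signs $\Delta_{\tau,n}$ are biased at order $\delta$; and expanding the normalized state and subtracting $\rrho$ gives a fluctuation $\pm\delta\,G(\rrho)+O(\tau)$, where $G(\rrho)=N\rrho+\rrho N^*-\rrho\,\Tau(\rrho)$. The nonlinear normalization $1/p_{\tau,\pm}(\rrho)$ is what produces the $-\rrho\,\Tau(\rrho)$ piece of $G$ and, at order $\tau=\delta^2$, recombines with the $N\rrho N^*$ term to yield exactly $\call(\rrho)$.

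Next I would assemble these expansions into the discrete pregenerator: for a smooth test function $\phi$ on $\gs(\hh_G)\times\R$,
\[
(A_\tau\phi)(\rrho,w)=\frac1\tau\,\E\big[\phi(\rrho_{\tau,n+1},W_{\tau,n+1})-\phi(\rrho,w)\,\big|\,\rrho_{\tau,n}=\rrho,\,W_{\tau,n}=w\big].
\]
Because the order-$\delta$ fluctuations of $\rrho$ and of $W$ are both equal to the single sign $\Delta_{\tau,n}$ times a deterministic coefficient, their product survives at order $\delta^2=\tau$ and contributes a cross term; collecting all $O(\tau)$ contributions gives the limiting generator
\[
A\phi=\call(\rrho)\cdot\nabla_\rrho\phi+\Tau(\rrho)\,\partial_w\phi+\tfrac12\!\sum_{i,j}G_iG_j\,\partial_{\rrho_i}\partial_{\rrho_j}\phi+\sum_i G_i\,\partial_{\rrho_i}\partial_w\phi+\tfrac12\partial_w^2\phi,
\]
where the indices run over a real coordinate basis of the Hermitian matrices on $\hh_G$. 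This is precisely the generator of \eqref{eq:belavkin} with a \emph{single} Brownian motion $B_t$ driving both components, since there $d\langle\rrho_i,\rrho_j\rangle=G_iG_j\,dt$, $d\langle\rrho_i,W\rangle=G_i\,dt$ and $d\langle W\rangle=dt$. I would then verify that $A_\tau\phi\to A\phi$ uniformly, controlling the $O(\tau^{3/2})$ remainders uniformly over the compact $\gs(\hh_G)$.

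Finally I would establish tightness and identify the limit. Since $\gs(\hh_G)$ is compact and each increment of $W$ is bounded by $\delta$, tightness of the rescaled processes $(\rrho_{\tau,\ent{t/\tau}},W_{\tau,\ent{t/\tau}})_{0\le t\le T}$ in the Skorokhod space $D([0,T])$ follows from an Aldous--Rebolledo or Kolmogorov moment criterion. Any limit point solves the martingale problem associated to $A$; as $\call$, $\Tau$ and $G$ are smooth with bounded derivatives on the compact $\gs(\hh_G)$, the coefficients of \eqref{eq:belavkin} are Lipschitz, the martingale problem is well-posed, and its unique solution coincides with the pathwise-unique solution of \eqref{eq:belavkin}. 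Because the limit is continuous, $J_1$-convergence in $D([0,T])$ upgrades to uniform convergence, yielding the stated convergence in distribution, and uniqueness of the limit forces the whole family to converge.

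The hard part will be the second-order bookkeeping. One must check that the order-$\delta$ fluctuations of $\rrho$ and of $W$ are perfectly correlated, being affine functions of the same Bernoulli variable $\Delta_{\tau,n}$, so that in the limit a single Brownian motion drives both equations with the cross-variation above rather than two independent noises; and one must extract the correct It\^o drift from the expansion of the nonlinear normalization, verifying that the $\delta$-order and $\delta^2$-order contributions recombine into $\call(\rrho)$ with no spurious drift. Keeping every remainder uniform in $\rrho$, so that it vanishes after division by $\tau$, is exactly where finite-dimensionality of $\hh_G$ (hence compactness of $\gs(\hh_G)$ and boundedness of all derivatives of $G$) is used.
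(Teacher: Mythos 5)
Your proposal is correct, but it takes a genuinely different route from the one the paper relies on: the paper does not prove this theorem at all, it imports it as Theorem 8 of Pellegrini's article and records only that the proof there uses classical stochastic-process methods, centrally the Kurtz--Protter theorem, and that it tolerates $o(\tau)$ perturbations of the Krauss operators (the feature actually exploited later in Proposition \ref{prop:trajectoryOQBM}). Pellegrini's argument writes the discrete trajectory as a stochastic difference equation driven by the centered measurement record, shows that this driving martingale converges to a Brownian motion, and transfers the convergence to the solutions via Kurtz--Protter, existence and uniqueness for \eqref{eq:belavkin} being established separately; this has the advantage of identifying the limiting noise $B_t$ concretely as the limit of the centered outcomes, and it extends to jump and mixed regimes. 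Your route is the Ethier--Kurtz/Stroock--Varadhan diffusion approximation: your one-step expansions are correct (in particular $(p_{\tau,+}-p_{\tau,-})\,\delta G(\rrho)=\tau\,\Tau(\rrho)G(\rrho)+O(\tau^{3/2})$ exactly cancels the $-\tau\,\Tau(\rrho)G(\rrho)$ term from the normalization, leaving the drift $\call(\rrho)$, and the perfect correlation of the order-$\delta$ fluctuations forces a single driving Brownian motion), and compactness of $\gs(\hh_G)$ makes the remainders uniform. Two points need slightly more care in a full write-up: uniqueness for the limiting martingale problem cannot be quoted from ellipticity-based results since the diffusion matrix is rank one, so you must represent any martingale-problem solution as a weak solution of \eqref{eq:belavkin} (possible because the rank-one square root $(G,1)$ is explicit and Lipschitz) and conclude by pathwise uniqueness plus Yamada--Watanabe; and the coefficients are only defined on $\gs(\hh_G)\times\R$, so you should note that all discrete processes, hence all limit points, take values in the closed set $\gs(\hh_G)$, where uniqueness is needed. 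Neither is a gap in the strategy, and your expansion-based argument enjoys the same robustness to $O(\tau^{3/2})$ perturbations that the paper needs.
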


This theorem was proved with methods of classical stochastic process, notably the Kurtz-Protter's theorem. Importantly, the proof is still valid with $K_{\tau, \pm}$ replaced by $K_{\tau, \pm}+o(\tau)$ with some rest $o(\tau)$ uniformly small.  
\vspace{0.5cm}

If we discard the probes before measuring it, the state of the system is the deterministic density matrix 
\[
\rho_{G,\tau, n}=\E(\rrho_{\tau, n})~.
\]
It follows a quantum dynamical semigroup, with $\rho_{G,\tau, n+1}=\Lambda{G,\tau}(\rho_{G,\tau, n})$ where 
\begin{align}\label{eq:defi_lambdaG}
\Lambda_{G,\tau}(\rho)&=K_{\tau,+1}\rho K_{\tau,+1}^*+K_{\tau,-1}\rho K_{\tau,-1}^*\\
&=\rho+\tau\call(\rho)+O(\tau^{3/2})
\end{align}
where $\call$ is defined in Equation \eqref{eq:call}.
Thus $\rho_{G,\tau, \ent{t/\tau}}$ converges to some limit $\rho_{G,t}$ satisfying the so-called Lindblad Equation
\[
\frac{d}{dt}\rho_{G,t}=\call(\rho_{G,t})~.
\]
The family of super-operators $\Lambda^t=e^{t\call}$ is called a Lindblad semigroup. Note that $\rho_t=\E(\rrho_t)$, which can be seen both by the above convergence or by using the fact that the term in $dB_t$ in Equation \eqref{eq:belavkin} is of expectancy zero.

\subsubsection{A dilation of the discrete OQBM}\label{subsub:discreteOQBM}

Let us consider a gyroscope space $\hh_G$ and the position space $\hh_{\tau, z}=l^2(\delta \Z)$ (for $\delta=\sqrt{\tau}>0$), and fix some bounded operators $N$ and $H$ on $\hh_G$ with $H$ self-adjoint, and consider the operators $B_{\tau, +}$ and $B_{\tau, -}$ defined as in the introduction
(with $M=0$ since the effects of $M$ are negligible). We consider the two channels defining the OQBM: the one which corresponds to the OQW definition:
\[
\Lambda_\tau(\rho)=\sum_{x\in \delta\Z} \left(B_{\tau, -} \otimes \ket{x-\delta}\bra{x} \right)\rho\left(B_{\tau,-}^*\otimes \ket{x}\bra{x-\delta}\right) +\left(B_{\tau, +} \otimes \ket{x+\delta}\bra{x} \right)\rho\left(B_{\tau, +}^*\otimes \ket{x}\bra{x+\delta}\right)
\]
and the one with only two Krauss operators:
\begin{align*}
\tilde{\Lambda}_\tau(\rho)=\left(B_{\tau, -}\otimes D_\tau^*\right) \rho \left(B_{\tau, -}^*\otimes D_\tau\right)+\left(B_{\tau, +}\otimes D_\tau\right) \rho \left(B_{\tau,+}^*\otimes D_\tau^*\right)
\end{align*}
where $D_\tau$ is the right translation of distance $\delta$ on $\hh_{\tau, z}$. It is easyly checked that $\Lambda_\tau^*$ and $\tilde{\Lambda}_\tau^*$ coincide on the algebra $\bb(\hh_G)\otimes L^\infty(\delta \Z)$, and we concentrate on the study of $\tilde{\Lambda}_\tau$ from now on. To make the link with the repeated measurement process and the Belavkin Equation, we define one Stinespring dilation of $\tilde{\Lambda}_\tau$.

\begin{lem}\label{lem:dilation}
We have 
\begin{align}\label{eq:dilationOQBM}
\tilde{\Lambda}_\tau(\rho)=\tra_{\hh_p}\left(R_\tau V_\tau \left(\rho\otimes\ket{0}\bra{0}\right) V_\tau^*R_\tau^* \right)+O(\tau \sqrt{\tau})
\end{align}
where $\hh_p=\C^2$ and $V_\tau$ is the unitary operator on $\hh_G\otimes \hh_p$ defined by Equation \eqref{eq:vtau} and $R_\tau$ is the operator on $\hh_{\tau, z}\otimes \hh_p$ defined by
\begin{align}\label{eq:rtau}
R_\tau=D_\tau \otimes \ket{+}\bra{+}+D_\tau^*\otimes \ket{-}\bra{-}
\end{align}
where $\ket{+}=\frac{1}{\sqrt{2}}\left(\ket{0}+\ket{1}\right)$ and $\ket{-}=\frac{1}{\sqrt{2}}\left(\ket{0}-\ket{1}\right)$.
\end{lem}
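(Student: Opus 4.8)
The plan is to compute the right-hand side of \eqref{eq:dilationOQBM} as a completely positive map in Kraus form and to identify its two Kraus operators, up to $O(\tau^{3/2})$, with the two operators defining $\tilde{\Lambda}_\tau$. The key structural observation is that $R_\tau$ is \emph{diagonal} in the basis $\ket{+},\ket{-}$ of the probe $\hh_p$, so I would evaluate the partial trace $\tra_{\hh_p}$ precisely in that basis. For any operator $W$ on $\hh_G\otimes\hh_{\tau,z}\otimes\hh_p$ and any orthonormal basis $\{\ket{s}\}$ of $\hh_p$ one has
\[
\tra_{\hh_p}\!\left(W(\rho\otimes\ket{0}\bra{0})W^*\right)=\sum_{s\in\{+,-\}} L_s\,\rho\,L_s^*,\qquad L_s=(I\otimes\bra{s})\,W\,(I\otimes\ket{0}),
\]
where $I=I_{\hh_G\otimes\hh_{\tau,z}}$. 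Taking $W=R_\tau V_\tau$, the right-hand side of \eqref{eq:dilationOQBM} is exactly $L_+\rho L_+^*+L_-\rho L_-^*$, and everything reduces to evaluating $L_+$ and $L_-$.

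Next I would make these explicit. Writing $V_\tau=\sum_{i,j\in\{0,1\}}V_{ij}\otimes\ket{i}\bra{j}$ with the blocks $V_{ij}$ read off from the expansion in \eqref{eq:vtau}, one has $V_\tau(I\otimes\ket{0})=V_{00}\otimes\ket{0}+V_{10}\otimes\ket{1}$, whose components along $\ket{\pm}=\tfrac{1}{\sqrt2}(\ket{0}\pm\ket{1})$ are $\tfrac{1}{\sqrt2}(V_{00}\pm V_{10})$. Since $(I_z\otimes\bra{\pm})R_\tau=D_\tau^{(\pm)}\otimes\bra{\pm}$ with $D_\tau^{(+)}=D_\tau$ and $D_\tau^{(-)}=D_\tau^*$, this yields $L_\pm=\tfrac{1}{\sqrt2}(V_{00}\pm V_{10})\otimes D_\tau^{(\pm)}$. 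Expanding \eqref{eq:vtau} to order $\tau$ gives $\tfrac{1}{\sqrt2}(V_{00}\pm V_{10})=B_{\tau,\pm}+O(\tau^{3/2})$ in operator norm — this is precisely the computation that produced \eqref{eq:BfromRM} from the repeated measurement — so up to a uniform $O(\tau^{3/2})$ the operators $L_+,L_-$ are the two Kraus operators $B_{\tau,+}\otimes D_\tau$ and $B_{\tau,-}\otimes D_\tau^*$ of $\tilde{\Lambda}_\tau$. The one point demanding real care here is the bookkeeping of the shift direction: one must check that the branch carrying $B_{\tau,+}$ is the one coupled to $D_\tau$ (and $B_{\tau,-}$ to $D_\tau^*$), which is exactly where the off-diagonal sign of the generator of $V_\tau$ enters and must be tracked.

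Finally I would convert this operator-norm control on the Kraus operators into the trace-norm statement carried by the remainder $O(\tau\sqrt{\tau})$ in \eqref{eq:dilationOQBM}. Writing each $L_s=K_s+E_s$, where $K_s$ is the matching Kraus operator of $\tilde{\Lambda}_\tau$, one has $\norm{E_s}_\infty=O(\tau^{3/2})$ uniformly while $\norm{K_s}_\infty,\norm{L_s}_\infty=O(1)$ (here $D_\tau$ and $D_\tau^*$ are unitary, hence of norm one, and $N,H$ are bounded). The difference of the two channels is then
\[
\sum_{s}\big(E_s\rho K_s^*+K_s\rho E_s^*+E_s\rho E_s^*\big),
\]
whose $\cals^1(\hh_G\otimes\hh_{\tau,z})$-norm is bounded by $C\,\tau^{3/2}\norm{\rho}_1$ for a constant $C$ independent of $\rho$, giving the claimed $O(\tau\sqrt{\tau})$ uniformly on bounded sets of states (as needed by the later convergence arguments). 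I expect the genuine, if still routine, obstacle to lie precisely in this last step — turning a uniform operator-norm estimate on the Kraus operators into a uniform $\cals^1$-estimate on the channel — together with the sign/direction bookkeeping of the previous paragraph; the algebra itself is short.
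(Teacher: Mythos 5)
Your proof is correct and is essentially the paper's own argument: the paper disposes of this lemma in one line, calling it a straightforward computation resting on the equality $K_{\tau,\pm 1}=B_{\tau,\pm}$ from \eqref{eq:BfromRM}, which is exactly the identification $L_{\pm}=\tfrac{1}{\sqrt{2}}\left(V_{00}\pm V_{10}\right)\otimes D_\tau^{(\pm)}=B_{\tau,\pm}\otimes D_\tau^{(\pm)}+O(\tau^{3/2})$ that you carry out via the Kraus decomposition in the $\ket{\pm}$ basis. Your extra care with the shift-direction bookkeeping and with converting the uniform operator-norm error on the Kraus operators into a uniform $\cals^1$ bound on the channel simply makes explicit what the paper leaves implicit.
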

This lemma proved by a straighforward computation, and a consequence of the equality $L_{\tau,\pm 1}=B_{\tau, \pm}$ where $L_{\tau, \pm 1}$ is defined in the repeated measurement procedure \eqref{eq:BfromRM}.

The notation $O(\tau \sqrt{\tau})$ is meant uniform in $\rho$, in the sense that there exists a constant $C$ such that for all $\tau >0$ small enough and all $\rho\in \gs(\hh_G\otimes \hh_{\tau,z})$ we have $\norm{O(\tau \sqrt{\tau})}\leq \tau \sqrt{\tau} $.

As a consequence, we see that $\tilde{\Lambda}_\tau$ is an extension of the quantum dynamics $\Lambda_{G, \tau}$ on $\hh_G$ generated by repeated interactions: 
\begin{cor}
For any initial state $\rho\in \gs(\hh_G\otimes \hh_{\tau,z}$ we have 
\[
\tra_{\hh_{\tau,z}}\left(\tilde{\Lambda}_\tau(\rho) \right)=\Lambda_{G, \tau}\left(\tra_{\hh_{\tau,z}}(\rho)\right)
\]
(where $\Lambda_{G, \tau}$ is defined by the repeated measurement process, Equation \eqref{eq:defi_lambdaG}).
\end{cor}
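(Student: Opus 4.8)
The plan is a direct computation that exploits the fact that the position‑space factors of $\tilde{\Lambda}_\tau$ are unitaries. First I would record that the right translation $D_\tau=\sum_{x\in\delta\Z}\ket{x+\delta}\bra{x}$ on $\hh_{\tau,z}=l^2(\delta\Z)$ is unitary, so that $D_\tau D_\tau^*=D_\tau^*D_\tau=I$. This is the only structural property of the position space I will use.

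Next I would factor each summand of $\tilde{\Lambda}_\tau(\rho)$ into a gyroscope part and a position part, writing for instance $B_{\tau,-}\otimes D_\tau^*=(B_{\tau,-}\otimes I)(I\otimes D_\tau^*)$ and symmetrically on the right, and likewise for the $+$ term. I would then use that the partial trace $\tra_{\hh_{\tau,z}}$ passes through operators of the form $A\otimes I$ acting on $\hh_G$, i.e. $\tra_{\hh_{\tau,z}}\big((A\otimes I)\,\sigma\,(A'\otimes I)\big)=A\,\tra_{\hh_{\tau,z}}(\sigma)\,A'$, to pull $B_{\tau,\pm}$ out of the trace. Combining this with cyclicity of the partial trace in the traced factor, namely $\tra_{\hh_{\tau,z}}\big((I\otimes C)\,\sigma\,(I\otimes C')\big)=\tra_{\hh_{\tau,z}}\big(\sigma\,(I\otimes C'C)\big)$, the position factors $D_\tau^{\pm}$ cancel by unitarity and one is left with
\[
\tra_{\hh_{\tau,z}}\big(\tilde{\Lambda}_\tau(\rho)\big)=B_{\tau,+}\,\sigma\,B_{\tau,+}^*+B_{\tau,-}\,\sigma\,B_{\tau,-}^*,\qquad \sigma:=\tra_{\hh_{\tau,z}}(\rho).
\]

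Finally I would invoke the identity $B_{\tau,\pm}=K_{\tau,\pm 1}$ recorded in Lemma \ref{lem:dilation} (here $M=0$) to recognise the right‑hand side as $\Lambda_{G,\tau}(\sigma)$ in the sense of \eqref{eq:defi_lambdaG}, which is exactly the asserted equality $\tra_{\hh_{\tau,z}}(\tilde{\Lambda}_\tau(\rho))=\Lambda_{G,\tau}(\tra_{\hh_{\tau,z}}(\rho))$.

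There is no genuine obstacle here; the computation is routine once the two partial‑trace manipulations are set up. The only points that deserve a line of justification are those two manipulations — passing $A\otimes I$ through $\tra_{\hh_{\tau,z}}$, and the cyclicity with respect to operators supported on the traced factor — both of which follow from the defining property of the partial trace for trace‑class $\rho$ and bounded $B_{\tau,\pm},D_\tau$. I would also stress that the equality is \emph{exact}, not merely valid up to $O(\tau^{3/2})$, precisely because $\tilde{\Lambda}_\tau$ and $\Lambda_{G,\tau}$ are built from literally the same Kraus operators $B_{\tau,\pm}$.
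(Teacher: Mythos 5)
Your proof is correct, and it is precisely the \enquote{direct computation} that the paper's own proof alludes to and then declines to carry out: the paper instead derives the corollary from the dilation of Lemma \ref{lem:dilation}. Concretely, the paper tests against an arbitrary observable $E\in\bb(\hh_G)$, substitutes the Stinespring form $\tra_{\hh_p}\left(R_\tau V_\tau(\rho\otimes\ket{0}\bra{0})V_\tau^* R_\tau^*\right)$ for $\tilde{\Lambda}_\tau(\rho)$, and uses the fact that $R_\tau$ acts only on $\hh_{\tau,z}\otimes\hh_p$ --- hence commutes with $E$ and cancels under the full trace by cyclicity --- to reduce to $\tr{V_\tau(\rho\otimes\ket{0}\bra{0})V_\tau^*\,E}$, i.e.\ to the repeated-interaction dynamics $\Lambda_{G,\tau}$. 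You stay instead at the level of the Kraus form of $\tilde{\Lambda}_\tau$, pulling $B_{\tau,\pm}\otimes I$ through the partial trace and eliminating $I\otimes D_\tau^{\pm 1}$ by cyclicity within the traced factor together with unitarity of $D_\tau$. Both arguments are short and valid; yours is more self-contained (no probe space $\hh_p$, no appeal to Lemma \ref{lem:dilation}), while the paper's reuses machinery it has already built and emphasizes the conceptual point that the discrete OQBM extends the repeated-interaction process on $\hh_G$. One caveat applies equally to both routes, and your closing remark correctly identifies it: the equality is exact only if the operators $B_{\tau,\pm}$ of \eqref{eq:defi_B} (with $M=0$), which are a priori fixed only modulo $O(\tau^{3/2})$, are chosen to coincide exactly with the repeated-measurement operators $K_{\tau,\pm 1}$ of \eqref{eq:BfromRM}; otherwise both your final identification and the paper's use of the dilation hold only up to the $O(\tau\sqrt{\tau})$ error that appears explicitly in the statement of Lemma \ref{lem:dilation}.
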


\begin{proof} This can be proved by direct computation, but it is also a consequence of Lemma \ref{lem:dilation}. Indeed, $R_\tau$ does not act on $\hh_G$ so it commutes with any operator $E$ on $\hh_G$ and we have
\begin{align*}
\tr{\tilde{\Lambda}_\tau(\rho) E}&= \tr{ R_\tau V_\tau (\rho\otimes \ket{0}\bra{0}) V_\tau^* R_\tau^* E} \\
&=\tr{V_\tau (\rho \otimes \ket{0}\bra{0} )V_\tau^* E}  \\
&=\tr{\Lambda_{G,\tau}(\rho) E}~.
\end{align*}
which proves the corollary.
\end{proof}

\subsubsection{Convergence of the quantum trajectories}

The following theorem is a direct consequence of Pellegrini's theorem \ref{theo:pellegriniBelavkin} and of the picture of the discrete OQBM as an extension of the repeated interactions process:

\begin{prop}\label{prop:trajectoryOQBM}
Let $(\rrho_0, X_0)\in \gs(\hh_G)\times \R)$ be a random variable. For any $\tau>0$ let us consider the process $(\rrho_{\tau,n}, X_{\tau, n})_{n\in \N}$ describing the quantum trajectories of the OQBM (defined in \eqref{eq:defi_traj_oqw}) with inital state $(\rrho_0, \delta\ent{X_0/\delta})*$. Then for any $T>0$ the family of processes $(\rrho_{\tau, \ent{t/\tau}}, X_{\tau, \ent{t/\tau}})_{t\in [0, T]}$ converges in distribution as $\tau\rightarrow 0$ to a process $(\rrho_t, X_t)_{t\in [0, t]}$ satisfying the following differential equation:
\begin{align}\label{eq:belavkinOQBM2}
\left\{
\begin{array}{ll}
d\rrho_t&=\call(\rrho_t)dt+\left(N\rrho_t+\rrho_t N^*-\rrho_t\Tau(\rho_t)\right)dB_t\\
dX_t&=\Tau(\rrho_t)dt+dB_t
\end{array}
\right.
\end{align}
where $B_t$ is a Wiener process.
\end{prop}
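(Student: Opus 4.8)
The plan is to deduce this proposition directly from Pellegrini's Theorem \ref{theo:pellegriniBelavkin} by exhibiting the quantum trajectory of the OQBM as a suitable function of the repeated-measurement trajectory studied there. The essential observation is provided by Lemma \ref{lem:dilation}: the map $\tilde\Lambda_\tau$ is, up to a uniform $O(\tau\sqrt\tau)$ error, the dilation $\tra_{\hh_p}(R_\tau V_\tau(\bullet\otimes\ket0\bra0)V_\tau^* R_\tau^*)$, and the Krauss operators of the repeated-measurement process coincide with the $B_{\tau,\pm}$. First I would make the unravelling explicit: in the repeated measurement one measures $A$, obtaining at each step $\Delta_{\tau,n}\in\{-1,+1\}$ with the conditional probabilities $\tr{K_{\tau,\Delta_n}\rrho_{\tau,n}K_{\tau,\Delta_n}^*}$, and the gyroscope state updates by the normalized $K_{\tau,\pm1}\rrho K_{\tau,\pm1}^*$. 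I would check that the quantum trajectory $(\rrho_{\tau,n},X_{\tau,n})$ of the OQW $\Lambda_\tau$ (Definition \eqref{eq:defi_traj_oqw}) is reproduced \emph{verbatim} by coupling the repeated-measurement outcomes $\Delta_{\tau,n}$ to the position increments: the walker steps to $x+\delta$ exactly when $\Delta_{\tau,n}=+1$ (using $B_{\tau,+}$) and to $x-\delta$ when $\Delta_{\tau,n}=-1$ (using $B_{\tau,-}$). Hence the gyroscope marginal $\rrho_{\tau,n}$ of the OQBM trajectory and the repeated-measurement state have the same law, and the position satisfies $X_{\tau,n}=\delta\ent{X_0/\delta}+\sqrt\tau\sum_{k=0}^{n-1}\Delta_{\tau,k}=\delta\ent{X_0/\delta}+W_{\tau,n}$.

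This identification is the crux: it says that $(\rrho_{\tau,\ent{t/\tau}},X_{\tau,\ent{t/\tau}})$ is, as a process, equal to $(\rrho_{\tau,\ent{t/\tau}},\,\delta\ent{X_0/\delta}+W_{\tau,\ent{t/\tau}})$ where the right-hand pair is precisely the process to which Theorem \ref{theo:pellegriniBelavkin} applies. I would then invoke Pellegrini's theorem, using the remark following its statement that the conclusion is unaffected when $K_{\tau,\pm}$ is perturbed by a uniformly small $o(\tau)$, which is exactly the role of the $O(\tau\sqrt\tau)$ term in Lemma \ref{lem:dilation} and of the $O(\tau^{3/2})$ in the definition \eqref{eq:defi_B} of $B_{\tau,\pm}$. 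The theorem gives convergence in distribution of $(\rrho_{\tau,\ent{t/\tau}},W_{\tau,\ent{t/\tau}})_{0\le t\le T}$ to the solution of \eqref{eq:belavkin}; writing $X_t=X_0+W_t$ transforms \eqref{eq:belavkin} into \eqref{eq:belavkinOQBM2}, since adding the constant $X_0$ does not change $dX_t=\Tau(\rrho_t)dt+dB_t$ and leaves the $\rrho_t$-equation untouched.

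The only genuine care is at the two ends of the convergence. At the initial time, the proposition starts from a random $(\rrho_0,X_0)$ and discretizes the position as $\delta\ent{X_0/\delta}$; I would note that $\delta\ent{X_0/\delta}\to X_0$ almost surely, so by conditioning on $(\rrho_0,X_0)$ and applying the (deterministic-initial-condition) statement of Pellegrini's theorem together with a standard continuous-mapping/Slutsky argument the discretization error $|\delta\ent{X_0/\delta}-X_0|\le\delta$ is absorbed in the limit. I would also remark that Pellegrini's convergence is stated in the space of bounded functions with the uniform norm, which is stronger than the Skorokhod convergence asserted here, so no additional tightness argument is required. The main (and essentially only) obstacle is thus the bookkeeping of the first paragraph: verifying rigorously that the position increments of the OQBM trajectory are in lockstep with the measurement results $\Delta_{\tau,n}$, so that the $X$-process is genuinely the rescaled random walk $W_{\tau,n}$ driven by the same outcomes that drive $\rrho_{\tau,n}$; once this pathwise identity is secured, the proposition is an immediate corollary of Theorem \ref{theo:pellegriniBelavkin}.
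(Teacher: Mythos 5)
Your proposal is correct and follows essentially the same route as the paper: both identify the OQBM quantum trajectory with the repeated-measurement process via $B_{\tau,\pm}=K_{\tau,\pm}+O(\tau\sqrt{\tau})$, invoke Pellegrini's Theorem \ref{theo:pellegriniBelavkin} together with its robustness to uniformly small $o(\tau)$ perturbations of the Kraus operators, and recover $X_t$ from $W_t$ by adding the initial position. The extra bookkeeping you supply (the pathwise lockstep of position increments with the outcomes $\Delta_{\tau,n}$, and the discretization error $\abs{\delta\ent{X_0/\delta}-X_0}\leq\delta$) merely makes explicit what the paper's proof leaves implicit.
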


\begin{proof}
We have $B_{\tau,\pm}=K_{\tau, \pm}+O(\tau \sqrt{\tau})$; we can ignore the rest $O(\tau \sqrt{\tau})$, since in the proof of Theorem \ref{theo:pellegriniBelavkin} (as exposed in \cite{PellegriniDiffusive08}) does not depends on the terms which are of order $o(\tau)$. Thus, the process $(\rrho_{\tau, \ent{t/\tau}}, X_{\tau, \ent{t/\tau}}-X_0)_{t\in [0, T]}$ has the same limit in distribution as the process $(\rrho_{\tau, \ent{t/\tau}}, W_{\tau, \ent{t/\tau}})_{t\in [0, T]}$, which satisfies Equation \eqref{eq:belavkinOQBM2}.
\end{proof}

The fact that $B_{\tau, \pm}=K_{\tau, \pm}+O(\tau \sqrt{\tau})$ can be directly computed, but it is also a consequence of Lemma \ref{lem:dilation}: the unitary $R_\tau$ converts the measurement of the observable $A$ into the measurement of the increasing of the position. Thus, the quantum trajectories of the OQBM are nothing more than the trajectories of the Belavkin Equation; the OQBM is truly different from the quantum dynamics arising from $V_\tau$ when we consider the position $X_t$ of the particle as a quantum observable, that is in the Lindbladian and the QSDE versions of the OQBM.

\subsection{Quantum Stochastic Calculus for the Open Quantum Brownian Motion}\label{sub:qsc}

  A fully quantum view on the OQBM which encompass the quantum correlations between the events at different times is obtained with the Quantum Stochastic Calculus on the Fock space. We will briefly introduce the Fock space and quantum stochastic calculus, by approaching it by the repeated interactions process.

\subsubsection{Repeated interaction process and the Toy Fock space}\label{subsub:toyfock}

In the definition of the repeated interactions process, a new probe space $\hh_p$ is introduced at every iteration. The so called Toy Fock space is the Hilbert space $T\Phi$ obtained when considering all these probe spaces at once. Formally, $T\Phi=\bigotimes_{n\in \N^*} \hh_p$. More concretely, it is the Hilbert space which generated by the vectors $\bigotimes_{n\in \N^*} e_n$ where the vectors $e_n$ are unit vectors of $\hh_p$ which are all equal to $\ket{0}$ except for a finite number of indexes. It has a distinguished unit vector $\ket{\Omega}=\bigotimes_{n\in \N^*} \ket{0}$, and for each $n\in \N^*$ it can be naturally decomposed as 
\[
T\Phi\simeq \hh_p^{\otimes n} \otimes T\Phi~.
\]
This identification is implicit in the following.

The evolution correspondint to the $n$-th interaction is described by the operator $V_\tau$ acting on the $n$-th copy of $\hh_p$, i.e. the operator  $V_{\tau, n}=\Un_{\hh_p^{\otimes (n-1)}}\otimes V_\tau \otimes \Un_{T\Phi} $, and the evolution from time zero to time $n$ is represented by the unitary
\[
U_{\tau, n}=V_{\tau, n} V_{\tau, n-1}\cdots V_{\tau, 1}~.
\]

For each $n\in \N^*$, the space $T\Phi_d$ contains a copy of $\hh_p$ given by the isometry
\[
\begin{array}{llll}
\calg_n :& \hh_p & \longrightarrow  & T\Phi_d\\
& v&\mapsto & \left(\overset{n-1}{\underset{k=1}{\bigotimes}} \ket{0}\right)\otimes v\otimes \left(\overset{+\infty}{\underset{k=n+1}{\bigotimes}}\ket{0}\right)
\end{array}
\]

We can obtain the random state $\rrho_n$ by performing the simultaneous measurement of all the observables $A_k=\calg_n A \calg_n^*$ when in the total state
\[
\rho_{tot, \tau, n}=U_{\tau, n} \left(\rrho_{\tau,0}\otimes \ket{\Omega}\bra{\Omega} \right) U_{\tau, n}^*~.
\]

The position of the particle is then 
\[
X_{\tau, n}=X_{\tau, 0}+\delta \sum_{k=1}^n \tilde{A}_k
\]
where  $\tilde{A}_k=\pm 1$ is the result of the measurement of $A_k$.

\subsubsection{The Fock space} \label{subsub:fock}

Before studying the convergence of $T\Phi$ as $\tau\rightarrow 0$, let us describe its limit, the Fock space $\Phi=\bigotimes_{t\in \R_+} \hh_p$. This space and its interpretation as an infinite tensor product is well known, see Parthasarathy's book \cite{Parthasarathy1992} for example, or Attal's lecture in the second book of \cite{OQS}, and we refer to these lectures for a more complete introduction to the Fock space. Let us briefly recall two of its descriptions. Here, we only treat the case where $\hh_p=\C^2$, but the case where $\hh_p=\C^n$ or even $\hh_p $ is infinite-dimensional are similar.

{\bf The Guichardet interpretation:} Let us consider the set $\pp$ of increasing sequences of $\R_+$ of finite length (including the empty sequence $(\emptyset)$). We have $\pp=\cup_{n\in \N} \pp_n$ where $\pp_n\subset (\R_+)^n$ is the set of increasing sequence of length $n$. This set inherits the Lebesgue measure on $(\R_+)^n$ (and $\pp_0=\set{(\emptyset)}$ has the Dirac measure), so we can endow $\pp$ with the sum of these measure, which we write $\lambda$. The Fock space in the Guichardet interpretation is $\Phi_G=L^2(\pp, \lambda)$.

It can be interpreted as an infinite tensor product. Indeed, if we write $\pp_{[s,t]}$ the space of finite sequences in $[s,t]$ and $\Phi_{G,[s,t]}=L^2(\pp_{[s,t]}, \lambda)$, we have $\Phi_{G,[s,t]}\otimes \Phi_{G,[t,u]}=\Phi_{G,[s,u]}$. There is a distinguished vector $\ket{\Omega}=\Un_{\pp_0}$. We identify $\Phi_{G, [s,t]}$ to the subspace $\set{\ket{\Omega_{[0, s]}}}\otimes\Phi_{G, [s, t]}\otimes \set{\ket{\Omega_{[t, +\infty)}}}$ of $\Phi_G$.

{\bf The probabilistic interpretation from the Brownian motion:} This interpretation has been introduced by Attal and Meyer \cite{AttalMeyer}. See \cite{AttalOQS} for more details. We consider the Wiener space $(\calw, \ff)$ of continuous functions from $\R_+$ to $\R$ with the Wiener measure $\mu$ corresponding to the Browian motion. We then take $\Phi_W=L^2(\calw, \mu)$ the space of $L^2$ random variables on $(\calw, \mu)$. There is a distinguished vector $\ket{\Omega}=\Un$ (the constant random variable equal to 1). If $\calw([s,t])$ is the space of functions from $[s,t]$ to $\R$, we can define $\Phi_{W, [s,t]}=L^2(\calw([s,t]), \mu)$, and we have
$\Phi_{W,[s,t]}\otimes \Phi_{W,[t,u]}=\Phi_{W,[s,u]}$. 

These two interpretation are equivalent: we can construct an unitary operator $\calg_{G, W} : \Phi_G \rightarrow \Phi_W$ such that $\calg_{G, W} \Phi_{G, [s,t]}=\Phi_{W, [s,t]}$ and $\calg\ket{\Omega}=\ket{\Omega}$. To describe it, let us write $(W_t)_{t\in \R_+}$ the Brownian motion and $dW_t$ the It\^o differential. For any function $f\in L^2(\pp_n, \lambda)$, the random variable $X=\calg_{G, W}f$  is defined as the successive It\^o integrals
\[
\calg_{G, W}f=X=\int_{0<t_1<t_2<\cdots<t_n<\infty} f(t_1, \cdots, t_n)dW_{t_1} dW_{t_2}... dW_{t_n}
\]

(and if $n=0$ then $\calg_{G, W}f$ is the deterministic variable equal to $f(\emptyset)$).

By the It\^o isometry formula, we have
\[
\norm{X}^2=\E(\abs{X}^2)=\int_{0<t_1<t_2<\cdots<t_n<\infty} \abs{f(t_1, \cdots, t_n)}^2dt_1\cdots dt_n=\norm{f}^2~.
\]
so $\calg_{G, W}$ is an isometry, and the chaotic representation property ensure that it is surjective (see \cite{AttalOQS}). 

From now on, we will write $\Phi$ the Fock space, and either the Guichardet or the probabilistic interpretation depending on the context. There exists many more probabilistic interpretations, one for each normal martingale. We concentrate on the Brownian interpretation in this article.
\vspace{0.5cm}

To complete this picture, we need to approximate the Toy Fock space by the Fock space. This was done by Attal \cite{AttalToy} and developed by Attal and Pautrat \cite{AttalPautrat}. Let us first design an isometry of $T\Phi$ into $\Phi$. The idea is the following: for each $\tau$, we have
\[
\Phi=\bigotimes_{n\in \N} \Phi_{[\tau n~,~\tau(n+1)]}
\]
(where the infinite tensor product is taken with respect to $\ket{\Omega_{[\tau n~,~\tau(n+1)]}}$ as in the construction of the toy Fock space). 
Thus, it is sufficient to define an isometry from $\hh_p=\C^2$ to $\Phi_{[\tau n~,~\tau(n+1)]}=\Phi_{[0, \tau]}$ and to extend it by tensor product to $T\Phi=\otimes_{n\in \N^*}$. We choose the isometry
\[
\begin{array}{llcl}
\calg_{n,\tau}: & \hh_p & \longrightarrow & \Phi_{[\tau n~,~\tau(n+1)]} \\
& \ket{0} & \mapsto & \ket{\Omega_{[\tau n~,~\tau(n+1)]}}\\
&\ket{1} & \mapsto &  \frac{1}{\sqrt{\tau}}\left(W_{\tau(n+1)}-W_{\tau n}\right)~.
\end{array}
\]
which tensorise to $\calg_{\tau}=\otimes_{n\in \N} \calg_{n,\tau}: T\Phi\rightarrow \Phi$.

Let us write $P_\tau=\calg_{\tau} \calg_{\tau}^*$ the projection on the image of $\calg_{\tau}$, and $T_\tau \Phi$ this image. 
Then $P_\tau$ strongly converge to the identity on $\Phi$ as $\tau\rightarrow 0$. 
In this sense, the Toy Fock space approximate the Fock space, but this is not sufficient; we also need some more precise convergence on operators in $\bb(\Phi)$. But first, we need to study the operators in the Fock space.

\subsubsection{Quantum Stochastic Calculus on the Fock space}

The quantum stochastic calculus is thoroughly described in Parthasarathy \cite{Parthasarathy1992} and in \cite{AttalOQS}, \cite{AttalPautrat}. We give it a very short introduction geared for this article.
 
The operators on $\hh_p$ are all linear combinations of the four operators $\ket{j}\bra{i}$ for $i,j\in \{0,1\}$. In the toy Fock space, they translate as the operators
\begin{align*}
a^i_j(n)=\calg_n(\ket{j}\bra{i}) \calg_n^*~.
\end{align*}
Thus, the algebra $\bb(T\Phi)$ is generated by the operators $a^i_j(n)$ for $n\in \N^*$ and $i,j\in \{0,1\}$. Unter suitable renormalization, they converge as $\tau\rightarrow 0$. Using the isometry $\calg_\tau$ in the Fock space, we define the operator
\[
a^i_j(\tau, k, l)=\calg_\tau \prod_{n=k}^l a^i_j( n) \calg_\tau
\]
 then there exists closed operators $a^i_j(t)$ on $\Phi$ such that there is strong convergence
\[
\tau^{\eps_{j,i}} a^i_j(\tau,0,\ent{t/\tau}) \underset{\tau\rightarrow 0}{\longrightarrow} a^i_j(t)
\]
where 
\[
\tau^{\eps_{j,i}}=\left\{\begin{array}{cl}
\tau & \text{if $i=j=0$}\\
\sqrt{\tau} & \text{if $(i,j)=(0,1)$ or $(i,j)=(1,0)$}\\
1 & \text{if $i=j=1$}
\end{array}\right.~.
\]
The operator $a^0_0(t)$ is just the multiplication by $t$, while $a^1_0(t)^*=a^0_1(t)$ and $a^1_1(t)$ is self-adjoint (they are respectively the creation, annihilation and number operator on $\Phi_{t]}$. We write $a^i_j([s,t])=a^i_j(t)-a^i_j(s)$; we have
\[
a^i_j(\tau,n)= \tau^{-\eps{j,i}}~P_\tau a^i_j([\tau(n+1),~\tau n]) P_\tau ~.
\]
 See \cite{AttalToy} or \cite{AttalPautrat} for more details on these operators. We will now explain how to integrate with respect to theses operators, in a way parallel to the It\^o Stochastic integration. First, we need to define the set of coherent vectors. For any function $u\in L^2(\R)\cap L^\infty(\R)$, we define the coherent vector $\eps(u)$ in the Guichardet interpretation by
\[
\eps(u)(t_1, \cdots, t_n)=u(t_1)u(t_2)\cdots u(t_n)~
\]
(the empty product being considered to be $1$).
In he probabilistic interpretation, it corresponds to exponential martingale : writing $Y_t=\eps(u\Un_{[0,t)})$ it verifies the (classical) SDE
\[
dY_t=u(t)Y_t dW_t
\]
 Thus, writing $H_\infty=\int_0^{+\infty} u(s) dW_s$ and $[H]_\infty=\int_{0}^{+\infty} \abs{u(s)}^2 ds$   we have 
\[
\eps(u)=\exp\left(H_\infty-\frac{1}{2}[H]_\infty\right)~.
\]
We have $\norm{u}^2=e^{\norm{u}^2_{L^2}}$. Hence, $\eps$ is continuous; it is clearly not linear.

An important property is that if $\mm\subset L^2(\R)\cap L^\infty(\R)$ is a dense subspace of $L^2(\R)$, then the vector space $Vect(\eps(\mm))$ is dense in $\Phi$. Thus, it is often sufficient to define an object on coherent vectors to fix it.

Now, the objects that we can integrate are the adapted process of operators. We give here a restrictive definition taken from Parthasarathy \cite{Parthasarathy1992}. A more general definition was produced by Attal and Lindslay \cite{AttalLindsay} , but it is not needed here.

\begin{defi}[Adapted process of operators]\label{defi:adapted}
A dense subspace $\mm\subset L^2(\R)$ is called adapted if for any $0\leq s\leq t\leq \infty$, the space $\mm([s,t]):=\{f\in \mm|f=\Un_{[s,t]}f\}$ is dense in $L^2([s,t])$.

Consider some Hilbert space $\hh_S$. 
A family of (possibly unbounded) operators $(H_t)_{t\in \R_+}$ on $\hh_S\otimes\Phi$ is called adapted 
if there exists a dense subspace $\dd$ and an adapted subspace $\mm\subset L^2(\R)$ such that for all $t$ the domain of $H_t$ contains $\dd\otimes_{alg}~ \eps(\mm)$,
and there is an operator $\tilde{H_t}$ on $\hh_S\otimes \Phi$ with domain $\dd\otimes_{alg}~\eps(\mm([0,t])$ 
such that $H_t=\tilde{H_t}\otimes I_{\Phi_{[t,+\infty)}}$ on $\dd\otimes_{alg}~\eps(\mm)$.
\end{defi}

Now, for an adapted process of operators $(H_t)_{t\in \R_+}$, we want to define the operator
\[
\int_0^t H_s da^i_j(s)
\]
which would correspond to the limit of 
\begin{align}\label{eq:int_riem}
\frac{1}{\tau} \sum_{k=0}^{\ent{t/\tau}} H_{k\tau} \left(a^i_j(\tau(k+1))-a^i_j(\tau k)\right)~.
\end{align}
Note that $a^i_j(\tau(k+1))-a^i_j(\tau k)$ only acts on $\Phi_{[\tau k, \tau(k+1)]}$ so it commutes with $H_{k\tau}$, and the order of the operators in the above formula is not important. The concrete way we define the integral is the following:

\begin{defi}
Let $(H_t)_{t\in I}$ be an adapted process of operators on $\hh_S\otimes \Phi$, with domain containing $\dd\otimes_{alg}~ \eps(\mm)$ where $\mm$ is adapted and $\dd$ is dense. Let $T$ be an operator on $\hh_S\otimes \Phi$. We say that the formula
\[
T=\int_0^t H_t da^i_j(t)
\]
is true on  $\dd\otimes_{alg}~ \eps(\mm)$ if for any $a,b\in \dd$ and $u,v\in \mm$ the following formula is meaningful and true:
\begin{align}\label{eq:int_coherent}
\scal{a\otimes \eps(u)~,~T_t b\otimes\eps(v)}=\int_0^t u_j(s)u_i(s)~\scal{a\otimes \eps(u),H_s b\otimes \eps(v)} ds
\end{align}
where $u_i(s)=1$ if $i=0$ and $u_i(s)=u(s)$ if $i=1$, and by \enquote{meaningful} we mean that the integral is absolutely convergent. 
\end{defi}

If $T_t=\int_0^t H_s da^i_j(s)$ for all $t$ we will write $dT_t=H_t da^i_j(t)$. A more general formula exists to compute $T f$ for some vector $f$, see \cite{AttalOQS}. Note that the existence of an operator $\int_0^t H_t da^i_j(t)$ is not guarantied. If $H_t$ is bounded locally uniformly in $t$, it is at least possible to define $\int_0^t H_t dt$ on the space generated by ${\hh_S\otimes \dd_B}$, where $\dd_B$ is the vector space generated by $\eps(L^2(\R)\otimes L^\infty(\R))$. The obtained operator may still be unbounded. 
\vspace{0.5cm}

 It is easy to check that in the case where $H_t$ is constant on the intervals $t\in [\tau k, \tau(k+1)]$ this formula corresponds to the Riemann sum \eqref{eq:int_riem}. In particular, 
 \[
 a^i_j(t)=\int_0^t da^i_j(s)~.
 \] 
 The case of $a^0_0(t)=t$ is simple, the integral being just the integral with respect to $dt$ in the Banach space $\bb(\hh_S)$.
 
 The case of $a^1_0(t)$ and $a^0_1(t)$ is more subtle, and it actually generalize the It\^o integral, as shown by the following proposition.
 
 \begin{prop}
 Let $(f_t)_{t\in \R_+}$ be a process of random variables in $L^\infty(\calw, \mu)$, adapted in the sense of It\^o, and such that $\int_0^t \E\big(\abs{f_s}^2\big)ds<\infty$. Let 
 \[
 g=\int_0^t f_s dW_s~.
 \]
 Consider the operators $H_s=M_{f_s}$ and $T=M_{g}$ on multiplication by $f_s$ on $\Phi$. Then we have
 \[
 T=\int_0^t H_s (da^1_0(t)+da^0_1(t))
 \]
 on the domain $\eps(L^2(\R))$. Thus, in terms of operators, we can write $dW_t=da^1_0(t)+da^0_1(t)$. 
 \end{prop}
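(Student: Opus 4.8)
The plan is to verify the identity at the level of matrix elements between coherent vectors and then invoke the totality of $\{\eps(u)\mid u\in L^2(\R)\}$ in $\Phi$. Writing $T=\int_0^t H_s\,(da^1_0(s)+da^0_1(s))$, the definition of the quantum stochastic integral says precisely that, for all $u,v\in L^2(\R)\cap L^\infty(\R)$,
\[
\scal{\eps(u),\,T\,\eps(v)}=\int_0^t \big(\overline{u(s)}+v(s)\big)\,\scal{\eps(u),\,H_s\,\eps(v)}\,ds,
\]
the factor $\overline{u(s)}$ arising from the creation part $da^1_0$ and the factor $v(s)$ from the annihilation part $da^0_1$. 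Since $M_g$ is also defined on $\eps(L^2(\R))$, it suffices to prove that $\scal{\eps(u),\,M_g\,\eps(v)}$ equals this right-hand side for every such $u,v$; equality of $M_g$ and $T$ on $\eps(L^2(\R))$ then follows because the coherent vectors are total in $\Phi$.

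Next I would pass to the Brownian interpretation $\Phi=L^2(\calw,\mu)$, where $\scal{\eps(u),M_h\eps(v)}=\E\big(\overline{\eps(u)}\,h\,\eps(v)\big)$ for any multiplication operator $M_h$. Set $N_\infty=\overline{\eps(u)}\,\eps(v)$ and introduce the martingale $\tilde N_s=\E\big(N_\infty\mid \ff_s\big)$, where $(\ff_s)$ is the Brownian filtration. Using that $Y_s:=\eps(u\Un_{[0,s]})$ solves $dY_s=u(s)Y_s\,dW_s$ (and likewise $Z_s:=\eps(v\Un_{[0,s]})$ solves $dZ_s=v(s)Z_s\,dW_s$), the It\^o product rule applied to $\overline{Y_s}Z_s$ gives
\[
d\big(\overline{Y_s}Z_s\big)=\big(\overline{u(s)}+v(s)\big)\,\overline{Y_s}Z_s\,dW_s+\overline{u(s)}v(s)\,\overline{Y_s}Z_s\,ds,
\]
and conditioning $N_\infty$ on $\ff_s$ multiplies $\overline{Y_s}Z_s$ by the tail factor $\exp\!\big(\int_s^\infty \overline{u}\,v\big)$, whose differential exactly cancels the drift, so that
\[
d\tilde N_s=\big(\overline{u(s)}+v(s)\big)\,\tilde N_s\,dW_s .
\]

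Finally I would compute $\scal{\eps(u),M_g\eps(v)}=\E(g\,N_\infty)$. Since $g=\int_0^t f_s\,dW_s$ is $\ff_t$-measurable, the tower property gives $\E(g\,N_\infty)=\E(g\,\tilde N_t)$; writing $\tilde N_t=\tilde N_0+\int_0^t(\overline{u(s)}+v(s))\tilde N_s\,dW_s$ and applying the It\^o isometry to the product of the two stochastic integrals yields
\[
\E(g\,\tilde N_t)=\int_0^t \big(\overline{u(s)}+v(s)\big)\,\E\big(f_s\,\tilde N_s\big)\,ds .
\]
By the tower property again, $\E(f_s\tilde N_s)=\E(f_s N_\infty)=\scal{\eps(u),M_{f_s}\eps(v)}$ because $f_s$ is $\ff_s$-measurable, and this is exactly the right-hand side of the matrix-element identity, completing the argument.

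I expect the main obstacle to be the rigorous treatment of the coherent vectors over the whole half-line rather than over $[0,t]$: a naive application of the product rule to $g_s\,\overline{Y_s}Z_s$ leaves a spurious drift contribution $\E\big(g_s\,\overline{u(s)}v(s)\,\overline{Y_s}Z_s\big)$. Replacing $N_\infty$ by its conditional expectation $\tilde N_s$, which is a genuine martingale, removes this term cleanly. The remaining integrability needed to justify the vanishing of the stochastic integrals and the It\^o isometry follows from $f\in L^\infty$ (so that $g\in L^2$) together with $\eps(u),\eps(v)\in\bigcap_{p<\infty}L^p(\calw,\mu)$, which holds since these are exponentials of Gaussian variables.
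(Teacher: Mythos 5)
The paper itself states this proposition without proof --- it is quoted as a known feature of the Attal--Meyer probabilistic interpretation of the Fock space, with \cite{AttalOQS} standing in for the argument --- so your proposal is not paralleling an argument in the text but supplying one, and the one you supply is correct. Your reduction is exactly right: by the paper's own definition of the quantum stochastic integral, the assertion $M_g=\int_0^t M_{f_s}\,(da^1_0(s)+da^0_1(s))$ on the exponential domain \emph{means} the matrix-element identity $\scal{\eps(u),M_g\,\eps(v)}=\int_0^t\big(\overline{u(s)}+v(s)\big)\scal{\eps(u),M_{f_s}\eps(v)}\,ds$, with $\overline{u}$ coming from the creation part and $v$ from the annihilation part (your conjugation convention is the standard one, and silently corrects the typo $u_j(s)u_i(s)$ in the paper's definition). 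The verification itself is sound: the tail-factor computation giving $\E\big(\overline{\eps(u)}\eps(v)\mid\ff_s\big)=\overline{Y_s}Z_s\exp\big(\int_s^\infty \overline{u}\,v\,dr\big)$ follows from independence of the post-$s$ increments of $W$, the deterministic factor indeed cancels the drift $\overline{u}\,v\,\overline{Y_s}Z_s\,ds$, and the tower property together with the bilinear (not sesquilinear) form of the It\^o isometry finishes the computation; your closing remark that conditioning is what removes the spurious drift is precisely the point that makes the argument clean. Two small caveats, both of which afflict the paper's statement rather than your proof: coherent vectors are only defined in the paper for $u\in L^2(\R)\cap L^\infty(\R)$, so the domain should be read as $\eps\big(L^2(\R)\cap L^\infty(\R)\big)$; and for $\eps(v)$ to lie in the domain of the unbounded multiplication operator $M_g$ one needs slightly more than $g\in L^2(\calw,\mu)$ --- for instance under a uniform bound $\sup_s\norm{f_s}_\infty<\infty$ the Burkholder--Davis--Gundy inequalities give $g\in L^p$ for all $p<\infty$, whence $g\,\eps(v)\in L^2$ by H\"older; with only $\int_0^t\E\big(\abs{f_s}^2\big)ds<\infty$ the conclusion should be read weakly, i.e.\ exactly as the matrix-element identity you establish, which is well defined since $g\,\overline{\eps(u)}\eps(v)\in L^1(\calw,\mu)$.
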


By the predicable representation property (see \cite{AttalOQS}), this implies that the commutative von Neumann algebra $\cala([0,t])=L^\infty(\calw([0,t]), \mu)$ is generated by the operators $a^1_0(s)+a^0_1(s)$ for $s\leq t$. Note that the observable we measure in the definition of the OQBM is $A=\ket{0}\bra{1}+\ket{1}\bra{0}$, so the observable $A(\tau, n)=\calg_\tau A(n) \calg_{\tau}^*$ is 
\[
A(\tau, n)=\frac{1}{\sqrt{\tau}}P_\tau\tau(a^1_0([\tau n~,~\tau(n+1)])+a^0_1([\tau n~,~\tau(n+1)]))P_\tau~.
\]
Thus, the algebra generated by the $A(\tau,k)$ for $k\leq n$ is $P_z L^\infty(\calw([0,t],\mu) P_z$, which is the reason why the Brownian representation of $\Phi$ is adapted to the study of the OQBM.
\vspace{0.5cm}

The product of two quantum stochastic integrals is itself a quantum stochastic integral under some regularity conditions.

\begin{prop}[Quantum It\^o product formula] \label{prop:quantum_ito}

Let $(A_t)_{t\in \R_+}$ and $(B_t)_{t\in \R_+}$ be two adapted processes of operators, with domains containing the dense adapted domains $\dd\otimes_{alg}~ \eps(\mm)$ in $\hh_S\otimes \Phi$. Assume that $(A_t^*)_{t\in \R_+}$ is also an adapted process with domain containing $\dd\otimes_{alg}~ \eps(\mm)$ and that the following integrals are well defined, on $\dd\otimes_{alg}~ \eps(\mm)$:
\begin{align*}
T_t&=\int_0^t A_s da^i_j(s) & S_t&=\int_0^t A_s^* da^j_i(s) \\
U_t&=\int_0^t B_s da^k_l(s)~. 
\end{align*}

Moreover, assume that for all $s$ the operators  $A_s U_s$, $T_s B_s$ and $A_s B_s$ are defined on a domain containing $\dd\otimes_{alg}~\eps(\mm)$ and that the following integrals are well defined on this domain:
\begin{align*}
\int_0^t A_s U_s da^i_j(s) & & \int_0^t T_s B_s da^k_l(s)  && \int_0^t \delta_{i=l} \delta_{l\neq 0}A_s B_s da^k_j(s)~.
\end{align*}
Then the following formula is satisfied on $\dd_B\otimes \eps(\mm_B)$:
\[
T_t U_t=\int_0^t A_s U_s da^i_j(s)+T_s B_s da^k_l(s)+ \delta_{i=l} \delta_{l\neq 0}A_s B_s da^k_j(s)~.
\]
\end{prop}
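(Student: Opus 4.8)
The plan is to verify the identity at the level of matrix elements against coherent vectors, since the definition of a quantum stochastic integral is given weakly through exactly such matrix elements. Both $T_tU_t$ and the three terms of the right-hand side are defined on $\dd\otimes_{alg}\eps(\mm)$, and by the density of the linear span of $\eps(\mm)$ in $\Phi$ together with the density of $\dd$ in $\hh_S$, two operators defined on this domain coincide there as soon as all matrix elements $\scal{a\otimes\eps(u),\,\cdot\;\,b\otimes\eps(v)}$ agree for $a,b\in\dd$ and $u,v\in\mm$. I would therefore fix such $a,b,u,v$, write $\phi=a\otimes\eps(u)$ and $\psi=b\otimes\eps(v)$, and use the hypothesis that $A_t^*$ is adapted with $S_t=\int_0^t A_s^*\,da^j_i(s)$ well defined to identify $T_t^*$ with $S_t$ on the coherent domain, so that
\[
\scal{\phi,\,T_tU_t\,\psi}=\scal{S_t\phi,\,U_t\psi}.
\]
This symmetrization puts both factors on the same footing as forward integrals and is what makes the Riemann-sum argument below symmetric in the two processes.

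The core of the argument is a simple-process approximation. First I would prove the formula when $A$ and $B$ are constant on the intervals $[\tau k,\tau(k+1)]$ of a partition of $[0,t]$. For such step processes the integrals collapse into finite sums of increments $\Delta a^i_j(k)=a^i_j(\tau(k+1))-a^i_j(\tau k)$, and the product becomes the finite double sum $\sum_{m,n} A_{s_m}\,\Delta a^i_j(m)\;B_{s_n}\,\Delta a^k_l(n)$, which I split according to $m>n$, $m<n$ and $m=n$. By adaptedness an increment on one interval commutes with every operator supported on earlier intervals; moving increments accordingly, the terms with $m>n$ reorganize into the Riemann sum of $\int_0^t A_sU_s\,da^i_j(s)$ and those with $m<n$ into the Riemann sum of $\int_0^t T_sB_s\,da^k_l(s)$.

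The diagonal terms $m=n$ are where the It\^o correction is produced, and the key algebraic input is the quantum It\^o multiplication table, which in this index convention reads
\[
da^i_j(s)\,da^k_l(s)=\delta_{i=l}\,\delta_{l\neq 0}\,da^k_j(s).
\]
It expresses, through $(\ket{j}\bra{i})(\ket{l}\bra{k})=\delta_{i=l}\ket{j}\bra{k}$, that the bra of the first differential contracts with the ket of the second; the factor $\delta_{l\neq 0}$ encodes that only the creation, annihilation and number directions survive in the continuous limit while the time direction $a^0_0$ contributes only at higher order. Commuting $\Delta a^i_j(m)$ past $B_{s_m}$ by adaptedness and applying this table, the $m=n$ terms assemble into the Riemann sum of $\int_0^t \delta_{i=l}\delta_{l\neq 0}A_sB_s\,da^k_j(s)$, which is precisely the missing third term. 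This establishes the formula for step processes.

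Finally I would pass from step processes to general adapted $A$ and $B$ by approximating them with their piecewise-constant restrictions to finer and finer partitions and taking limits inside the matrix elements, using the standing hypotheses that all five integrals in the statement are well defined on $\dd\otimes_{alg}\eps(\mm)$; these guarantee that the associated matrix-element integrands are absolutely convergent, so the limits of the three Riemann sums identify with the three integrals. The main obstacle is exactly this limiting step: because $A_s$ and $B_s$ may be unbounded, the reordering of the double sum and the convergence of the diagonal contribution cannot be controlled in operator norm but only on the coherent-vector domain, so one must verify that the cross operators $A_sU_s$, $T_sB_s$ and $A_sB_s$ are genuinely defined on $\dd\otimes_{alg}\eps(\mm)$ and yield integrable matrix elements — which is what the hypotheses have been arranged to supply.
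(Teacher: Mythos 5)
Your first two moves --- reducing everything to matrix elements between vectors of $\dd\otimes_{alg}~\eps(\mm)$, and using the hypothesis on $(A_t^*)$ to replace $\scal{\phi,\,T_tU_t\,\psi}$ by $\scal{S_t\phi,\,U_t\psi}$ --- are exactly right, and they are the reason the adjoint process appears in the hypotheses at all: $U_t\psi$ need not lie in $\dd\otimes_{alg}~\eps(\mm)$, so the product $T_tU_t$ only makes sense through this sesquilinear reading. You also decoded the index convention correctly (the contraction $(\ket{j}\bra{i})(\ket{l}\bra{k})=\delta_{i=l}\ket{j}\bra{k}$ does give $da^i_j\,da^k_l=\delta_{i=l}\delta_{l\neq 0}\,da^k_j$ here). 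For comparison: the paper does not prove this proposition; it cites Proposition 25.26 of Parthasarathy's book. That proof starts from the same identity $\scal{\phi,T_tU_t\psi}=\scal{S_t\phi,U_t\psi}$ but then never discretizes: it differentiates this scalar function of $t$ directly, using the weak definition \eqref{eq:int_coherent} of both integrals and the explicit action of the integrators on coherent vectors, and the It\^o correction emerges from the bilinear "product of derivatives" term. The whole argument stays inside the weak formulation that the hypotheses actually provide.

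The genuine gap is your limiting step, and it is not something the stated hypotheses "have been arranged to supply." The assumptions only say that the five integrals are well defined in the sense of \eqref{eq:int_coherent}, i.e.\ that scalar functions such as $s\mapsto\scal{a\otimes\eps(u),\,A_s\,b\otimes\eps(v)}$ are absolutely integrable; they give no continuity, no local boundedness, and no $L^2$-type estimate. Two consequences. First, the piecewise-constant processes obtained by sampling $A$ and $B$ at partition points are uncontrolled: Riemann sums built from point evaluations of a merely integrable function need not converge to its integral (this already fails for $T_t=\int_0^t f(s)\,da^0_0(s)$ with $f\in L^1$ discontinuous), so "taking limits inside the matrix elements" is not justified. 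Second, even for step processes the diagonal terms are not exactly given by the It\^o table: for finite increments one has, e.g., $\Delta a^0_1\,\Delta a^1_0=\Delta a^1_0\,\Delta a^0_1+\tau$, so the $m=n$ sum equals the desired correction plus normally-ordered remainders whose matrix elements are $O(\tau^2)$ per interval; showing that their total vanishes as the mesh shrinks requires uniform bounds on $\scal{\,\cdot\,,A_sB_s\,\cdot\,}$ that nothing in the statement provides. Both defects disappear if one additionally assumes, say, strong continuity and local boundedness of $s\mapsto A_s,B_s$ on the coherent domain, but that proves a strictly weaker theorem; under the hypotheses as stated, the discretization route cannot be closed, which is precisely why the Hudson--Parthasarathy proof differentiates the matrix elements instead.
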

This proposition was proved by Hudson and Parthasarathy, see Proposition 25.26 of Parthasarathy's book \cite{Parthasarathy1992}.

Writing $da^i_j(s)~da^k_l(s)=\delta_{i,j}\delta_{k\neq 0} ~da^i_l(s)$, this formula can be used as
\[
d(T_t U_t)=T_tdU_t+(dT_t)U_t+(dT_t)(dU_t)~.
\]
Note that in particular, if $A_t=B_t=a^1_0(t)+a^0_1(t)$ we have
\[
d(A^2(t))=2 A(t) dA(t)+ dt
\]
which is actually the formula $d(W_t^2)=2W_t dW_t+dt$ for the Brownian motion.

We are now ready to present the theorem of convergence of the repeated interactions of Attal and Pautrat.

\subsubsection{Hudson-Parthasarathy Equations and Attal-Pautrat convergence}

The Attal-Pautrat limit \cite{AttalPautrat} was devised in the context of repeated interactions processes. The idea is to show that $\calg_\tau U_{\ent{t/\tau}}\calg_\tau^*$ converge to some limit $U_t$ as $\tau$ goes to $0$, which satisfies a quantum stochastic differential equation.  We only present the case which is needed here.

First, we need to describe what will be the limit. It is a family of unitary following the so called quantum Langevin Equations (or Hudson-Parthasarathy Equations).

\begin{theo}\label{theo:HP}
Let $H$ and $N$ be two bounded operators on $\hh_G$ with $H$ self-adjoint. Write 
\[
G=-iH -\frac{1}{2} N^* N~.
\]
Then there exists an adapted process of unitary operators $U_t$ on $\hh_G\otimes \Phi$ which satisfies the following quantum stochastic equation on $\hh_G\otimes_{alg}~ \eps(L^2(\R))$:
\begin{align}\label{eq:HP}
dU_t=\big(G dt+N da^1_0(t)-N^*da^0_1(t)\big) U_t~.
\end{align}
The adjoint operator $U_t^*$ satisfies the adjoint equation. With the condition $U_0=I$, it is unique.
\end{theo}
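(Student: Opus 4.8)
The plan is to follow the classical Hudson–Parthasarathy scheme: construct $U_t$ as the limit of a Picard iteration tested against coherent vectors, and then upgrade the densely defined solution to a genuine unitary by verifying, via the quantum Itô formula, that it is both isometric and coisometric. For existence I would set $U^{(0)}_t = I$ and define recursively
\[
U^{(n+1)}_t = I + \int_0^t \p{G\,ds + N\,da^1_0(s) - N^*\,da^0_1(s)}U^{(n)}_s,
\]
all integrals read through the defining formula \eqref{eq:int_coherent} on the dense domain $\hh_G\otimes_{alg}\eps(\mm)$ with $\mm = L^2(\R)\cap L^\infty(\R)$. The essential point is that, when tested on coherent vectors $a\otimes\eps(u)$ and $b\otimes\eps(v)$, each integrator contributes only the scalar factors $u_i(s),v_j(s)$, so that the matrix elements $\scal{a\otimes\eps(u),\,U^{(n)}_t\,b\otimes\eps(v)}$ solve a \emph{classical} linear Volterra integral equation whose kernel is bounded on $[0,T]$, since $H,N$ are bounded and $u,v\in L^\infty(\R)$. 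Iterating this bound yields factorial control of $\norm{(U^{(n+1)}_t - U^{(n)}_t)\,b\otimes\eps(v)}$, so the sequence is Cauchy and converges to an adapted operator $U_t$ satisfying \eqref{eq:HP} on that domain.

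To show $U_t^*U_t = I$ I would apply the quantum Itô product formula (Proposition \ref{prop:quantum_ito}) to the pair $U_t^*,U_t$, using that $U_t^*$ satisfies the adjoint equation $dU_t^* = U_t^*\p{G^*\,dt + N^*\,da^0_1(t) - N\,da^1_0(t)}$. Expanding $d(U_t^*U_t) = (dU_t^*)U_t + U_t^*(dU_t) + (dU_t^*)(dU_t)$, the creation and annihilation contributions of the first two terms cancel in pairs, while the Itô correction produces the single surviving $dt$ term $U_t^*N^*N\,U_t$. Collecting the $dt$ coefficients gives $U_t^*\p{G + G^* + N^*N}U_t$, which vanishes identically because $G + G^* = -N^*N$ by the definition of $G$. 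Hence $d(U_t^*U_t)=0$ and, since $U_0^*U_0=I$, we obtain $U_t^*U_t = I$ on the dense domain. The same computation applied to $U_tU_t^*$ — where the relation $G+G^*+N^*N=0$ again forces the $dt$ coefficient to vanish and the off-diagonal terms to cancel — yields $U_tU_t^* = I$. Being isometric on a dense domain, $U_t$ extends to an isometry of $\hh_G\otimes\Phi$ by the density of $\vect(\eps(\mm))$, and coisometry upgrades this to a unitary.

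For uniqueness, if $U_t$ and $U'_t$ both solve \eqref{eq:HP} with $U_0=U'_0=I$, their difference $D_t = U_t - U'_t$ satisfies the same homogeneous linear equation with $D_0=0$; applying the estimates used for the Picard iteration to $D_t$ gives $\norm{D_t\,b\otimes\eps(v)} \leq \int_0^t \p{\text{bounded}}\,\norm{D_s\,b\otimes\eps(v)}\,ds$, so Gronwall's lemma forces $D_t=0$ on the dense domain, hence everywhere.

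The delicate part is analytic rather than algebraic. Because the integrators $a^1_0,a^0_1$ are unbounded, before any of the above manipulations are legitimate one must establish the a priori estimates guaranteeing that each iterated quantum stochastic integral is well defined and that the hypotheses of Proposition \ref{prop:quantum_ito} (existence of all integrals entering the product formula) are met. Concretely, the work lies in proving the fundamental norm estimate for $\int_0^t L_s\,da^i_j(s)$ in terms of $\int_0^t \norm{L_s\,b\otimes\eps(v)}^2\,ds$ and feeding it into the iteration; once these bounds are in hand, both the convergence of the Picard scheme and the justification of the Itô computation follow, and the purely algebraic cancellation $G + G^* + N^*N = 0$ does the rest.
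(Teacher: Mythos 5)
Your overall strategy --- Picard iteration tested on the coherent domain, then the quantum It\^o product formula to upgrade the densely defined solution to a unitary, with uniqueness by a Gronwall estimate --- is exactly the route the paper takes, since the paper simply defers to Parthasarathy's book and sketches ``Picard iterations plus the coherent-vector formula''. The existence, isometry and uniqueness parts of your argument are sound, and you correctly identify the fundamental norm estimate for quantum stochastic integrals as the analytic crux: in the isometry computation the martingale terms really do cancel, because both contributions appear sandwiched as $U_t^*NU_t$ and $U_t^*N^*U_t$, and the identity $G+G^*+N^*N=0$ kills the $dt$ coefficient.

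The coisometry step, however, would fail as you state it. For $Y_t=U_tU_t^*$ the product formula gives
\[
dY_t=\big(GY_t+Y_tG^*+N^*Y_tN\big)\,dt+\big(NY_t-Y_tN\big)\,da^1_0(t)+\big(Y_tN^*-N^*Y_t\big)\,da^0_1(t)~,
\]
because in $(dU_t)U_t^*$ the coefficients sit to the \emph{left} of $Y_t$ while in $U_t(dU_t^*)$ they sit to the \emph{right}, and the It\^o correction is $N^*Y_tN\,dt$, not $N^*NY_t\,dt$. The martingale coefficients are now the commutators $[N,Y_t]$ and $[Y_t,N^*]$, which do not cancel a priori --- they vanish exactly when $Y_t=I$, which is what you are trying to prove --- and the $dt$ coefficient is $GY_t+Y_tG^*+N^*Y_tN$, not $(G+G^*+N^*N)Y_t$. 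So ``the same computation'' does not close. The standard repair uses tools you already have: note that the constant process $I$ satisfies this same linear QSDE with bounded coefficients and the same initial condition $Y_0=I$, and apply your Gronwall/uniqueness estimate to the difference $Y_t-I$ to conclude $Y_t=I$. This is precisely how Hudson--Parthasarathy and Parthasarathy's book treat coisometry, which is genuinely more delicate than isometry for equations of the form $dU_t=(\cdots)U_t$.
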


This theorem is proved in \cite{Parthasarathy1992}; the idea is to make Picard iterations on Equation \eqref{eq:HP} starting from $U^0_t=I$, applying Formula \eqref{eq:int_coherent} to show that at each iteration the obtained operators are still unitary.
\vspace{0.5cm}

Attal and Pautrat proved the following theorem (in a more general setup). 
\begin{theo}\label{theo:AP}
Let $(U_{\tau, n})_{n\in \N}$ be a family of operators on $\hh_G\otimes T\Phi$ defined as in Paragraph \ref{subsub:toyfock}, and write $u_{\tau, n}=\calg_\tau U_{\tau, n} \calg_{\tau}^*$ the isometry on $\hh_G\otimes \Phi$ corresponding to $U_{\tau, n}$. 
Then for any $t\geq 0$ the operator $u_{\tau, \ent{t/\tau}}$ converges strongly to the unitary operator $U_t$ solution of the Hudson-Parthasarathy Equation of Theorem \ref{theo:HP}.
\end{theo}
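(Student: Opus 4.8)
The plan is to reproduce, at the discrete level, the integral form of the Hudson--Parthasarathy equation \eqref{eq:HP}, and then pass to the limit on the dense set of coherent vectors. \textbf{Setting up the discrete equation.} First I would read off the quantum-noise decomposition of $V_\tau$ from the expansion \eqref{eq:vtau}: writing the one-step unitary in the basis $(\ket{j}\bra{i})_{i,j}$ of $\bb(\hh_p)$ acting on the $n$-th probe, the identity sits at order $1$, the off-diagonal entries are carried by $a^1_0(n)$ and $a^0_1(n)$ at order $\sqrt{\tau}$, and the diagonal correction $G=-iH-\tfrac12 N^*N$ (with its $NN^*$ companion on $a^1_1(n)$) appears at order $\tau$, so that $V_{\tau,n}=I+\sqrt{\tau}(\cdots)+\tau(\cdots)+O(\tau^{3/2})$. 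Since $\calg_\tau^*\calg_\tau=I$, the embedded operators obey the one-step recursion $u_{\tau,n+1}=(\calg_\tau V_{\tau,n+1}\calg_\tau^*)\,u_{\tau,n}$. I would then invoke the scaling relation $a^i_j(\tau,n)=\tau^{-\eps_{j,i}}P_\tau\,a^i_j([\tau n,\tau(n+1)])\,P_\tau$ recalled above: each $\tau^{\eps_{j,i}}$ prefactor is absorbed, and the increment $u_{\tau,n+1}-u_{\tau,n}$ becomes a sum of genuine Fock increments $a^i_j([\tau n,\tau(n+1)])$ sandwiched by $P_\tau$, with coefficients $G$, $N$, $N^*$ matching the generator of \eqref{eq:HP}, plus a remainder $O(\tau^{3/2})$. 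Telescoping over $n\le\ent{t/\tau}$ yields a discrete counterpart of the integral form of \eqref{eq:HP}.

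\textbf{Convergence on coherent vectors.} Next I would fix $a,b\in\hh_G$ and $u,v$ in an adapted dense domain $\mm\subset L^2(\R)\cap L^\infty(\R)$ and test the telescoped identity against $a\otimes\eps(u)$ and $b\otimes\eps(v)$. As in formula \eqref{eq:int_coherent}, applying $a^i_j([\tau n,\tau(n+1)])$ to the coherent vector brings down the factor $\int_{\tau n}^{\tau(n+1)} u_j u_i$, so the telescoped sum is a Riemann sum for $\int_0^t u_j(s)u_i(s)\,\scal{a\otimes\eps(u),(\cdots)_s\,b\otimes\eps(v)}\,ds$. Because $P_\tau\to I$ strongly, the projections drop out of the limit, and the $O(\tau^{3/2})$ remainder accumulated over the $\sim t/\tau$ steps is $O(\sqrt{\tau})\to 0$. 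To make the interchange of limit and iteration rigorous, I would expand both $u_{\tau,\ent{t/\tau}}$ and $U_t$ as Dyson series—discrete iterated sums on one side, iterated quantum stochastic integrals on the other—dominate the tails uniformly in $\tau$ using the exponential coherent-vector bounds together with a Gronwall estimate, and then compare term by term.

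\textbf{Strong convergence and identification.} The matrix-element limits show that $u_{\tau,\ent{t/\tau}}$ converges weakly on the total set $\hh_G\otimes_{alg}\eps(\mm)$ to a limit whose matrix elements satisfy the integral form of \eqref{eq:HP}; by the uniqueness part of Theorem \ref{theo:HP} this limit is exactly $U_t$. Finally I would upgrade weak to strong convergence: each $u_{\tau,n}$ is a partial isometry with $\norm{u_{\tau,n}}\le 1$ and $U_t$ is unitary, so for $\xi$ in the dense set $\norm{u_{\tau,\ent{t/\tau}}\xi}\to\norm{U_t\xi}$; weak convergence plus convergence of norms gives strong convergence there, and the uniform bound $\norm{u_{\tau,\ent{t/\tau}}}\le 1$ extends it to all of $\hh_G\otimes\Phi$.

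\textbf{Main obstacle.} The hard part is that the creation and annihilation operators $a^1_0,a^0_1$ are unbounded, so the entire argument must be confined to the coherent domain and the term-by-term passage to the limit in the Dyson series demands a uniform-in-$\tau$ domination that survives this unboundedness. Coupled with the fact that $P_\tau$ converges only strongly and not in norm, and that the per-step $O(\tau^{3/2})$ errors must be shown to sum to a vanishing quantity after $\sim t/\tau$ steps, this estimation is the technical core of the statement—precisely the bookkeeping carried out by Attal and Pautrat in \cite{AttalPautrat}.
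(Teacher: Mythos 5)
Your proposal is correct and takes essentially the same route as the paper: the paper does not prove Theorem \ref{theo:AP} itself but imports it from Attal and Pautrat \cite{AttalPautrat}, and your sketch is precisely a reconstruction of that discretization argument (noise decomposition of the one-step unitary $V_\tau$, rescaled toy-Fock increments, coherent-vector matrix elements and Dyson-series domination, then the weak-plus-norm upgrade to strong convergence via $\norm{u_{\tau,n}\xi}=\norm{P_\tau\xi}\to\norm{\xi}$). The technical core you defer to the reference --- the uniform-in-$\tau$ control of the iterated integrals --- is exactly what the paper also leaves to \cite{AttalPautrat}.
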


This theorem is proved in \cite{AttalPautrat} in a more general context where there may be some term in $da^1_1(t)$ in the equation and the space $\hh_p$ is of arbitrary dimension).

\subsubsection{Convergence to the continous OQBM}

We are now ready to prove the convergence of the discrete OQBM. We consider the unitary operator $R_\tau V_\tau$  of the discrete OQBM built in Paragraph \ref{subsub:discreteOQBM}. The isometry $\calg_n$ converts it in a unitary on $\hh_G\otimes \hh_{\tau, z}\otimes \Phi$. The goal is to show the convergence to a unitary on $\hh_G\otimes L^2(\R)\otimes \Phi$, so we need to see $\hh_{\tau, z}$ as a subspace of $\hh_z=L^2(\R)$. For each $\tau$ we define an isometry of $\hh_{\tau,z}$ into a subspace of $L^2(\R)$. 
\[
\begin{array}{llcl}
\calg_{\delta\Z, \R}: & \hh_{\tau,z}& \longrightarrow & L^2(\R) \\
& \ket{x} &\mapsto & \frac{1}{\delta}~\Un_{[ x~,~x+\delta)}
\end{array}
\]
The image of this isometry is the space of functions which are constant on each interval $[ x~,~x+\delta)$, which we identify with $\hh_{\tau, z}$ in the following of the article, and we write $\hh_z=L^2(\R)$.  We define $P_{\delta\Z}=\calg_{\delta\Z,\R}\calg_{\delta\Z, \R}^*$ the orthogonal projection on this space. By the Lebesgue differentiation Theorem, it strongly converge to the identity as $\delta\rightarrow 0$. In this sense, the space $\hh_{\tau, z}$ converges to $L^2(\R)$ as $\tau\rightarrow 0$.

Moreover, the translation operator $D_\tau\in \bb(\hh_z)$ is transformed into 
\[
\calg_{\delta \Z,\R} D_\tau\calg_{\delta\Z, \R}^*=P_{\delta \Z} e^{-\delta\partial_x}
\]
since $e^{-\delta\partial_x}=e^{-i P}$ is the translation operator on $L^2(\R)$. 

Let us write 
\[
l_{\tau, n}=\calg_{\tau}\calg_n R_{\tau}V_\tau \calg_n^* \calg_\tau^*
\]
 and define the OQBM isometry 
 \[
 \gu_{\tau, n}=l_{\tau,n}l_{\tau, n-1}\cdots l_{\tau, 1}~.
 \]  
 
We have the following convergence theorem.

\begin{theo}\label{theo:HPOQBM}
For each $t\geq 0$ the operator $gu_{\tau, \ent{t/\tau}}$ converge strongly to some unitary operator $\gu_t$ solution of the equation
\begin{align}\label{eq:HPOQBM2}
d\gu_t=\Big((-iH-\frac{1}{2}N^*N+\frac{1}{2}\partial_x^2-\partial_x N)dt+(N-\partial_x)da^0_1(t)+(-N^*-\partial_x)da^1_0(t) \Big) \gu_t~
\end{align}
on the set $\hh_G\otimes_{alg}~ H^2(\R)\otimes_{alg}~ \eps(L^2(\R))$. 
\end{theo}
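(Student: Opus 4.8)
The plan is to recognize $\gu_{\tau,n}=l_{\tau,n}\cdots l_{\tau,1}$ as a repeated--interaction product of exactly the kind handled by the Attal--Pautrat Theorem \ref{theo:AP}, but now with the \emph{enlarged} system space $\hh_S=\hh_G\otimes\hh_z=\hh_G\otimes L^2(\R)$ and one--step unitary equal to the transported operator $\calg_{\delta\Z,\R}(R_\tau V_\tau)\calg_{\delta\Z,\R}^*$ coming from Lemma \ref{lem:dilation}. The only thing preventing a direct application of Theorem \ref{theo:AP} is that its hypotheses require the effective one--step coefficients to be \emph{bounded}, whereas here the translation $D_\tau$ becomes $P_{\delta\Z}e^{-\delta\partial_x}$ and produces the unbounded generators $\partial_x$ and $\partial_x^2$. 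The remedy announced in the introduction is to restrict to the frequency band $\dd_C=\{f\in L^2(\R)\mid\mathrm{supp}(\ff f)\subset[-C,C]\}$, on which $\partial_x$ is bounded with $\norm{\partial_x|_{\dd_C}}\leq C$, and then to let $C\to\infty$, using that $\bigcup_{C>0}\dd_C$ is dense in $L^2(\R)$.

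First I would compute the one--step expansion. Writing $R_\tau$ in the probe basis via \eqref{eq:rtau} as $\tfrac12(D_\tau+D_\tau^*)\otimes I_p+\tfrac12(D_\tau-D_\tau^*)\otimes A$ and using $\calg_{\delta\Z,\R}D_\tau\calg_{\delta\Z,\R}^*=P_{\delta\Z}e^{-\delta\partial_x}$, I observe that on $\dd_C$ the exponentials $e^{\pm\delta\partial_x}$ preserve the band and expand in bounded operators: the symmetric part contributes $I+\tfrac12\tau\partial_x^2+O(\tau^2)$ (a $\cosh(\delta\partial_x)$) and the antisymmetric part contributes an order-$\sqrt\tau$ term proportional to $\partial_x$ (a $\sinh(\delta\partial_x)$). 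Multiplying by the expansion \eqref{eq:vtau} of $V_\tau$ and collecting powers of $\delta=\sqrt\tau$, the order-$\sqrt\tau$ off--diagonal blocks furnish the creation/annihilation coefficients $-N^*-\partial_x$ and $N-\partial_x$ of \eqref{eq:HPOQBM2}, and the order-$\tau$ diagonal block furnishes the drift $-iH-\tfrac12N^*N+\tfrac12\partial_x^2-\partial_x N$ — the $\tfrac12\partial_x^2$ coming from the symmetric translation, the mixed $\partial_x$--$N$ term as the product of the two order-$\sqrt\tau$ factors (the $\partial_x$ from $R_\tau$ and the $N$ from $V_\tau$), and the rest from the second--order part of $V_\tau$. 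A structural check is that the two noise coefficients are negative adjoints of one another, $N-\partial_x=-(-N^*-\partial_x)^*$ (recall $\partial_x^*=-\partial_x$), which is the hallmark of a genuine Hudson--Parthasarathy generator, so the general form of Theorem \ref{theo:HP} — applied with the \emph{bounded} operators $N,H$ and $\partial_x|_{\dd_C}$ — provides a unique unitary solution $\gu_t$ of \eqref{eq:HPOQBM2} on $\hh_G\otimes\dd_C\otimes_{alg}\eps(L^2(\R))$.

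With the coefficients boundedly identified on each $\dd_C$, I would apply Theorem \ref{theo:AP} with system space $\hh_G\otimes\dd_C$ to obtain strong convergence of $\gu_{\tau,\ent{t/\tau}}$ to $\gu_t$ on $\hh_G\otimes\dd_C\otimes\eps(L^2(\R))$. By uniqueness in Theorem \ref{theo:HP} the limits for different $C$ agree on overlaps and patch into a single operator $\gu_t$ defined on the dense set $\bigcup_C\dd_C\otimes_{alg}\eps(L^2(\R))$; since every $\gu_{\tau,\ent{t/\tau}}$ is an isometry and the limit is isometric, a standard density and uniform--boundedness argument upgrades convergence on this dense set to strong convergence on all of $\hh_G\otimes L^2(\R)\otimes\Phi$. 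Running the same argument on the adjoints $\gu_{\tau,n}^*$ (which obey the time--reversed recursion and, under $\calg_{\delta\Z,\R}$, the adjoint Hudson--Parthasarathy equation) shows $\gu_t^*$ is also isometric, so $\gu_t$ is unitary. Finally, since $H^2(\R)=\mathrm{dom}(\partial_x^2)$ and $\bigcup_C\dd_C$ is a core for $\partial_x^2$, the equation \eqref{eq:HPOQBM2} extends from the truncated domains to the stated domain $\hh_G\otimes_{alg}H^2(\R)\otimes_{alg}\eps(L^2(\R))$.

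The main obstacle I expect lies in reconciling the spatial projection $P_{\delta\Z}$ with the frequency truncation. The genuine operator $\gu_{\tau,n}$ lives on the step--function subspace $P_{\delta\Z}L^2(\R)$, which is \emph{not} preserved by the band projection onto $\dd_C$ (and, conversely, $\dd_C$ is not preserved by $P_{\delta\Z}$), so the operator that Theorem \ref{theo:AP} effectively sees on $\dd_C$ is $P_{\delta\Z}e^{\pm\delta\partial_x}$ rather than the clean bounded exponential $e^{\pm\delta\partial_x}$. A crude per--step estimate $\norm{(P_{\delta\Z}-I)f}=O(\delta)$ for band--limited $f$ is fatal after $\ent{t/\tau}\sim t/\delta^2$ steps, so the delicate point is to show that these projection errors enter only at the It\^o order and cancel telescopically in the accumulated product — that is, that $P_{\delta\Z}\to I$ strongly is compatible with the Attal--Pautrat limit in exactly the way the toy--Fock projection $P_\tau=\calg_\tau\calg_\tau^*$ already is in Theorem \ref{theo:AP}. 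Establishing this negligibility of the cumulative projection error in the presence of the new unbounded direction $\partial_x$, uniformly on $[0,T]$, is where the real work concentrates.
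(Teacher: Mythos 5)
Your overall route --- apply the Attal--Pautrat theorem to the full one-step unitary on the enlarged system space $\hh_G\otimes L^2(\R)$, restricted to the band-limited spaces $\dd_C$ where $\partial_x$ is bounded, then patch over $C$ and extend by density and uniform boundedness --- is viable, and it is in fact the alternative proof that the paper itself acknowledges in the remark following Theorem \ref{theo:HPOQBM}. The paper's own proof is different: it exploits the fact that $V_\tau$ acts only on $\hh_G\otimes\hh_p$ while $R_\tau$ acts only on $\hh_z\otimes\hh_p$, so that factors attached to different time steps commute and the product factorizes exactly as $\gu_{\tau,n}=z_{\tau,n}u_{\tau,n}$; the $V$-part converges by Theorem \ref{theo:AP} with bounded coefficients, the $R$-part is treated separately (Proposition \ref{prop:zt}) with the band-limitation trick and is identified explicitly as translation by the Brownian motion, $(Z_t fA)(x)=f(x-W_t)A$, and the two limits are recombined via the quantum It\^o product formula. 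Your approach trades this structural splitting for a single block expansion of $R_\tau V_\tau$; what the paper's splitting buys is the pointer operator $Z_t$ itself, which it considers of independent interest, and an explicit probabilistic identification of the limit.

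However, your last paragraph contains a genuine gap, and it sits exactly at the point you flag as ``where the real work concentrates'': the interplay between the spatial projection $P_{\delta\Z}$ and the band truncation. You leave this unresolved, and the mechanism you conjecture --- a telescoping cancellation of projection errors at It\^o order --- is not how it works; trying to control an accumulated error of size $O(\delta)$ per step over $\ent{t/\tau}\sim t/\delta^2$ steps would indeed be hopeless. The resolution is an exact algebraic identity, which is how the paper handles it inside the proof of Proposition \ref{prop:zt}: translation by exactly one lattice spacing maps functions constant on the intervals $[x,x+\delta)$, $x\in\delta\Z$, to functions of the same kind, so $P_{\delta\Z}$ \emph{commutes} with $e^{\pm\delta\partial_x}$, and it trivially commutes with $V_\tau$, $N$, $H$, which act only on $\hh_G\otimes\hh_p$. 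Hence, writing the one-step unitary as $l_{\tau,n}=\tilde{l}_{\tau,n}P_{\delta\Z}$ with $\tilde{l}_{\tau,n}$ projection-free, all the intermediate projections collapse into a single factor,
\begin{align*}
\gu_{\tau,n}=\tilde{l}_{\tau,n}P_{\delta\Z}\,\tilde{l}_{\tau,n-1}P_{\delta\Z}\cdots\tilde{l}_{\tau,1}P_{\delta\Z}=\tilde{l}_{\tau,n}\cdots\tilde{l}_{\tau,1}\,P_{\delta\Z}~,
\end{align*}
so there is no per-step error to accumulate at all. The Attal--Pautrat theorem is then applied to the projection-free product, whose one-step unitary genuinely is the clean exponential and preserves each $\dd_C$, and the single remaining factor $P_{\delta\Z}$ converges strongly to $I$, so $\gu_{\tau,\ent{t/\tau}}$ converges strongly by uniform boundedness of all the operators involved. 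Without the commutation $[P_{\delta\Z},e^{\pm\delta\partial_x}]=0$ your proof does not close; with it, it closes immediately.
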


\begin{rem}
\begin{enumerate}
\item This theorem can probably be generalized to cases where $N$ and $H$ depends on the position $x$, but this would require to extend non-trivially the theorem of Attal and Pautrat, the issue of the non-boundedness of $\partial_x$ being harder to bypass when $N$ and $\partial_x$ are not commuting.
\item Equation \eqref{eq:HPOQBM} is a Hudson-Parthasarathy Equation of the form of Theorem \ref{theo:HP}, with $N$ replaced by $\tilde{N}=N-\partial_x$ and $H$ replaced by $\tilde{H}=H-\frac{i}{2}(N^*\partial_x+\partial_x N)$. 
\item The operator $\partial_x$ is unbounded, so we cannot directly apply Theorem \ref{theo:HP} to show the existence of a solution $U_t$, neither Theorem \ref{theo:AP} to show the convergence. Instead, we will break $\gu_{\tau, n}$ in two parts: one which is solution of a Hudson-Parthasarathy Equation with bounded coefficients, and one which is solution of a Hudson-Parthasarathy Equation with unbounded coefficients but which is very simple.
\item Hudson-Parthasarathy Equations with unbounded coefficients have been studied by Fagnola in \cite{Fagnola2006} and Fagnola and Wills in \cite{FagnolaWillsUnbounded03} in more complex cases and with more general method.
\end{enumerate}
\end{rem}

\begin{proof}
 We treat the convergence of $R_\tau$ and of $V_\tau$ separately. First, let us consider the isomorphism $R_\tau=D_\tau\otimes P_++D_{-\tau}\otimes P_-$ defined in Paragraph \ref{subsub:discreteOQBM}. We write $R_{\tau, n}=\calg_n R_\tau \calg_n^*$ the corresponding operator acting on the toy Fock space, and $r_{\tau, n}=(\calg_{\delta \Z}\otimes \calg_\tau)) R_{\tau,n} (\calg_{\delta \Z}\otimes \calg_\tau)^*$. Let us consider their product 
\[
z_{\tau, n}=r_{\tau,n}r_{\tau,n-1}\cdots r_{\tau,1}~.
\]

Note that $V_\tau$ is not acting on $\hh_z$ and $Z_\tau$ is not acting on $\hh_G$, so $\calg_n^* Z_\tau \calg_n$ commutes with $\calg_k^* V_\tau \calg_k$ for any $n>k$. Thus we have 
\[
\gu_{\tau, n}=z_{\tau, n} u_{\tau, n}~.
\]
We already know that $u_{\tau, \ent{t/\tau}}$ converges to some operator $U_t$ by Theorem \ref{theo:AP}. Let us consider the limit of the operator $z_{\tau, n}$.

{\bf The pointer process $Z_t$:}

\begin{prop}\label{prop:zt}
For any $t\in \R_+$ the operator $z_{\tau,\ent{t/\tau}}$ strongly converges to a unitary operator $Z_t$. The process $(Z_t)_{t\in \R_+}$ satisfies the following quantum SDE on the space $H^2(\R)\otimes_{alg}~ \eps(L^2(\R))$.
\begin{align}\label{eq:zt}
dZ_t&=\left(\frac{1}{2}\partial_x^2 dt-\partial_x(da^1_0(t)+da^0_1(t))\right) Z_t~.
\end{align}
In the probabilistic representation, $Z_t$ is explicit: for any function $f\in L^2(\R)$ and any random variable $A\in L^2(\calw, \mu)$ we have
\[
(Z_t fA)(x)=f(x-W_t)A~.
\]
\end{prop}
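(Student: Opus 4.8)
The plan is to exploit the rigid algebraic structure of $R_\tau$, which turns $z_{\tau,n}$ into an honest position-translation, and then to read off both the explicit formula and the SDE from that.

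First I would rewrite $R_\tau$ as a single exponential. Since $A=\ket{0}\bra{1}+\ket{1}\bra{0}$ has eigenvalues $\pm 1$ with spectral projections $\ket{\pm}\bra{\pm}=P_\pm$, and since under the identification $\calg_{\delta\Z,\R}$ the shift $D_\tau$ acts as $e^{-\delta\partial_x}$ on the step functions (it maps them among themselves), one gets $R_\tau=D_\tau\otimes P_++D_\tau^*\otimes P_-=\exp(-\delta\partial_x\otimes A)$, because $P_+-P_-=A$. The operators $A(k):=a^1_0(k)+a^0_1(k)$ attached to distinct sites $k$ commute, and all commute with $\partial_x$; hence the factors in $z_{\tau,n}=r_{\tau,n}\cdots r_{\tau,1}$ telescope (each intermediate $(\calg_{\delta\Z}\otimes\calg_\tau)^*(\calg_{\delta\Z}\otimes\calg_\tau)=I$) and combine into
\[
z_{\tau,n}=(\calg_{\delta\Z}\otimes\calg_\tau)\exp\!\big(-\partial_x\otimes\delta\Sigma_n\big)(\calg_{\delta\Z}\otimes\calg_\tau)^*,\qquad \Sigma_n=\sum_{k=1}^{n}A(k).
\]
This exponential genuinely preserves $\hh_{\tau,z}\otimes T\Phi$: on the eigenspace of $\Sigma_n$ for the integer eigenvalue $m$ it acts as the shift $e^{-\delta m\partial_x}=D_\tau^m$, which keeps step functions among step functions, so the conjugation above is legitimate.

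Next I would pin down the limit of the exponent. By the construction of the quantum noises recalled above, $\calg_\tau(\delta\Sigma_n)\calg_\tau^*=P_\tau\,(a^1_0+a^0_1)([\tau,\tau(n+1)])\,P_\tau$ converges strongly, for $n=\ent{t/\tau}$, to $a^1_0(t)+a^0_1(t)$, which in the probabilistic representation is multiplication $M_{W_t}$ by the Brownian motion. I would then define the candidate limit directly by $(Z_tF)(x,\omega)=F(x-W_t(\omega),\omega)$, i.e. $Z_t=\exp(-\partial_x\otimes M_{W_t})$ in the spectral sense (shift the position by the value of $W_t$); it is unitary, being a fibrewise translation. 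The SDE follows from the classical It\^o formula: for $u\in H^2(\R)$ the process $v(t,\cdot)=u(\cdot-W_t)$ satisfies $dv=\tfrac12\partial_x^2 v\,dt-\partial_x v\,dW_t$, and substituting the operatorial dictionary $dW_t=da^1_0(t)+da^0_1(t)$ yields exactly \eqref{eq:zt}; I would certify this by checking the defining identity \eqref{eq:int_coherent} of the quantum stochastic integral against coherent vectors on $H^2(\R)\otimes_{alg}\eps(L^2(\R))$, where $\partial_x^2$ is meaningful.

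The main obstacle is the strong convergence $z_{\tau,\ent{t/\tau}}\to Z_t$, because $\partial_x$ is unbounded and $\calg_\tau,\calg_{\delta\Z}$ are only isometries (so $P_\tau\neq I$, $P_{\delta\Z}\neq I$). My strategy is to reduce everything to the scalar strong convergence of the exponent just established, using that shifts are uniformly unitary and Lipschitz on a core: for $f\in H^1(\R)$ one has the fibrewise bound $\norm{\mathrm{shift}_a f-\mathrm{shift}_b f}\le|a-b|\,\norm{\partial_x f}$, so that on product vectors $f\otimes\Psi$ with $f\in H^1$ and $\Psi$ nice,
\[
\big\|z_{\tau,n}(f\otimes\Psi)-Z_t(f\otimes\Psi)\big\|\le \norm{\partial_x f}\,\big\|\big(\calg_\tau(\delta\Sigma_n)\calg_\tau^*-M_{W_t}\big)\Psi\big\|+o(1),
\]
the $o(1)$ absorbing the errors from $P_\tau,P_{\delta\Z}\to I$ strongly. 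Since the first factor tends to $0$, such $f\otimes\Psi$ are total, and $\norm{z_{\tau,n}}\le 1$, a standard three-$\epsilon$ argument yields strong convergence to the unitary $Z_t$ on all of $\hh_z\otimes\Phi$. The delicate point to watch is that $\calg_\tau(\delta\Sigma_n)\calg_\tau^*$ is a \emph{compressed} (not exact) multiplication operator; I would handle this by carrying out the shift estimate in the toy-Fock picture, where $z_{\tau,n}$ is an exact fibrewise shift by the scaled random walk $\delta\Sigma_n$, and only then transport to $\Phi$ through $\calg_\tau$, invoking the Attal--Pautrat approximation of the random walk by the Brownian motion.
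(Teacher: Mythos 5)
Your algebraic core is correct and genuinely different from the paper's route: since the operators $A(k)=a^1_0(k)+a^0_1(k)$ commute with each other and with $\partial_x$, the product $z_{\tau,n}$ really is an exact fibrewise shift by the operator-valued random walk $\delta\Sigma_n$, and your plan of defining $Z_t$ directly as the shift by $W_t$ and verifying \eqref{eq:zt} by the classical It\^o formula tested on coherent vectors is sound (it coincides with the paper's own identification step). The paper proceeds differently: it Taylor-expands $e^{\pm\delta\partial_x}$ on the bandlimited spaces $\dd_C$, where $\partial_x$ is bounded, applies the Attal--Pautrat theorem there to get convergence and the QSDE simultaneously, and only then identifies the limit with the explicit shift by uniqueness of the solution.

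However, your convergence step has a genuine gap. The inequality
\[
\norm{z_{\tau,n}(f\otimes\Psi)-Z_t(f\otimes\Psi)}\leq \norm{\partial_x f}\,\norm{\left(\calg_\tau(\delta\Sigma_n)\calg_\tau^*-M_{W_t}\right)\Psi}+o(1)
\]
would follow from the fibrewise Lipschitz bound only if the two shift generators could be \emph{jointly} diagonalized, i.e.\ only if $S_{\tau,n}:=\calg_\tau(\delta\Sigma_n)\calg_\tau^*$ and $M_{W_t}$ commuted. They do not: $S_{\tau,n}$ is the compression $P_\tau M P_\tau$ of a multiplication operator, and $P_\tau$ does not commute with $M_{W_t}$, because multiplying by an increment $W_{\tau(k+1)}-W_{\tau k}$ creates a diagonal second-chaos component orthogonal to $T_\tau\Phi$ (concretely, $[P_\tau M_{W_{2\tau}-W_\tau}P_\tau,\,M_{W_t}]\ket{\Omega}\neq 0$). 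Expanding $\norm{\cdot}^2$, the cross terms involve the complex measures $\scal{Q_m\Psi,\,dE_w\Psi}$, where $Q_m$ are the spectral projections of $\Sigma_n$ and $E$ is the spectral measure of $M_{W_t}$; these have no sign and the estimate breaks down. Your announced remedy does not close this: $W_t$ is not an operator on the toy Fock space, so the comparison cannot be \enquote{carried out in the toy-Fock picture and then transported} — after transporting through $\calg_\tau$ you face exactly the same non-commuting pair. The gap is repairable, but by a different mechanism: either (i) write $z_{\tau,n}=e^{-i(P\otimes S_{\tau,n})}(P_{\delta\Z}\otimes P_\tau)$ and $Z_t=e^{-i(P\otimes M_{W_t})}$ with $P=-i\partial_x$, note that the exponents are self-adjoint, and upgrade the strong convergence of the exponents on a common core (which is what the toy-Fock approximation of the noises gives you) to strong convergence of the unitary groups via strong resolvent convergence, checking the common-core hypothesis for $P\otimes M_{W_t}$; or (ii) restrict to $f\in\dd_C$ and let the Attal--Pautrat theorem do the work, which is precisely the paper's proof.
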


\begin{proof}
Note that $\calg_{\delta \Z}~ D_\tau~ \calg_{\delta \Z}^*=e^{-\partial_x}P_{\delta\Z}$, since $e^{-\delta\partial_x}$ is the operator of translation by $\delta$ on $L^2(\R)$. Moreover,
\[
\calg_\tau P_{\pm}(n) \calg_\tau^*=\frac{1}{2}\left(a^0_0(\tau,n)+a^1_1(\tau,n)\pm a^1_0(\tau,n)\pm a^0_1(n) \right)
\]
 so we have
\[
r_{\tau, n}=\frac{e^{-\delta \partial_x}+e^{\delta \partial_x}}{2}\left( a^0_0(\tau,n)+a^1_1(\tau,n)\right)P_{\delta\Z}+ \frac{e^{-\delta \partial_x}-e^{\delta \partial_x}}{2}\left(a^1_0(\tau,n)+ a^0_1(\tau,n) \right)P_{\delta \Z}~.
\]

We want to write $e^{-\delta\partial_x}\simeq I-\delta \partial_x+\frac{1}{2}\delta^2\partial_x^2$. Since $\partial_x$ is unbounded, it cannot be done directly. Let us consider the space $\dd_C\subset L^2(\R)$ of $C$-bandlimited functions for $C>0$ : Writing $\ff$ the Fourier transform, the space $\dd_C$ is defined as
\[
\dd_C=\set{f \in L^2(\R)~|~ \ff f \text{ is supported in }[-C,C]}~.
\] 

This space is stable by $\partial_x$ and $\bigcup_{C>0} \dd_C$ is dense in $L^2(\R)$. Restricted to $\dd_C$, the operator $\partial_x$ is bounded, so we can expand the exponential. However, the space $\dd_C$ is not stable by $P_\delta$, so we introduce
\[
\tilde{r}_{\tau, n}=\frac{e^{-\delta \partial_x}+e^{\delta \partial_x}}{2}\left( a^0_0(\tau,n)+a^1_1(\tau,n)\right)+ \frac{e^{-\delta \partial_x}-e^{\delta \partial_x}}{2}\left(a^1_0(\tau,n)+ a^0_1(\tau,n) \right)
\]
so that $r_{\tau, n}=\tilde{r}_{\tau,n}P_\delta$. We also write $\tilde{z}_{\tau,n}=\tilde{r}_{\tau,n}\tilde{r}_{\tau,n-1}\cdots\tilde{r}_{\tau, 1}$. Since $P_{\delta\Z}$ commutes with $\tilde{r}_{\tau, k}$ for all $k$, we have that $z_{\tau,n}=\tilde{z}_{\tau,n}P_{\delta\Z}$.
 The space $\dd_C$ is stable by $\tilde{r}_{\tau}$, and on this space, since $\partial_x$ is bounded we have
\[
\tilde{r}_{\tau, n}=\left(I+\frac{1}{2}\partial_x^2+O(\delta^3)\right)a^0_0(\tau,n)+O(\delta)a^1_1(\tau,n)+\left(-\delta \partial_x+O(\delta^2)\right) (a^1_0(\tau,n)+a^0_1(n))
\]
With $\delta=\sqrt{\tau}$, this sets us under the hypothesis of Theorem \ref{theo:AP}, with $K=0$ and $L=-\partial_x$.  Thus, $\tilde{z}_{\tau, \ent{t/\tau}}$ converges strongly (on $\dd_C$) to a unitary operator $Z^C_t$ which is solution of \eqref{eq:zt}. All the $Z_t^C$'s coincide on their common domain of definition, and they are unitary, so we can extend them to $H^2(\R)$ and $L^2(\R)$. They commute with $\partial_x$, so they are also unitary for the space $H^2(\R)$. Since the $\tilde{r}_{\tau,\ent{t/\tau}}$ are unitary and converge to $Z_t$ strongly on a dense subspace, they converge strongly on the full space. Moreover, $P_{\delta\Z}$ converges strongly to $I$, so $z_{\tau, \ent{t/\tau}}=\tilde{z}_{\tau,\ent{t/\tau}}P_{\delta\Z}$ also converge strongly to $Z_t$.

Finally, by the classical It\^o formula, for any $\cc^2$ function
\[
df(x-W_t)=f(x)-\int_0^t \partial_x f(x-W_s)dW_s+\frac{1}{2} \int_0^t \partial_x^2 f(x-W_s) ds~.
\] 
Thus, if we write $(\tilde{Z}_t fA)(x)=f(x-W_t)A$ for any $f\in L^2(\R)$, the processes $\tilde{Z}_t$ and $Z_t$ follow the same quantum SDE on $\cc^2$ functions. Since they have the same initial state $Z_0=I$, this implies that they are equal. 
\end{proof}

As a consequence of this proposition, the operators $\gu_{\tau,\ent{t/\tau}}$ converges to $\gu_t:=Z_t U_t$ and the It\^o product formula yields the stochastic Equation \eqref{eq:HPOQBM}.

\end{proof}

\begin{rem}
\begin{enumerate}
\item It is also possible to prove Theorem \ref{theo:HPOQBM} by using the Attal-Pautrat Theorem directly on $U_t$ restricted to $\hh_G\otimes_{alg}~ \dd_C \otimes_{alg}~ \Phi$ since $\hh_G\otimes_{alg}~\dd_C$ is stable by $H$ and $N$. However, we believe that the pointer unitary operator $Z_t$ has its own interest.
\item Note that $Z_t$ does not commute with $U_t$, we only have the commutation of $U_s$ and $Z_{t,s}:=Z_t Z_s^*$ for $s\leq t$. The formula $\gu_t=Z_t U_t$ is consistent with the construction of the discrete OQBM: we make the system evolve according to the unitary operator $U_t$, and we apply the operator $Z_t$ which implements the translation by $W_t$ to the position of the quantum particle.
\end{enumerate}
\end{rem}

\subsection{From the Hudson-Parthasarathy Equation to the Lindblad Equation}\label{subsub:Lindbladproof}

The family of operators $(\gu_t)_{0\leq t}$ and of states $\rho_{tot, t}=\gu_t (\rho_S\otimes \ket{\Omega}\bra{\Omega}) \gu_t^*$ consists in the most complete description of the OQBM. In this subsection, we show how the Lindbladian picture of the OQBM result from this unitary description.

Let us consider $\tilde{\Lambda}_{ \tau}$ as a quantum channel on $\hh_G\otimes \hh_z$ with the help of the isometry $\calg_{\delta\Z}~:~l^2(\delta\Z)\rightarrow L^2(\R)$; for $\rho\in \hh_G\otimes L^2(\R)$ we write
\[
\Lambda_{S, \tau}(\rho)=\calg_{\delta\Z}\tilde{\Lambda}_{ \tau}\left( \calg_{\delta\Z}^* \rho \calg_{\delta\Z}\right)\calg_{\delta\Z}^*
\]
 Now we are ready to study the convergence as $\tau\rightarrow 0$.
 
 \begin{prop}\label{prop:convergence_lindblad}
 For any initial state $\rho\in \gs(\hh_G\otimes\hh_z)$ and for any $t\geq 0$ the state $\Lambda_{S, \tau}^\ent{t/\tau}(\rho)$ converges in trace norm to
 \[
 \Lambda_S^t(\rho)=\tra_{\Phi}\left( \gu_t (\rho\otimes\ket{\Omega}\bra{\Omega})\gu_t^*\right)~.
 \]
 \end{prop}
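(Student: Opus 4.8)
The plan is to combine the strong convergence of the dilation operators (Theorem~\ref{theo:HPOQBM}) with the Stinespring dilation identity for the iterated channel, while controlling the accumulation of the one-step error from Lemma~\ref{lem:dilation}. Introduce the genuinely dilated one-step channel
\[
\Theta_\tau(\rho)=\tra_{\hh_p}\left(R_\tau V_\tau(\rho\otimes\ket{0}\bra{0})V_\tau^* R_\tau^*\right),
\]
which is completely positive and trace preserving, and set $\sigma=\rho\otimes\ket{\Omega}\bra{\Omega}\in\cals^1(\hh_G\otimes\hh_z\otimes\Phi)$. The proof splits through the triangle inequality
\[
\norm{\Lambda_{S,\tau}^{\ent{t/\tau}}(\rho)-\Lambda_S^t(\rho)}_1\leq (I)+(II),
\]
where $(I)=\norm{\Lambda_{S,\tau}^{\ent{t/\tau}}(\rho)-\tra_\Phi(\gu_{\tau,\ent{t/\tau}}\,\sigma\,\gu_{\tau,\ent{t/\tau}}^*)}_1$ and $(II)=\norm{\tra_\Phi(\gu_{\tau,\ent{t/\tau}}\,\sigma\,\gu_{\tau,\ent{t/\tau}}^*)-\tra_\Phi(\gu_t\,\sigma\,\gu_t^*)}_1$.

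First I would treat $(I)$. By the repeated interactions construction of Paragraph~\ref{subsub:toyfock}, transported to the Fock space by $\calg_\tau$ (which fixes $\ket{\Omega}$) and to $L^2(\R)$ by $\calg_{\delta\Z}$, one has the exact dilation identity
\[
\tra_\Phi\left(\gu_{\tau,n}\,\sigma\,\gu_{\tau,n}^*\right)=\calg_{\delta\Z}\,\Theta_\tau^n\!\left(\calg_{\delta\Z}^*\rho\,\calg_{\delta\Z}\right)\calg_{\delta\Z}^*,
\]
the projection $P_{\delta\Z}\otimes P_\tau$ that is the initial space of the partial isometry $\gu_{\tau,n}$ being absorbed since $P_\tau\ket{\Omega}=\ket{\Omega}$. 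As $\Lambda_{S,\tau}=\calg_{\delta\Z}\tilde\Lambda_\tau(\calg_{\delta\Z}^*\,\cdot\,\calg_{\delta\Z})\calg_{\delta\Z}^*$ and Lemma~\ref{lem:dilation} gives $\norm{\tilde\Lambda_\tau-\Theta_\tau}_{1\to1}\leq C\tau\sqrt{\tau}$, the telescoping
\[
\tilde\Lambda_\tau^n-\Theta_\tau^n=\sum_{k=0}^{n-1}\tilde\Lambda_\tau^{\,n-1-k}(\tilde\Lambda_\tau-\Theta_\tau)\Theta_\tau^k
\]
together with the trace-norm contractivity of the completely positive trace-preserving maps $\tilde\Lambda_\tau$ and $\Theta_\tau$ on Hermitian operators yields $\norm{\tilde\Lambda_\tau^n-\Theta_\tau^n}_{1\to1}\leq nC\tau\sqrt{\tau}$. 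With $n=\ent{t/\tau}$ this is $O(\sqrt{\tau})\to0$, so $(I)\to0$.

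For $(II)$ I would use the elementary fact that if contractions satisfy $T_j\to T$ and $T_j^*\to T^*$ strongly, then $T_j\sigma T_j^*\to T\sigma T^*$ in trace norm for every $\sigma\in\cals^1$ (check it on rank-one operators, extend by linearity to finite rank, and conclude by a finite-rank approximation of $\sigma$ using the uniform bound $\norm{T_j\sigma T_j^*}_1\leq\norm{\sigma}_1$). Theorem~\ref{theo:HPOQBM} provides $\gu_{\tau,\ent{t/\tau}}\to\gu_t$ strongly with $\gu_t$ unitary; since the $\gu_{\tau,n}$ are contractions and $\gu_t$ is unitary, the adjoints converge strongly as well: writing $\eta=\gu_t^*\xi$ one has $\langle\gu_{\tau,n}^*\xi,\eta\rangle=\langle\xi,\gu_{\tau,n}\eta\rangle\to\norm{\xi}^2$ while $\norm{\gu_{\tau,n}^*\xi}\leq\norm{\xi}=\norm{\eta}$, which forces $\gu_{\tau,n}^*\xi\to\eta=\gu_t^*\xi$. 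Since $\tra_\Phi$ is a trace-norm contraction, applying the fact with $T_j=\gu_{\tau,\ent{t/\tau}}$ gives $(II)\to0$.

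The main obstacle is upgrading the merely strong convergence of the dilations to trace-norm convergence of the reduced states in $(II)$: strong convergence of $\gu_{\tau,n}$ alone does not control $\gu_{\tau,n}^*$, and it is essential that the limit $\gu_t$ be \emph{unitary} both for the adjoints to converge strongly and for the trace-norm continuity lemma to apply. A second delicate point, in $(I)$, is that the one-step dilation error of Lemma~\ref{lem:dilation} is only $O(\tau\sqrt{\tau})$ yet it is iterated $\ent{t/\tau}\approx t/\tau$ times; the extra half power of $\tau$ gained per step is exactly what guarantees that the accumulated error $O(\sqrt{\tau})$ vanishes.
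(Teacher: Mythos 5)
Your proof is correct and takes essentially the same route as the paper: the repeated-interactions dilation identity for $\tilde{\Lambda}_\tau$, the strong convergence $\gu_{\tau,\ent{t/\tau}}\to\gu_t$ from Theorem \ref{theo:HPOQBM}, and contractivity of the partial trace. The only difference is that you make explicit two steps which the paper's very short proof asserts without detail, namely the telescoping control of the accumulated $O(\tau\sqrt{\tau})$ error from Lemma \ref{lem:dilation} and the upgrade from strong convergence of contractions (to a unitary limit) to trace-norm convergence of the conjugated states, and you carry out both correctly.
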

 
 \begin{proof}
Because of the dilation of $\tilde{\Lambda}_\tau$ (Lemma \ref{lem:dilation}) we have
\[
\Lambda_{S, \tau}^n(\rho)=\tra_{\Phi} \left(\gu_{\tau,n} (\rho\otimes \ket{\Omega}\bra{\Omega})\gu_{\tau, n}^*\right)~.
\]
But by Theorem \ref{theo:HPOQBM} the state $\gu_{\tau,n} (\rho\otimes \ket{\Omega}\bra{\Omega})\gu_{\tau, n}^*$ converges in trace norm to $\gu_t (\rho\otimes\ket{\Omega}\bra{\Omega})\gu_t^*$, and the convergence is preserved by applying the partial trace.
 \end{proof}

 The semigroup $\Lambda_S^t$ is strongly continuous in $t$, but not continuous for the trace norm, so its generator is not defined on the whole space $\gs(\hh_G\otimes \hh_z)$ but only on the space of Sobolev states, as defined below.

\begin{defi}\label{defi:sobolev_state}
For any Hilbert space $\hh$ and any $k\in \N$ the set $W^k\gs(\hh, \hh_z)$ is the set of states $\rho$ on $\bb(\hh\otimes \hh_z)$ which admits a kernel $(x, y)\in \R^2 \mapsto K_\rho(x, y)\in \cals^1(\hh)$ which is in the Sobolev space $W^{k, 1}(\R^2,\cals^1(\hh))$. Equivalently, it is the space of states $\rho \in \gs(\hh\otimes \hh_z)$ such that for any $n\leq k$ the operator $[\rho, \abs{\partial_x}^n]$ is a bounded operator on $\hh\otimes W^{2,k}(\R)$.

The set  $W^k\gs(\hh, L^\infty(\R))$ is the set of states $\rho$ on $\bb(\hh)\otimes L^\infty(\R)$ which admits a kernel $x\in \R \mapsto K_\rho(x)\in \cals^1(\hh)$ which is in the Sobolev space $W^{k, 1}(\R, \cals^1(\hh))$. 
\end{defi}

We can now express the Lindblad Equation for $\Lambda_S^t$. 

\begin{theo}\label{theo:LindbladOQBM1}
For any initial state $\rho \in W^2\gs(\hh_G, \hh_z)$ the state $\rho_S(t)=\Lambda^t_S(\rho)$ is in $W^2\gs(\hh_G, \hh_z)$ for all $t>0$. Moreover, it satisfies the following equation:
\begin{align}\label{eq:LindbladOQBM1}
\frac{d}{dt}\rho_S(t)=\tilde{\call}(\rho_S(t))
\end{align}
where 
\[
\tilde{\call}(\rho)=-i[\tilde{H}, \rho]+L\rho L^*-\frac{1}{2}\left\{ L^*L, \rho\right\}
\]
with $L=N-\partial_x$ and $\tilde{H}=H-\frac{i}{2}\partial_x(N+ N^*)$. 

Writing $ K_t(x, y)$ the kernel of $\rho_\cc(t)$ this equation becomes
\begin{align}\label{eq:LindbladOQBMK}
\frac{d}{dt} K_t(x, y)=\call(K_t(x,y))+\frac{1}{2}\left(\frac{\partial}{\partial_x}+\frac{\partial}{\partial_y}\right)^2 K_t(x, y)-N\left(\frac{\partial}{\partial x}+\frac{\partial}{\partial y}\right)K_t(x,y)-\left(\frac{\partial}{\partial x}+\frac{\partial}{\partial y} \right)K_t(x,y) N^*~.
\end{align}
\end{theo}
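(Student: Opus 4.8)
The plan is to read the generator of the semigroup $\Lambda_S^t$ off the Hudson--Parthasarathy equation \eqref{eq:HPOQBM2} by the same quantum-It\^o computation that produces a Lindblad generator from a quantum Langevin equation, and then to rewrite the resulting super-operator on kernels. Recall from Theorem \ref{theo:HPOQBM} and its second remark that \eqref{eq:HPOQBM2} is the standard equation of Theorem \ref{theo:HP} with $N$ replaced by $L=N-\partial_x$ and $H$ by $\tilde{H}=H-\frac{i}{2}\partial_x(N+N^*)$, so that the drift is $G=-i\tilde{H}-\frac12 L^*L$. First I would work in the Heisenberg picture: for a bounded observable $X$ on $\hh_G\otimes\hh_z$ set $j_t(X)=\gu_t^*(X\otimes I_\Phi)\gu_t$ and apply Proposition \ref{prop:quantum_ito} using \eqref{eq:HPOQBM2} and its adjoint. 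The $dt$-part yields $j_t(\tilde{\call}^*(X))$ with $\tilde{\call}^*(X)=i[\tilde{H},X]+L^*XL-\frac12\{L^*L,X\}$, while the remaining terms are quantum stochastic integrals against $da^1_0$ and $da^0_1$. Pairing with $\rho\otimes\ket\Omega\bra\Omega$ and using $\tr{\rho_S(t)X}=\tr{(\rho\otimes\ket\Omega\bra\Omega)\,j_t(X)}$, the creation/annihilation integrals have vanishing vacuum expectation, so $\frac{d}{dt}\tr{\rho_S(t)X}=\tr{\rho_S(t)\,\tilde{\call}^*(X)}=\tr{\tilde{\call}(\rho_S(t))\,X}$, which is the weak form of \eqref{eq:LindbladOQBM1}.

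Since $L$, $\tilde{H}$ and hence $\tilde{\call}$ involve the unbounded $\partial_x$, Proposition \ref{prop:quantum_ito} does not apply verbatim. I would first establish the identity above for a $\sigma$-weakly dense set of test observables $X$ with smooth compactly supported kernel (or of the form $A\otimes M_f$ with $f$ Schwartz), on which all the integrals converge absolutely on $\dd\otimes_{alg}\eps(\mm)$; the hypothesis $\rho\in W^2\gs(\hh_G,\hh_z)$ is exactly what makes the second-order terms $\tr{\rho_S(t)\,\partial_x^2X}$ finite and $t\mapsto\tr{\rho_S(t)X}$ continuously differentiable. To show that $\rho_S(t)$ stays in $W^2\gs(\hh_G,\hh_z)$ I would use the factorization $\gu_t=Z_tU_t$ from the proof of Theorem \ref{theo:HPOQBM}: $U_t$ solves a Langevin equation with the bounded coefficients $N,H$, so conjugation by $U_t$ followed by the vacuum partial trace preserves the Sobolev class, while by Proposition \ref{prop:zt} $Z_t$ acts as the translation $(Z_t fA)(x)=f(x-W_t)A$, an isometry of $H^2(\R)$, and averaging over the Wiener measure is a convolution that smooths rather than roughens the kernel. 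Combining these, $\rho_S(t)$ keeps a kernel in $W^{2,1}(\R^2,\cals^1(\hh_G))$ and the weak identity upgrades to \eqref{eq:LindbladOQBM1} in trace norm.

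It remains to convert $\tilde{\call}$ into the kernel form \eqref{eq:LindbladOQBMK}. The key bookkeeping is that $N$ acts on $\hh_G$ and $\partial_x$ on $\hh_z$, hence they commute, and on kernels a left multiplication by $\partial_x$ becomes $\partial_x$ in the first variable while a right multiplication by $\partial_x$ becomes $-\partial_y$ in the second (integration by parts). Writing $L=N-\partial_x$, $L^*=N^*+\partial_x$ and expanding $L\rho L^*-\frac12\{L^*L,\rho\}-i[\tilde{H},\rho]$, the purely bounded part collapses to $\call(K(x,y))$; the three second-order contributions combine to $\frac12(\partial_x+\partial_y)^2K$; and the first-order contributions, after the cancellations forced by the imaginary correction $-\frac{i}{2}\partial_x(N+N^*)$ in $\tilde{H}$ (which exactly kills the spurious $N^*$-on-the-left and $N$-on-the-right terms), reduce to $-N(\partial_x+\partial_y)K-(\partial_x+\partial_y)K\,N^*$. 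This is \eqref{eq:LindbladOQBMK}, and the equivalence of the two formulations gives the theorem.

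The hard part throughout is the unboundedness of $\partial_x$: it forbids a direct appeal to the quantum It\^o formula and to the off-the-shelf Lindblad-from-QSDE derivation, it is precisely why one must restrict to the Sobolev states of Definition \ref{defi:sobolev_state}, and it makes the interchange of $\frac{d}{dt}$, the partial trace $\tra_\Phi$, and the quantum stochastic integral the genuinely delicate steps. The factorization $\gu_t=Z_tU_t$ is what makes the regularity tractable, isolating the unbounded dynamics into the explicit translation $Z_t$; conversely, the clean cancellation of the first-order terms in the kernel computation is a reassuring internal check that $\tilde{H}$ carries exactly the imaginary piece $-\frac{i}{2}\partial_x(N+N^*)$.
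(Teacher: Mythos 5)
Your proposal follows essentially the same route as the paper: the Heisenberg-picture quantum It\^o computation on $\gu_t^*(A\otimes I_\Phi)\gu_t$ with the vacuum expectation killing the $da^1_0$, $da^0_1$ integrals, a density argument over regular test observables to pass from the weak identity to \eqref{eq:LindbladOQBM1}, Sobolev preservation via the commutation of $\gu_t=Z_tU_t$ with $\partial_x$, and the kernel formulas $K_{\partial_x\rho}=\partial_x K_\rho$, $K_{\rho\partial_x}=-\partial_y K_\rho$ to derive \eqref{eq:LindbladOQBMK}. The first-order cancellation you check (the $-\frac{i}{2}\partial_x(N+N^*)$ term in $\tilde{H}$ removing the cross terms) is correct and matches the paper's computation.
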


Equation \eqref{eq:LindbladOQBM} can be formally obtained by writing $e^{-\delta \partial_x}\simeq I-\delta\partial_x +\frac{\tau}{2} \partial_x^2$, but let us prove it from the dilation of $\Lambda_S^t$.

\begin{proof}

{\bf The operator $\gu_t$ preserves the space $\cals_2(\hh_G\otimes \Phi, \hh_z)$ :}

 The operator $\gu_t$ commutes with $\partial_x$ (since $Z_t$ and $S_t$ both commute with $\partial_x$) so for any operator $\rho_{tot}\in W^2\gs(\hh_G\otimes \phi)$ we have $[\gu_t\rho \gu_t^*, \abs{\partial_x}^n]=\gu_t[\rho, \abs{\partial_x}^n] \gu_t^*$.

Thus, if $\rho\in W^2\gs(\hh_G, \hh_z)$ then $\gu_t (\rho\otimes \ket{\Omega}\bra{\Omega}) \gu_t^*\in W^2\gs(\hh_G\otimes\Phi, \hh_z)$ and so $\rho_S(t)\in W^2\gs(\hh_G, \hh_z)$.

To obtain the Lindblad Equation we use the Heisenberg representation: for any observable $A\in W^{2, 1}(\R^2, \bb(\hh_G))$ we have 
\[
\tr{\rho_{S,t}A}=\tr{\rho\bra{\Omega} \gu_t^* (A\otimes I_{\Phi}) \gu_t \ket{\Omega}}~.
\]
Using the It\^o formula applied to $\gu_t^* A \gu_t$ on the domain $\hh_G\otimes_{alg}~H^2(\R)\otimes_{alg}~\eps(L^2(\R))$, we obtain that 
\[
\gu_t^* A \gu_t=A+\int_0^t \gu_s^* \tilde{\call}^*(A) \gu_s ds+R_t
\]
where $R_t$ is a quantum stochastic integral with respect of terms of the form $da^i_j(t)$ with $(i, j)\neq (0,0)$, so that $\bra{\Omega}R_t\ket{\Omega}=0$. Thus
\begin{align*}
\tr{\rho_{S, t}A}-\tr{\rho A}&=\int_0^t \tr{ \gu_s(\rho\otimes \ket{\Omega}\bra{\Omega}) \gu_s^* ~\tilde{\call}^*(A)}~ds\\
&=\int_0^t \tr{\tilde{\call}(\rho_{S, t}(s)) A} ds
\end{align*}
which implies Equation \eqref{eq:LindbladOQBM1} by density of $W^{2, 1}(\R^2, \bb(\hh_G))$ in $\bb(hh_G\otimes \hh_z)$. The equation on the kernel $K_\rho$ is obtained by using the following formulas: if $\rho\in W^1\gs(\hh_G, \hh_z)$ then $\partial_x \rho$ and $\rho \partial_x$ are kernel operators, with
\begin{align}\label{eq:partialxrho}
K_{\rho \partial_x}(x, y)&=-\frac{\partial}{\partial y} K_\rho(x, y) \\
K_{\partial_x \rho}(x, y)&=\frac{\partial}{\partial_x} K_\rho(x,y)~.
\end{align}
These formulas are obtained through integrations by parts.
\end{proof}

Let us consider the restriction on the algebra $\mm=\bb(\hh_G)\otimes L^2(\R)$, in order to obtain the Lindblad Equation on \enquote{diagonal states} (Equation \eqref{eq:LindbladOQBM}). As noted in the introduction, we consider states restricted to $\mm$ rather than states whose density matrix is in $\mm$, since $\mm$ contains no non-trivial trace-class operators.

\begin{theo}\label{theo:LindbladOQBM2}
There exists a semigroup of super-operators $(\Lambda_{\mm}^t)_{0\leq t}$ on $\gs(\mm)$ such that for any state $\rho\in \gs(\hh_S)$ with restriction $\rho_{\mm}$ to $\mm$, the restriction to $\mm$ of the state $\rho_t=\Lambda_S^t(\rho)$ is $\Lambda_{\mm}^t(\rho_\mm)$~.

If a state $\rho_\mm$ admits a kernel $x\mapsto Q_\rho(x)$ which is in $W^{2, 1}(\R, \cals^1(\hh_G))$ then $\rho_{\mm,t}=\Lambda_\mm^t(\rho_\mm)$ also admits a kernel $Q_t\in W^{2, 1}(\R, \cals^1(\hh_G))$ and we have
\begin{align}\label{eq:LindbladOQBMQ}
\frac{d}{dt}Q_t(x)=\call(Q_t(x))+\frac{1}{2}\frac{\partial^2}{\partial x^2} Q_t(x)-\left(N\frac{\partial}{\partial x} Q_t(x)+\left(\frac{\partial}{\partial x} Q_t(x)\right)N^*\right)~.
\end{align}
\end{theo}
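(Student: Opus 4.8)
The plan is to transport the already-constructed flow $\Lambda_S^t$ on the full algebra down to the subalgebra $\mm=\bb(\hh_G)\otimes L^\infty(\R)$, exploiting the fact that $\Lambda_S^t$ is compatible with restriction to $\mm$, and then to read off Equation \eqref{eq:LindbladOQBMQ} by restricting the kernel Equation \eqref{eq:LindbladOQBMK} to the diagonal. The whole argument rests on one structural observation: the right-hand side of \eqref{eq:LindbladOQBMK} depends on the space variables only through the combined derivative $\partial_x+\partial_y$, so that the evolution of the diagonal $x\mapsto K_t(x,x)$ closes on itself.

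\textbf{Step 1: well-definedness of $\Lambda_\mm^t$ on all of $\gs(\mm)$.} First I would show that the Heisenberg semigroup $(\Lambda_S^t)^*(A)=\bra{\Omega}\gu_t^*(A\otimes I_\Phi)\gu_t\ket{\Omega}$ preserves $\mm$. Using the factorization $\gu_t=Z_t U_t$ of Theorem \ref{theo:HPOQBM}, where $U_t$ acts trivially on $\hh_z$ and $Z_t$ implements the translation by $W_t$ of the position variable, one computes for $A=B\otimes M_f\in\mm$ that $Z_t^*(A\otimes I_\Phi)Z_t=B\otimes M_{f(\cdot+W_t)}$ still commutes with every multiplication operator in $x$; conjugating by $U_t$ (which also commutes with multiplication in $x$) and taking the vacuum expectation keeps this property, so the result is decomposable over $x$ with values in $\bb(\hh_G)$, i.e. an element of $\mm$. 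Hence $(\Lambda_S^t)^*(A)\in\mm$, and for any state $\rho$ on $\hh_S$ and $A\in\mm$,
\[
\tr{\Lambda_S^t(\rho)\,A}=\tr{\rho\,(\Lambda_S^t)^*(A)}=\tr{\rho_\mm\,(\Lambda_S^t)^*(A)},
\]
so $\Lambda_S^t(\rho)|_\mm$ depends only on $\rho_\mm$. I would take this as the definition of $\Lambda_\mm^t(\rho_\mm)$; positivity and normalization are inherited from $\Lambda_S^t$, the semigroup law follows from $(\Lambda_S^{t+s})^*=(\Lambda_S^{s})^*(\Lambda_S^{t})^*$, and contractivity together with density extends $\Lambda_\mm^t$ to all of $\gs(\mm)$.

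\textbf{Step 2: the kernel equation.} Given $\rho_\mm$ with density matrix function $Q_\rho\in W^{2,1}(\R,\cals^1(\hh_G))$, I would pick a regular extension $\rho\in W^2\gs(\hh_G,\hh_z)$ whose kernel restricts to $Q_\rho$ on the diagonal, for instance $K_\rho(x,y)=Q_\rho(\tfrac{x+y}{2})\,\varphi(x-y)$ with $\varphi$ smooth, compactly supported and $\varphi(0)=1$ (the change of variables $(x,y)\mapsto(\tfrac{x+y}{2},x-y)$ shows this lies in $W^{2,1}(\R^2,\cals^1(\hh_G))$). By Theorem \ref{theo:LindbladOQBM1} the evolved kernel $K_t$ stays in $W^{2,1}$ and solves \eqref{eq:LindbladOQBMK}, and by Step 1 the density matrix function of $\Lambda_\mm^t(\rho_\mm)=\Lambda_S^t(\rho)|_\mm$ is the diagonal $Q_t(x)=K_t(x,x)$. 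Restricting \eqref{eq:LindbladOQBMK} to $y=x$ and using $\frac{d}{dx}K_t(x,x)=(\partial_x+\partial_y)K_t(x,y)|_{y=x}$ and $\frac{d^2}{dx^2}K_t(x,x)=(\partial_x+\partial_y)^2K_t(x,y)|_{y=x}$, each occurrence of $\partial_x+\partial_y$ becomes a derivative of $Q_t$ and $\call(K_t)$ becomes $\call(Q_t)$, yielding exactly \eqref{eq:LindbladOQBMQ}; the closed form confirms independence of the chosen extension.

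\textbf{Main obstacle.} The hard part will be the passage to the diagonal at the level of Sobolev regularity. Since $W^{2,1}(\R^2)$ sits at the borderline of the Sobolev embedding into continuous functions, a naive trace theorem onto the line $\{x=y\}$ yields only $Q_t\in W^{1,1}(\R)$, one order short of the claimed $W^{2,1}(\R,\cals^1(\hh_G))$, and the chain-rule identities above are not automatic. I would resolve this by first establishing the identities for smooth compactly supported kernels and passing to the limit, and by using the parabolic smoothing of the $\tfrac12\partial_x^2$ term in the closed equation \eqref{eq:LindbladOQBMQ} to propagate—and in fact improve—the $W^{2,1}$ regularity of $Q_t$ directly, rather than reading it off the two-variable kernel. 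Making this regularity bootstrap rigorous is where the genuine work lies.
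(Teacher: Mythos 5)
Your Step 1 is sound and is precisely the paper's argument: the paper observes that $\gu_t^*\,\mm\,\gu_t\subset\mm\otimes\bb(\Phi)$ (because $Z_t$ merely translates the position by $W_t$ while $U_t$ does not act on $\hh_z$), so that $\tr{\Lambda_S^t(\rho)\,A}$ for $A\in\mm$ depends only on $\rho_\mm$, and this defines $\Lambda_\mm^t$.

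Step 2 is where you diverge from the paper, and it contains a genuine gap. The extension you propose, $K_\rho(x,y)=Q_\rho(\tfrac{x+y}{2})\varphi(x-y)$, is in general \emph{not} positive semi-definite, hence not a state, so Theorem \ref{theo:LindbladOQBM1} cannot be applied to it. Indeed its trace equals $\int_\R\tr{Q_\rho(u)}du=1$, while (changing variables to $u=\tfrac{x+y}{2}$, $v=x-y$) its Hilbert--Schmidt norm squared equals $\norm{\varphi}_{L^2}^2\int_\R\norm{Q_\rho(u)}_2^2\,du$; a positive operator of unit trace has Hilbert--Schmidt norm at most $1$, yet taking $Q_\rho(u)=q(u)\sigma$ with $\sigma$ a fixed state on $\hh_G$ and $q$ a narrow smooth bump of mass one (perfectly admissible $W^{2,1}$ data) makes this quantity arbitrarily large. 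So the construction fails, and nothing in your argument replaces it (one could try to extend $\Lambda_S^t$ and Theorem \ref{theo:LindbladOQBM1} by linearity to self-adjoint trace-class operators, but you would then still have to show your kernel defines a trace-class operator, and you do not address this). The second obstacle you flag yourself is also real: at $W^{2,1}(\R^2)$ regularity the second derivatives of $K_t$ are only $L^1(\R^2)$, and $L^1$ functions have no trace on the measure-zero diagonal $\{y=x\}$, so neither the restriction of \eqref{eq:LindbladOQBMK} to the diagonal nor the claim $Q_t\in W^{2,1}(\R,\cals^1(\hh_G))$ is justified; your proposed repair (smooth approximation plus a parabolic bootstrap on the closed equation) would require constructing the solution semigroup of \eqref{eq:LindbladOQBMQ} independently and identifying it with the $\Lambda_\mm^t$ of Step 1, which is unproven work. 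The paper avoids both obstacles entirely: it proves \eqref{eq:LindbladOQBMQ} \enquote{exactly the same way as} Theorem \ref{theo:LindbladOQBM1}, i.e.\ by applying the quantum It\^o formula in the Heisenberg picture directly to observables $A=B\otimes M_f\in\mm$ with $f$ a one-variable test function, pairing with $\rho_\mm$, and integrating by parts in the single variable $x$ --- no extension of $\rho_\mm$ and no restriction to a diagonal ever enters.
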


\begin{proof}

Note that $\gu_t^* \mm \gu_t \subset \mm\otimes \bb(\Phi)$ (because of the form of $Z_t$) so that for any $A\in \mm$ the expectancy $\tr{\Lambda_S^t(\rho) A}=\tr{(\rho\otimes\ket{\Omega}\bra{\Omega})\gu_t^* A \gu_t}$ only depends on the restriction of $\rho$ to $\mm$. This proves the existence of $\Lambda_\mm^t$. Equation  \eqref{eq:LindbladOQBMQ} is proved exactly the same way as Proposition \ref{theo:LindbladOQBM1}.
\end{proof}

\subsection{Hierarchy of the descriptions of the OQBM}

With the OQBM, we have many views on the same object, carrying more or less informations: 
\begin{enumerate}[label=\alph*)]
\item The state $\rho_{tot, t}=\gu_t (\rho_S\otimes \ket{\Omega}\bra{\Omega})\gu_t^*$ on $\hh_G\otimes \hh_z\otimes \Phi$ offers the most complete description.
\item The state $\rho_{tot, G, t}=U_t(\rho_G\otimes \ket{\Omega}\bra{\Omega})U_t^*=\tra_{\hh_z}(\rho_{tot, t})$ ignores the position of the particle, though its translation $W_t$ is still registered in $\Phi$.
\item The random state $\rrho_t$ with the random position $X_t$ ignores the quantum aspect of the position, but keeps tracks of the classical correlations between two different times.
\item The state $\rho_{S, t}=\tra_\Phi(\rho_{tot, t})=\Lambda_S^t(\rho_S)$ on $\bb(\hh_S)$ forgets about correlations between different times and the precise distribution of $\rrho_t$, but conserves a quantum view on the position.
\item The restriction of $\rho_{S, t}$ to $\mm=\bb(\hh_G)\otimes \cala_z$ with matrix density function $Q_t(x)=\E(\rrho_t |X_t=x)$: it forgets the correlations between different times and has only the classical information about the position. This is the smallest description where we have a closed equation for the evolution (Equation \eqref{eq:LindbladOQBMQ}) and which allows to compute the distribution of $X_t$.
\item The state $\rho_G=\tra_{\hh_z\otimes\Phi}(\rho_{tot, G, t})=\int_{x\in \R} Q_t(x) dx$ evolves according to the Lindbladian $\call$ and it completely ignores the position $X_t$.
\end{enumerate}

The descriptions a), c), d), e) are really dealing with the OQBM, while b) and f) are only considering the evolution on $\hh_G$.
They can be obtained one from another by partial traces, restriction and conditional expectancy according to the following hierarchy:
\vspace{0.5cm} 

 \begin{figure}[!h]\label{fig:hierarchy}
\begin{tikzpicture}
  \matrix (m) [matrix of math nodes,row sep=2em,column sep=2em,minimum width=2em]
  {
 {}& \rho_{tot, t} &{} \\     
     \rho_{tot, G, t} & \rrho_t & \rho_{S,t} \\   
    {} & Q_t(x) &{} \\
	{}&\rho_G &{} \\ 
     };
  \path[-stealth]
  	(m-1-2) edge node [left] {$\tra_{\hh_z}$} (m-2-1)
          	edge [dashed] node[right]{?}  (m-2-2)
          	edge node [right] {$\tra_{\Phi}$} (m-2-3)
  	(m-2-1) edge node [left] {$\tra_\Phi$} (m-4-2)
  	(m-2-3) edge node [right] {$|_{\mm}$} (m-3-2)
	(m-2-2) edge node [left] {$\E|X_t$} (m-3-2)
	(m-3-2) edge node [right] {$\int_\R$} (m-4-2);
 	\end{tikzpicture} 
 	\caption{Hierarchy between the descriptions of the Open Quantum Brownian Motion}
\end{figure}
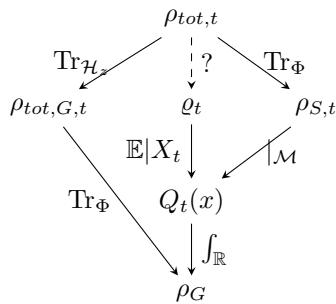

 The way we can obtain the process $(\rrho_t, X_t)$ directly from the unitary description is the subject of the second section of this article.



\section{Non-demolition measured evolution applied to the Open Quantum Brownian Motion}\label{sec:measure}

In the first section, we described the process $(\rrho_{\tau, n}, X_{\tau, n})_{n\in \N}$ as the result of a succession of unitary evolution by $L_\tau$ and measure of the position $X_{\tau, n}\in \delta \Z$. In continuous time this picture is harder to obtain, since the the measure and the evolution are happening at the same time. In this section we construct a general framework to deal with simultaneous measurement and evolution, using the crucial idea of non-demolition measurement introduced by Belavkin \cite{Belavkin94}.

\subsection{Evolution and measurement}

The evolution of a system after a measurement may be impossible to describe. Let us assume that the system evolves according to a unitary operator $U$ on $\hh_G\otimes L^2(\xx, \mu)$. 
We may measure the algebra $\cala=L^\infty(\xx, \nu)$ before \emph{or} after applying $U$, obtaining a random variable $X\in \xx$ and random states $\rrho_0=\rho_{G|\cala}$ and $\rrho_1=(U\rho U^*)_{S|\cala}$. However, it is not clear how to describe the measurement before \emph{and} after applying $U$. There may be two issues there:
\begin{enumerate}
\item The state $\rho_{GB|\cala}$ is well defined only if $\cala$ is discrete, else we only have the partial state $\rho_{G|\cala}$. Thus, we cannot define $U\rho_{GB|\cala}U^*$.
\item Even if $\cala$ is discrete, the measurement before applying $U$ modifies the state of the system, so $(U\rho_{GB|\cala} U^*)_{S|\cala}$ may not have the same distribution as $(U\rho U^*)_{S|\cala}$.
\end{enumerate}
The restriction to so called non-demolition evolutions allows to bypass these two issues in the general context of measurement under evolution. 

\begin{defi}\label{def:ha-nd}
Let $\hh_G$ and $\hh_B$ be two Hilbert spaces, let $I\subset \R$ be a set of times and $(U_t)_{t\in I}$ be a family of unitary operators on $\hh_G\otimes \hh_B$ with $U_0=I$ if $0\in I$ and let $(\cala_t)_{t\in I}$ be a family of commutative von Neumann algebras on $\hh_B$. Write $U_{t,s}=U_tU_s^*$ for any $s,t\in I$. We say that the process $(U_t, \cala_t)_{t\in I}$ is a $\hh_G$-non demolition evolution if for any $s\leq t \in I$ we have
\[
U_{t,s} \cala_s U_{t,s}^*\subset \set{I_G}\otimes \cala_t' 
\]
where $\cala_t'$ is the commutant of $\cala_t$. 
\end{defi}
In most cases the family of algebras will be increasing ($\cala_s\subset \cala_t$ for $s\leq t$) but we do not require it.

The non-demolition condition can be divided in two parts: the condition $U_{t,s} \cala_s U_{t,s}^*\subset  \bb(\hh_G)\otimes \cala_t'$ is here to ensure that the measure of $\cala_s$ does not disturb the measure of $\cala_t$ after evolution, while the condition $U_{t,s} \cala_s U_{t,s}^*\subset (I_G\otimes \bb(\hh_B))$ ensure that the random state at time $t$ is well defined. Let us describe more precisely how the random evolution can be defined.

Let us consider an $\hh_G$-non demolition evolution $(U_t, \cala_t)_{t\in I}$ and a state $\rho_0\in \gs(\hh_G\otimes \hh_B)$. We make the assumption that $I$ is upper bounded\footnote{this assumption is actually not necessary but it allows to use more concrete notations} by some $T\in I$. We fix some identifications $\cala_t\simeq L^\infty(\xx_t, \ff_t, \mu_t)$ implemented by some isometries $\calg_t : L^2(\xx_t, \ff_t, \mu_t)\rightarrow \hh_B$. We want to define a probability space $(\Omega, \P)$ with a stochastic process $(X_t, \rrho_t)_{t\in I}$ with $X_t \in \xx_t$ and $\rrho_t  \in \gs(\hh_G)$ obtained by simultaneously measuring $\cala_t$ at time $t$ and making evolve the system according to $U_t$. We construct it as follows.

\begin{itemize}
\item Let $\cala^U_t$ be the smallest von Neumann algebra containing all the algebras $U_{t,s} \cala_s U_{t,s}^*$ for $s\leq t$. It is commutative and contained in $I_G\otimes \bb(\hh_B)$ by the $\hh_G$-non demolition hypothesis. We fix an identification $\cala^U_t\simeq L^\infty(\xx^U_t,\ff^U_t \mu^U_t)$ implemented by an isometry $\calg^U_t: L^2(\xx^U, \ff^U_t, \mu^U_t)\rightarrow \hh_B$. 
\item For any $s\leq t$ we have $\cala_t\subset \cala^U_t$ so there exists a map $\phi_t:\xx^U_t\rightarrow \xx_t$ such that for any $f \in L^\infty(\xx_t, \ff_t, \mu_t)$ we have 
\[
\calg^U_t M_{f\circ \phi_t}(\calg^U_t)^*= \calg_t M_f \calg_t^*~.
\]
\item For $s\leq t$ we have $U_{t,s} \cala_s^U U_{t,s}^* \subset \cala_t^U $ so there are maps $\eta_{s,t}: \xx^U_t\rightarrow \xx^U_s$ such that for any $f\in L^\infty(\xx^U_s, \ff_s, \mu^U_s)$ we have
\[
\calg^U_t M_{f\circ \eta_{s,t}}(\calg^U_t)^*= U_{t,s} \calg^U_s M_f (\calg^U_s)^* U_{t,s}^*~.
\]
\item We take for our universe $\Omega$ the space $\xx^U_T$ with probability $\P=p^U_T d\mu^U_T$ induced by $U_T \rho U_T^*$ and the identification $\calg^U_T$. The random variable $X_t \in \xx_t$ is then defined as the measurable map $\phi_t \circ \eta_{t,T}$ from $\Omega=\xx^U_T$ to  $\xx_t$. 
\item The random variable $\rrho_t$ is defined as follows: we have a map $(U_t \rho U_t^*)_{G|\cala_t^U}$ on $\xx^U_t$, we compose it with $\eta_{t,T}$ to make it a map on $\xx^U_T$ : for $\omega\in\xx^U_T$,
\[
\rrho_t(\omega)=\big(U_t \rho U_t^*\big)_{G|\cala^U_t}(\eta_{t, T}(\omega))~.
\] 
\end{itemize}

\begin{rem} Note that the maps $\eta_{t,s}$ and $\phi_t$ are defined uniquely only up to a set of measure zero, as well as the random variable $\big(U_t \rho U_t^*\big)_{G|\cala^U_t}$. Thus, if $I$ is not countable there is not uniqueness in distribution of the process $(X_t, \rrho_t)_{t\in I}$, only uniqueness in finite-dimensional distributions. For example, when $\xx_t=\R$ for all $t$, the function $t\rightarrow X_t$ may be almost surely continuous, but this depends on the $\eta_{t,s}$ and $\phi_t$ which are chosen. 
\end{rem}

\begin{defi}
Any process $(X_t, \rrho_t)_{t\in I}$ obtained as above is called a measured evolution obtained from the $\hh_G$-non demolition evolution $(U_t, \cala_t)_{t\in I}$ and the state $\rho_0$. 
\end{defi}

This way of define the stochastic process should seem natural; a first motivation is that for all $t$ the variable $X_t$ has the same distribution as the result of the measure of $\cala_t$ in the state $U_t \rho U_t^*$, indeed for any function $f\in L^\infty(\xx_t, \ff_t, \mu_t)$ we have
\begin{align*}
\E(f)&=\E(f\circ \phi_t \circ \eta_{t,T})\\
&=\tr{U_T\rho U_T^* \calg^U_t M_{f\circ\phi_t \circ \eta_{t,T}} \calg_t^*}\\
&=\tr{U_t \rho U_t^*~U_{T,t}^*~ \calg^U_t M_{f\circ\phi_t \circ \eta_{t,T}} (\calg^U_t)^* U_{T,t}}\\
&=\tr{U_t\rho U_t^* ~\calg_t M_f \calg_t^*}~.
\end{align*}
However, this is only the distribution of $X_t$ at one time, and it does not justifies the joint distribution of the $X_t$'s for $t\in I$. We will use the indirect measurement (definition \ref{def:indirect_meas})to make a more complete and useful argument.

\begin{defi}\label{def:meas_evo}

For each $t$ let us fix an identification $\cala_t\simeq L^\infty(\xx_t, \ff_t, d\mu_t)$. We call an \emph{indirect measurement of $(\cala_t)_{t\in I}$ under the evolution $(U_t)_{t\in I}$} the following type of setup:
let $J=\set{t_0, \cdots, t_n} \subset I$ be a finite subset of $I$ and consider a family of pointer maps $(\psi_k)_{0\leq k\leq n}$ with $\psi_k : (\xx,\ff_{t_k})\times \yy_k\rightarrow \yy_k$ and a family of states $(\sigma_k)_{0\leq k \leq n}$ on $L^2(\yy_k, \nu_k)$ with corresponding probability density $p_k$ on $\yy_k$. Consider the pointer unitary operators $Z_k=Z_{\psi_k}$ as in Definition \ref{def:indirect_meas}. Let us perform successive indirect measurement: let $Y_{0}\in \yy_{t_0}$ be the result of the measurement of $L^\infty(\yy_{0})$ for the state  $Z_{0}\left(\big(U_{t_0}\rho U_{t_0}\big) \otimes \sigma_{0}\right) Z_{0}^*$, and $\rrho_{SB}(t_0)$ the state on $\hh_{S}\otimes \hh_B$ after the measurement; then, define $Y_{1}$ the result of the measurement of $L^\infty(\yy_{1})$ for the state $Z_{1} \left(\big(U_{t_1, t_0} \rrho_{SB}(t_0) U_{t_1, t_0}^*\big)\otimes \sigma_{1}\right) Z_{1}^*$, and define successively $Y_{2}, \cdots, Y_{n}$ the same way. We obtain a random process $(Y_k)_{0\leq k \leq n}$ on the space $\prod_{k=0}^n \yy_k$ and a family of random states
$\rrho^Y_{t_k}((Y_l)_{l\leq k})=\tra_{B} (\rho_{SB}(t_k))$~. 
\end{defi}

Note that we can perform this type of indirect measurement even if the property of $\hh_G$-non demolition is missing. The non-demolition property makes these indirect measurements to be consistent with the process described above, as follows. 

\begin{prop}[Consistency of the unraveling]\label{prop:consistency}

Consider any indirect measurement of $(\cala_t)_{t\in I}$ under the evolution $(U_t)_{t\in I}$ described as above. Assume that the $\hh_G$-non demolition property is satisfied.  Consider the random state $\rrho_t$ and the random variables $X_t \in \xx_t$ defined above on the universe $\xx_{tot}$. Add to this universe a family of random variables $(Y^0_t)_{t\in J}$ with probability distribution $p_k d\nu_k$, where $p_k$ is the probability density corresponding to the state $\sigma_k$ on $L^2(\yy_k, \nu_k)$. Assume that they are mutually independent and independent of $(X_t)_{t\in I}$ and define
\begin{align*}
\tilde{Y}_k&=\psi(X_{t_k}, Y^0_k) \\
\tilde{\rrho}^Y_k&=\E(\rrho_t((X_s)_{s\in I}) ~|~(\tilde{Y}_k)_{0\leq k \leq n})~.
\end{align*}
Then $(\tilde{Y}_k, \tilde{\rrho}^Y_{t_k})_{0\leq k \leq n}$ has the same distribution as the process $(Y_k, \rrho^Y_{t_k})_{0\leq k \leq n}$ defined by the indirect measurement.
\end{prop}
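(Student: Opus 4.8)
The plan is to show that the two processes are realizations of the \emph{same} joint measurement of a commuting family of observables, so that the sequential indirect measurement of Definition~\ref{def:meas_evo} can be reorganized into a single measurement performed at the final time $t_n$ — which is exactly what the construction of the measured evolution $(X_t,\rrho_t)$ computes. I would argue by induction on the number $n+1$ of times in $J$, peeling off the indirect measurements one at a time; the sole role of the $\hh_G$-non demolition hypothesis is to guarantee that the back-action of each step leaves the distribution of the subsequent ones undisturbed.

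First, for the base case $n=0$ there is a single pointer coupling $Z_0$ applied to $\big(U_{t_0}\rho U_{t_0}^*\big)\otimes\sigma_0$, followed by the measurement of $L^\infty(\yy_0)$. Since $\psi_0(x,\cdot)$ is a measure-preserving bijection of $\yy_0$ for each $x$ and the pointer is initialized independently in the state $\sigma_0$ of density $p_0$, applying Theorem~\ref{theo:random_rho} to the algebra generated by the pointer, exactly as in the explicit computation of Section~\ref{subsub:after_meas}, shows that the reading $Y_0$ has the law of $\psi_0(\xi_0,Y^0_0)$, where $\xi_0$ is the outcome of measuring $\cala_{t_0}$ in $U_{t_0}\rho U_{t_0}^*$ — hence distributed as $X_{t_0}$ — and $Y^0_0\sim p_0\,d\nu_0$ is independent of it. By the uniqueness clause of Theorem~\ref{theo:random_rho}, the post-measurement state on $\hh_G$ is the conditional expectation of $\rrho_{t_0}$ given $Y_0$, which matches $(\tilde Y_0,\tilde\rrho^Y_0)$.

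For the inductive step I would pass to the Heisenberg picture at the final time: write $\hat A_k$ for the pointer observable measured at step $k$, transported by the unitaries $Z_k$ and $U_{t,s}$ to an operator on $\hh_G\otimes\hh_B\otimes\bigotimes_k\hh_{B,k}$. For $j<k$ the operators $\hat A_j$ and $\hat A_k$ act on distinct pointer factors and interact only through $\hh_B$, where their system parts are conjugates of elements of $\cala_{t_j}$ and $\cala_{t_k}$ respectively; the non-demolition inclusion $U_{t_k,t_j}\cala_{t_j}U_{t_k,t_j}^*\subset\{I_G\}\otimes\cala_{t_k}'$ then forces them to commute. Hence $\{\hat A_k\}_{0\le k\le n}$ is a commuting family, and by joint spectral theory the sequential indirect measurement coincides, in law and in final conditional state, with the single joint measurement of this family in the fixed state $\big(U_{t_n}\rho U_{t_n}^*\big)\otimes\bigotimes_k\sigma_k$; in particular the intermediate back-actions are immaterial. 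I expect this to be the main obstacle: rigorously transporting the measured observables to a common time, verifying commutation purely from the non-demolition inclusion, and invoking the standard but delicate fact that a sequential measurement of a commuting family is insensitive to back-action.

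Finally, I would identify this joint measurement with the measured-evolution construction. The system-side observables read at the times $t_k$ generate precisely the commutative algebra $\cala^U_{t_n}$, whose joint law under $U_{t_n}\rho U_{t_n}^*$ is the law of $(X_{t_0},\dots,X_{t_n})$, while the independent pointer factors supply the noise variables $Y^0_k$. Thus $Y_k$ has the law of $\psi_k(X_{t_k},Y^0_k)=\tilde Y_k$, and a last application of Theorem~\ref{theo:random_rho} to the algebra generated by $(\hat A_l)_{l\le k}$ identifies the post-measurement state $\rrho^Y_{t_k}$ with $\E\big(\rrho_{t_k}\mid(\tilde Y_l)_{l\le k}\big)=\tilde\rrho^Y_k$. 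This yields equality of the full joint distributions of $(Y_k,\rrho^Y_{t_k})_{0\le k\le n}$ and $(\tilde Y_k,\tilde\rrho^Y_{t_k})_{0\le k\le n}$, completing the induction.
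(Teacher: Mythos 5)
Your proposal is correct and is essentially the paper's own argument: the paper likewise reorganizes the sequential protocol into a single conjugation by the interleaved unitary $W_k=Z_kU_{t_k,t_{k-1}}Z_{k-1}\cdots Z_0U_{t_0}$ followed by one joint restriction to the pointer algebra $\cala^Y_k$, and then matches test-function expectations against the measured-evolution construction, invoking the non-demolition inclusion exactly where you do, namely to push transported observables past the pointer couplings (the commutation $U_{t_k}^*AU_{t_k}W_k^*=W_k^*A$ and the identification $\calg^U_t M_g(\calg^U_t)^*=U_{t_k}W_k^*fW_kU_{t_k}^*$). The only differences are presentational: your induction wrapper is inessential, and note that once the pointer observables are transported through the \emph{full} interleaved unitary their mutual commutation is automatic (conjugates of operators on distinct tensor factors), so the genuine content of non-demolition is rather that the intermediate $Z_l$'s can be removed from the transport, i.e. $W_k^*M_{f_k}W_k=U_{t_k}^*(Z_k^*M_{f_k}Z_k)U_{t_k}$, which is precisely the step you flag as the main obstacle and which the paper's chain of commutations carries out.
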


\begin{proof}
Let us write 
\[
W_k=Z_kU_{t_k, t_{k-1}} Z_{k-1} U_{t_{k_1}, t_{k-2}}\cdots Z_0 U_{t_0}~
\]
and let $\cala^Y_k=L^\infty(\prod_{l\leq k} \yy_k,~ \bigotimes_{l\leq k} \nu_l)$. 
Then 
\[
\rrho^Y_{t_k}=\left( W_k (\rho\otimes \sigma) W_k^*\right)_{S|\cala_k^Y}~.
\]
Moreover, for any function $f\in \cala^Y_k$ and operator $A\in \bb(\hh_S)$ we have
\begin{align*}
\E(\tr{\rrho^Y_{t_k}}f(Y_0, \cdots, Y_k)))&=\tr{W_k (\rho\otimes \sigma) W_k^* A\otimes M_f}~.
\end{align*}
Similarly, for all $l\leq k$ the random variable $\tilde{Y}_l$ corresponds to the measurable map $\psi_l(\phi_{t_l}\circ\eta_{t_l, t_k}(x_{t_k}),~y_l)$ on $\xx^U_{t_k}$ and so the random variable  $f(\tilde{Y_0}, \cdots, \tilde{Y_k})$ can be seen as the measurable map $g \in L^\infty(\xx^U_{t_k}\times \prod_{l\leq k} \yy_k)$ defined by
\[
g(x^U_{t_k}, y_0, \cdots, y_k)=f(\psi_0(\phi_{t_0}\circ\eta_{t_0, t_k}(x^U_{t_k}), y_0),\cdots, \psi_k(\phi_{t_k}(x^U_{t_k}), y_k))~.
\]
an by the construction of $\rrho_t$ and $\tilde{\rrho}^Y_t$ we have
\[
\E(\tr{\tilde{\rrho}^Y_{t_k}}f(\tilde{Y}_0, \cdots, \tilde{Y}_k)=\tr{(U_{t_k}\rho U_{t_k}^*\otimes \sigma) A\otimes \calg^U_t M_g(\calg^U_t)^*}
\]
Now, by the definition of $\eta_{t,t_k}$ and of $W_k$ we have 
\[
\calg^U_t M_g (\calg^U_t)^* = U_{t_k} W_k^* f W_k U_{t_k}^*
\]
by the definition of the $Z_k$ and $\phi_t, \eta_{t_k, t}$. Thus, 
\begin{align*}
\E(\tr{\rrho^Y_{t_k}}f(Y_0, \cdots, Y_k)))&=\tr{(U_{t_k}\rho U_{t_k}^*\otimes \sigma) A U_{t_k} W_k^* f W_k U_{t_k}^*}\\
&=\tr{(W_k\rho \otimes \sigma)U_{t_k}^* A U_{t_k} W_k^* f }~.
\end{align*}
Now, by $\hh_G$-non demolition, since $A$ is in the commutator of $I_G\otimes \bb(\hh_B)$ for any $l\leq k$  we have $U_{t_k, t_l}^* A U_{t_k, t_l}\in \cala_{t_l}'$ and in particular $U_{t_k, t_l}^* A U_{t_k, t_l}$ commutes with $Z_l^*$. Thus, we have
\begin{align*}
U_{t_k}^* A U_{t_k} W_k^*&=U_{t_0}^* (U_{t_0, t_k}^* A U_{t_k, t_0} ) Z_0 \cdots U_{t_{k-2}, t_{k-1}} Z_{k-1}^* U_{t_{k-1}, t_k} Z_k^* \\
&=  U_{t_0}^* Z_0^* U_{t_0, t_k}^* A U_{t_k, t_1} Z_1\cdots U_{t_{k-2}, t_{k-1}} Z_{k-1}^* U_{t_{k-1}, t_k} Z_k^*\\
\intertext{and with successive commutations we get}
U_{t_k}^* A U_{t_k} W_k^*&=W_k^* A ~.
\end{align*}
Thus we have 
\[
\E(\tr{\tilde{\rrho}^Y_{t_k}}f(\tilde{Y}_0, \cdots, \tilde{Y}_k))=\E(\tr{{\rrho}^Y_{t_k}}f(Y_0, \cdots, Y_k))~.
\]
This proves the equality in distribution.
\end{proof}

\subsubsection{The example of OQWs}

Open Quantum Random Walks are our first example of measured evolution. Let us consider any OQW $(B_e)_{e\in E}$ on a countable graph $(\vv, E)$. It consists in the succession of evolution by the quantum channel $\varphi(\rho)=\sum_{(x\rightarrow y)\in E} (B_{(x\rightarrow y)}\otimes \ket{y}\bra{x})\rho (B_{(x\rightarrow y)}^*\otimes \ket{x}\bra{y})$ and of measure of the algebra $\cala_\vv=l^\infty(\vv)$. As such, it does not need the formalism of measured evolution to be defined since $\cala_\vv$ is discrete, but it is a good demonstrator of measured evolution. 

Let us construct the auxiliary space $\hh_p=l^2(\vv)$. In the article \cite{OQWbirth} in which OQW where first defined, a unitary operator $U$ on $\hh_G\otimes l^2(\vv)\otimes \hh_p$  is constructed the following way: we fix a point $x_0\in \vv$. For any $x\in \vv$ we consider a unitary operator $V(x)$ such that for all $y\in \vv$ we have
\[
\bra{y}_{\hh_p} V(x) \ket{x_0}_{\hh_p}=\Un_{(x,y)\in \E} B_{(x\rightarrow y)}~.
\]
It exists because of the condition $\sum_{y \text{ with } (x\rightarrow y)\in E} B_e^* B_e=I$. Write $V(x)_{yz}=\bra{y}_{\hh_p}V(x)\ket{z}_{\hh_p}$. 
We put
\[
U=\sum_{x, y, z\in \vv} V(x)_{yz}\otimes \ket{y}\bra{x}\otimes \ket{x}\bra{z}~.
\]

Consider the Toy Fock space $T\Phi_\vv=\bigotimes_{n\in \N^*} \hh_p$ with respect to $\ket{x_0}$, and write $\ket{\Omega}=\bigotimes_{n\in\N^*} \ket{x_0}$. 
We consider the unitary operator $U(n, n-1)=\calg_n^* U \calg_n$ on $\hh_G\otimes l^2(\vv)\otimes T\Phi_\vv$ and define $U(n)=U(n,n-1)U(n-1,n-2)\cdots U(2,1)$. The system $(U(n), \cala_z)_{n\in \N}$ is $\hh_G$-non demolition, indeed $U(I_G\otimes \cala_\vv) U^* \subset I_G\otimes \cala_\vv\otimes \bb(\hh_p)$. More precisely, for any $f=\sum_{x\in \vv} f(x)\ket{x}\bra{x} \in \cala_z$ we have
\begin{align*}
U f U^*&=\sum_{x, y, y',z\in \vv} f(x) U(x)_{yz}U(x)_{y'z}\otimes \ket{y}\bra{y'}\otimes \ket{x}\bra{x}\\
&=\sum_{x, y\in \vv} f(x) I_G\otimes I\otimes \ket{x}\bra{x}~.
\end{align*}
Moreover, we have 
\[
\tra_{T\Phi_\vv}\left(U(n)(\rho\otimes \ket{\Omega}\bra{\Omega})U(n)^*  \right)=\varphi^n(\rho)~,
\]
 where $\varphi$ is the quantum channel defined by the OQW.
 By Proposition \ref{prop:consistency} this means that the OQW has the same distribution that the process $(\rrho_n, X_n)_{n\in \N^*}$ given by the measured evolution of $(U(n), \cala_\vv)_{n\in \N}$ with initial state $\rho\otimes\ket{\Omega}\bra{\Omega}$. Let us just make explicit the algebras $\cala^U_t$ and the maps $\phi_t$ and $\eta_{s,t}$ used in the definition of the measured evolution.

Writing $\cala_n=l^\infty(\vv^n)$ the algebra generated by the operators $\ket{x_1}\bra{x_1}\otimes \cdots\otimes \ket{x_n}\bra{x_n}\otimes I$ on $T\Phi_{\vv}$ we have
\[
\cala^U_n=\cala_z\otimes \cala_n=l^\infty(\vv\times \vv^n)~. 
\]
The operator $\phi_n: \vv\times \vv^n\rightarrow \vv$ is simply the projection on the first coordinate, and for $m< n$ the operator $\eta_{m,n}: \vv\times \vv^n\rightarrow \vv\times \vv^m$ is defined by
\[
\eta_{m, n}(x, x_1, \cdots, x_n)=(x_n, x_1, \cdots, x_m)~.
\]

\subsection{Application to the Open Quantum Brownian Motion}

With the measured evolution setup, we are able to obtain the process $(\rrho_t, X_t)_{0\leq t\leq T}$ satisfying the diffusive Belavkin Equation directly from the unitary operator $\gu_t$ and no more as the limit of a discrete-time repeated measurement setup. First, we just consider the system $(U_t, \cala_t)_{0\leq t \leq T}$ where $\cala_t=L^\infty(\calw([0, t]) \subset \bb(\Phi)$. Second, we apply this to the measured evolution of $(\gu_t, \cala_z)_{0\leq t \leq T}$ where $\cala_z=L^\infty(\R)\subset \bb(\hh_z)$. 

\subsubsection{Measured evolution for the Hudson-Parthasarathy process}

In this part we study the measured evolution $(U_t, \cala_t)_{0\leq t\leq T}$ on $\hh_G\otimes \Phi$. The setup is quite simple in this case, because $\cala^U_t=\cala_t$ and $\eta_{s, t}$ is just the map $(w_u)_{0\leq u\leq t}\rightarrow (w_u)_{0\leq u \leq s}$. This allows to study it in a less contrived way that the measured evolution described above, and the following result is well known in quantum filtering theory (see Theorem 7.1 and Corollary 7.2 in \cite{bouten_introduction_2007}).

\begin{prop}\label{prop:measevoBelavkin}
The system $(U_t, \cala_t)_{0\leq t\leq T}$ is $\hh_B$-non demolition. If $\hh_G$ is finite-dimensional it admits a measured evolution process $(\rrho_t, (W_s)_{s\leq t})_{0\leq t \leq T}$ corresponding to the initial state $\rho\otimes\ket{\Omega}\bra{\Omega}$ which satisfies the diffusive Belavkin Equation \eqref{eq:belavkin}.
\end{prop}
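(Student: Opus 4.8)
The plan is to treat the two assertions separately: the non-demolition property follows directly from the adaptedness of the Hudson--Parthasarathy cocycle, while the diffusive Belavkin equation is obtained by identifying $\rrho_t$ with the quantum filter and invoking the filtering equation of the cited reference.

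For the non-demolition property, recall that $\cala_t=L^\infty(\calw([0,t]))$ is generated, as a von Neumann algebra, by the field quadratures $a^1_0(u)+a^0_1(u)$ for $u\le t$, each of which acts as $I_G$ on $\hh_G$ and non-trivially only on the factor $\Phi_{[0,u]}$. Since $U_t$ solves the Hudson--Parthasarathy equation \eqref{eq:HP}, the cocycle $U_{t,s}=U_tU_s^*$ is adapted and built only from the increments $da^i_j(r)$ with $r\in[s,t]$, hence factorises as the identity on $\Phi_{[0,s]}$ tensored with an operator on $\hh_G\otimes\Phi_{[s,+\infty)}$. Consequently $U_{t,s}$ commutes with every generator $a^1_0(u)+a^0_1(u)$, $u\le s$, of $\cala_s$, so that for $s\le t$
\[
U_{t,s}\,\cala_s\,U_{t,s}^*=\cala_s\subseteq\cala_t\subseteq\set{I_G}\otimes\cala_t',
\]
the last inclusion holding because $\cala_t$ is commutative and contained in $\set{I_G}\otimes\bb(\Phi)$. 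This is exactly the condition of Definition \ref{def:ha-nd}. The same computation gives $U_{t,s}\cala_sU_{t,s}^*=\cala_s$, so the enlarged algebra is $\cala^U_t=\cala_t$ and the maps $\eta_{s,t}$ reduce to the path restriction $(w_u)_{u\le t}\mapsto(w_u)_{u\le s}$; the measured evolution therefore coincides with the standard quantum-filtering picture.

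For the Belavkin equation, I would first make the one-time laws consistent: the non-demolition property guarantees that the measures $\P_t$ induced on $\calw([0,t])$ by $U_t(\rho\otimes\ket{\Omega}\bra{\Omega})U_t^*$ form a projective family, whose Kolmogorov limit $\P$ carries the process. The filter is then $\rrho_t=\big(U_t(\rho\otimes\ket{\Omega}\bra{\Omega})U_t^*\big)_{G|\cala_t}$, characterised via Theorem \ref{theo:random_rho} by
\[
\E_\P\big[\tr{\rrho_t A}\,f\big]=\tr{(\rho\otimes\ket{\Omega}\bra{\Omega})\,U_t^*(A\otimes M_f)U_t}
\]
for $A\in\bb(\hh_G)$ and $M_f\in\cala_t$. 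To obtain its dynamics I would apply the quantum Itô formula (Proposition \ref{prop:quantum_ito}) to the flow $j_t(A)=U_t^*(A\otimes I_\Phi)U_t$, which yields the quantum Langevin equation
\[
dj_t(A)=j_t\big(\call^*(A)\big)\,dt+j_t\big([A,N]\big)\,da^1_0(t)+j_t\big([N^*,A]\big)\,da^0_1(t),
\]
and to the output process $Y_t:=U_t^*(a^1_0(t)+a^0_1(t))U_t$, which by adaptedness satisfies $dY_t=j_t(N+N^*)\,dt+\big(da^1_0(t)+da^0_1(t)\big)$. Projecting both identities onto $\cala_t$ by conditional expectation and using the Itô rule $(da^1_0+da^0_1)^2=dt$ produces the innovation $B_t$ with $dB_t=dW_t-\Tau(\rrho_t)\,dt$ and a closed equation for $\hat\pi_t(A)=\tr{\rrho_t A}$ that is the dual form of \eqref{eq:belavkin}. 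Equivalently, once $\rrho_t$ is identified with their filter, one simply invokes Theorem~7.1 and Corollary~7.2 of the filtering reference.

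The main obstacle is the filtering step, not the non-demolition check. Two points need care: (i) proving that the innovation $B_t$ is genuinely a $\P$-Brownian motion, the quantum analogue of the classical innovations theorem, which relies on the non-demolition property to justify the tower property of the nested conditional expectations (this is where a reference-probability/Girsanov argument enters); and (ii) the passage from the linear, unnormalised filter to the normalised nonlinear Belavkin equation, whose well-posedness I would secure using the hypothesis that $\hh_G$ is finite-dimensional, exactly as in Pellegrini's Theorem \ref{theo:pellegriniBelavkin}. A more self-contained alternative would bypass the filtering theorem by combining the consistency of the unraveling (Proposition \ref{prop:consistency}) with the Attal--Pautrat convergence (Theorem \ref{theo:AP}) and Pellegrini's theorem, realising the continuous measured evolution as the limit of the discrete quantum trajectories; the difficulty there is that strong convergence of $U_{\tau,\ent{t/\tau}}$ together with convergence of one-time marginals does not by itself give convergence in law of the whole process, so one must supply tightness and match finite-dimensional distributions, which is precisely the delicate point flagged in the introduction.
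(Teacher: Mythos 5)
Your non-demolition argument is essentially the paper's: both rest on the observation that the cocycle $U_{t,s}$ solves a Hudson--Parthasarathy equation driven only by increments on $[s,t]$, hence acts as the identity on $\Phi_{[0,s]}$, so that $U_{t,s}\cala_s U_{t,s}^*=\cala_s$, the enlarged algebra is $\cala^U_t=\cala_t$, and $\eta_{s,t}$ is path restriction.

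For the Belavkin equation you take a genuinely different route. You work in the Heisenberg picture: the Langevin equation for $j_t(A)=U_t^*(A\otimes I)U_t$ (your coefficients $\call^*(A)$, $[A,N]$, $[N^*,A]$ are correct), the output equation for $Y_t$, then projection onto $\cala_t$ by quantum conditional expectation (the innovations method), with the fallback of citing Theorem 7.1 and Corollary 7.2 of the filtering reference. The paper instead runs the reference-probability method in the Schr\"odinger picture: it defines the unnormalized state $\ssig_t$ by duality against the Wiener measure $\mu$, represents a test function $f\in\cala_t$ through its martingale $f_s=\E_\mu(f|\ff_s)$ with $df_s=g_s\,dW_s$ (predictable representation), and applies the quantum It\^o formula to $U_s^*(A\otimes f_s)U_s$, keeping only the $dt$ terms between vacuum vectors; this yields the linear Zakai-type equation $d\ssig_t=\call(\ssig_t)\,dt+(N\ssig_t+\ssig_t N^*)\,dW_t$. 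Normalization is then entirely classical: $p_t=\tr{\ssig_t}$ is an exponential martingale, Girsanov under $d\P=p_T\,d\mu$ produces the innovation $B_t$ with $dB_t=dW_t-\Tau(\rrho_t)\,dt$ as an honest $\P$-Brownian motion, and It\^o applied to $\rrho_t=\ssig_t\,p_t^{-1}$ gives \eqref{eq:belavkin}. Note that this is exactly the \enquote{reference-probability/Girsanov argument} you invoke to patch your two flagged gaps: in the paper's route the innovations theorem never needs to be proved, since Girsanov hands you the Brownian motion, and the linear-to-nonlinear passage is a finite-dimensional It\^o computation, finite dimension of $\hh_G$ being what turns the weak identity tested against all $A$ into a bona fide operator-valued SDE with $L^2$-bounded solution. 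So your plan is sound, and your fallback citation is consistent with the paper's own remark that the result is standard in filtering theory; but what you defer to the literature or to \enquote{a Girsanov argument} is precisely the content of the paper's proof, and carrying out your Heisenberg-picture projection honestly would additionally require constructing the quantum conditional expectation and its tower property under non-demolition, machinery that the paper's Schr\"odinger-picture formulation via Theorem \ref{theo:random_rho} deliberately avoids.
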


\begin{proof}
Note that for any $s$ the process of operators $(U_{s, t})_{s\leq t\leq T}$ satisfies the Hudson-Parthasarathy Equation \eqref{eq:HP} and $U_{s, s}=I$. Thus, $U_{s, t}$ does not act on $\Phi_{[0, s]}$, in particular for any $f\in \cala_s$ we have $U_{s, t} f U_{s, t}^*=f$. This proves the non-demolition. Since $U_{s, t} \cala_s U_{s, t}^*=\cala_s$ we have $\cala_t^U=\cala_t$, we can take $\phi_t$ the identity map on $\calw([0, t])$, and $\eta_{s, t}: \calw([0, t])\rightarrow \calw([0, s])$ is just the restriction to $[0, s]$. Thus, the state $\rrho_t$ satisfies
\[
\E_\P(\tr{\rrho_t A}f((W_u)_{u\leq t}))=\tr{U_t (\rho\otimes\ket{\Omega}\bra{\Omega}) U_t^* A\otimes f}
\]
for any observable $A\in \bb(\hh_G)$ and function $f\in \cala_t$. We study first the unnormalized state ${\ssig_t=u_{\calw([0, t])}\left(U_t (\rho\otimes\ket{\Omega}\bra{\Omega}) U_t^*\right)}$. It is defined as in Theorem \ref{theo:random_rho}, by
\[
\E_\mu(\tr{\ssig_t A}f((W_u)_{u\leq t}))=\tr{U_t (\rho\otimes\ket{\Omega}\bra{\Omega}) U_t^* A\otimes f}
\]
(where $\mu$ is the measure on $\calw([0, T])$ under which $(W_t)_{0\leq t\leq T}$ is the Wiener process). We compute the Equation for $\ssig_t$ using the It\^o formula. First, we use the Heisenberg representation:
\[
\tr{U_t (\rho\otimes\ket{\Omega}\bra{\Omega}) U_t^* A\otimes f}=\tr{\rho\bra{\Omega} U_t^*(A\otimes f) U_t \ket{\Omega}}~. 
\]
Let us write $f_s=\E_\mu(f|\ff_s)$ (where $\ff_s$ is the $\sigma$-algebra generated by $(W_u)_{u\leq s}$). It is a martingale; $f_s$ is bounded for all $s$ since $f$ is bounded, and by the predicable representation theory there exists an adapted process $(g_s)_{s\leq t}$ such that 
\[
df_s=g_s dW_s~
\]
or in terms of quantum SDE, $f_s=f_0+\int_0^s g_s(da^1_0(s)+da^0_1(s)$ on $\eps(L^2(\R))$. We apply the quantum It\^o formula two times to the product $U_s^*(A\otimes  f_s) U_s$; since we are interested in $\bra{\Omega}U_s^*(A\otimes  f_s) U_s\ket{\Omega}$ we can ignore the terms which are not in $dt$. We obtain
\begin{align*}
U_t^*(A\otimes  f_t) U_s&=A\otimes f_0+\int_0^t U_s^* \call^*(A)f_s U_s ds+\int_0^t U_s^*(N^* A+AN)U_s g_s ds+R_t~,
\end{align*}
where $R_t$ is a quantum It\^o integrals with only terms in $da^1_0(s)$ and $da^0_1(s)$. This implies that
\begin{align*}
\E_\mu(\tr{\ssig_t A} f((W_u)_{u\leq t})&=f_0\tr{\rho a}+\int_0^t \tr{U_s(\rho\otimes  \ket{\Omega}\bra{\Omega}) U_s \left(\call^*(A)f_s+(N^*A+AN)g_s \right)}ds\\
&=f_0\tr{\rho a}+\int_0^t\E_\mu\left(\tr{\call(\ssig_s)A} f_s+\tr{(N\ssig_s+\ssig_s N^*)A} g_s\right)~ds\\
&=f_0\tr{\rho a}+\E_\mu\left( \tr{\left(\int_0^t \call(\ssig_s)ds+\int_0^t (N\ssig_s+\ssig_s N^*)dW_s\right)A} f\right)
\end{align*}
the last equality being a consequence of the classical It\^o formula. This implies that, for $\hh_G$ of finite-dimension, 
\begin{align}\label{eq:ssig}
d\ssig_t=\call(\ssig_t)dt+(N\ssig_t+\ssig_t N^*)dW_t~.
\end{align}
It is now time to go back to $\rrho_t=\ssig_t/\tr{\ssig_t}$, and to compute the measure $\P$ with $d\P=\tr{\ssig_t} d\mu$. First, note that Equation \eqref{eq:ssig} has linear coefficients, so $\ssig_t$ is bounded in $L^2(\calw([0, T])$. Write $p_t=\tr{\ssig_t}$. Since $\tr{\call(A)}=0$ for any operator $A$, conditioned in $p_t\neq 0$ we have
\[
dp_t=\tr{N\ssig_t+\ssig_t N^*}dW_t=p_t\Tau(\rrho_t)dW_t~.
\]
Thus, $p_t$ is the exponential martingale 
\[
p_t=\exp{\int_0^t \Tau(\rrho_t)ds-\frac{1}{2}\int_0^T \Tau(\rrho_t)^2 ds }~.
\]
Note that $\E_\mu(p_T)=1$ by definition of $\ssig_t$, so it is indeed a martingale. By the Girsanov Theorem, under the distribution $p_Td\mu$ there exists a Wiener process $B_t$ defined by 
\begin{align}
B_0&=0\\
dB_t&=-\Tau(\rrho_t)dt+dW_t~.
\end{align}
This is the second line of Equation \eqref{eq:belavkin}. To compute the equation for $\rrho_t$, note that
\[
d\frac{1}{p_t}=d\exp{-\int_0^t \Tau(\rrho_t)ds+\frac{1}{2}\int_0^T \Tau(\rrho_t)^2 ds }=\frac{1}{p_t}\left(\Tau(\rrho_t)^2dt-\Tau(\rrho_t) dW_t\right)
\]
so with $\rrho_t=\ssig_t p_t^{-1}$ the It\^o formula yields the first line of Equation \eqref{eq:belavkin}.
\end{proof}

This derivation can be extended to more general Hudson-Parthasarathy Equations, and has also been studied in the case where the state on $\Phi$ is not $\ket{\Omega}\bra{\Omega}$ but a more complex, single-photon state, with a resulting non-markovian Belavkin Equation \cite{Gough12}.

\subsubsection{The measured evolution applied to the Open Quantum Brownian Motion}

The measured evolution of $(gu_t, \cala_z)_{0\leq t\leq T}$ is a little more subtle than the one of $(U_t, \cala_t)_{0\leq t\leq T}$, but it can be reduced to this last one by using the formula $\gu_t=Z_t U_t$. 

\begin{theo}\label{theo:measevoOQBM}
Assume that $\hh_G$ is finite-dimensional, and let us fix some $T>0$. Then the system$(U_t, \cala_z)_{t\in [0, T]}$ is $\hh_G$-non-demolition, and it admits a measured evolution $(\rrho_t, X_t)_{t\in [0, T]}$ which is almost surely continuous in time. It satisfies Equation \eqref{eq:belavkinOQBM}.
\end{theo}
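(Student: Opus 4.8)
The plan is to reduce the measured evolution of $(\gu_t,\cala_z)$, with $\cala_z=L^\infty(\R)\subset\bb(\hh_z)$, to the Hudson--Parthasarathy measured evolution of $(U_t,\cala_t)$ with $\cala_t=L^\infty(\calw([0,t]))$ already treated in Proposition \ref{prop:measevoBelavkin}, exploiting the factorisation $\gu_t=Z_tU_t$ from Theorem \ref{theo:HPOQBM}. The whole point is that $U_t$ acts only on $\hh_G\otimes\Phi$ and leaves $\hh_z$ untouched, while $Z_t$ acts only on $\hh_z\otimes\Phi$ and, by Proposition \ref{prop:zt}, implements the translation of the position by the random increment $W_t$ registered in $\Phi$. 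So measuring the position algebra $\cala_z$ while evolving by $\gu_t$ should amount to running the gyroscope filtering dynamics of $U_t$ --- which produces $\rrho_t$ and the path $W_t$ --- and then reading off the position as $X_0+W_t$.

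First I would establish the non-demolition property. Writing $\gu_{t,s}=\gu_t\gu_s^*=Z_tU_{t,s}Z_s^*$ and using that $U_{t,s}$ commutes with every multiplication operator $M_g\in\cala_z$ (it does not act on $\hh_z$) and with $\cala_s$ (since, as in the proof of Proposition \ref{prop:measevoBelavkin}, $U_{t,s}$ does not act on $\Phi_{[0,s]}$), together with the explicit form $Z_tM_gZ_t^*=M_{g(\cdot-W_t)}$ following from Proposition \ref{prop:zt}, a direct computation gives
\[
\gu_{t,s}\,M_g\,\gu_{t,s}^*=M_{g(\cdot-(W_t-W_s))}.
\]
This operator acts trivially on $\hh_G$ and, being a multiplication operator on $\hh_z\otimes\Phi=L^2(\R\times\calw)$, commutes with every element of $\cala_z$; hence it lies in $\{I_G\}\otimes\cala_z'$, which is exactly the $\hh_G$-non-demolition condition of Definition \ref{def:ha-nd}.

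Next I would identify the measured evolution, taking for definiteness the product initial state $\rrho_0\otimes\xi_z$ (the case relevant to Proposition \ref{prop:trajectoryOQBM}). Letting $s$ range over $[0,t]$ in the formula above shows that the algebras $\gu_{t,s}\cala_z\gu_{t,s}^*$ generate $\cala_t^U\simeq L^\infty(\R\times\calw([0,t]))$, position times Brownian path, with $\phi_t$ the projection onto the position and $\eta_{s,t}(x,w)=\big(x-(w_t-w_s),\,w|_{[0,s]}\big)$. To compute $\rrho_t$ and $X_t$ I would condition first on the Brownian path: given the path, $Z_t$ is a \emph{deterministic} translation by $W_t$, and since $U_t$ does not couple $\hh_G$ to $\hh_z$, the conditional state on $\hh_G\otimes\hh_z$ given the path factorises as $\rrho_t^{HP}\otimes(\xi_z\text{ shifted by }W_t)$, where $\rrho_t^{HP}$ is the gyroscope state of the Hudson--Parthasarathy measured evolution of Proposition \ref{prop:measevoBelavkin}. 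Conditioning further on the position then leaves the gyroscope state unchanged, so $\rrho_t=\rrho_t^{HP}$ and $X_t=X_0+W_t$. Since $(\rrho_t^{HP},W_t)$ satisfies the Belavkin Equation \eqref{eq:belavkin}, $\rrho_t$ obeys the first line of \eqref{eq:belavkinOQBM} and $dX_t=dW_t=\Tau(\rrho_t)\,dt+dB_t$ yields the second; almost sure continuity follows by choosing the versions of $\eta_{s,t}$ and $\phi_t$ compatible with the continuous Brownian realisation of $\Phi$, the continuity of $\rrho_t$ being provided by the SDE.

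The main obstacle is the rigorous bookkeeping of the third step: one must justify, for a finite set of times, that the joint law of $(\rrho_t,X_t)$ produced by the abstract construction of Section \ref{sec:measure} genuinely decouples as ``gyroscope filtering $\times$ shifted position'', while controlling the non-commutation of $Z_t$ and $U_t$ (only $U_s$ and $Z_{t,s}=Z_tZ_s^*$ commute) and the fact that the conditional states and the maps $\eta_{s,t},\phi_t$ are defined only up to null sets. I expect the cleanest way to make this airtight is to feed the factorisation into the consistency Proposition \ref{prop:consistency}, realising the position readout through indirect measurements with deterministic pointers, rather than manipulating the conditional states by hand.
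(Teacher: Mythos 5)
Your core reduction is exactly the paper's: both proofs use the factorisation $\gu_t=Z_tU_t$, establish $\hh_G$-non-demolition from the computation $\gu_{t,s}\,M_g\,\gu_{t,s}^*=Z_{t,s}\,M_g\,Z_{t,s}^*$ (multiplication by $g$ translated by the Brownian increment, hence in $\{I_G\}\otimes\cala_z'$), identify $\cala_t^{\gu}=\cala_z\otimes\cala_t\simeq L^\infty\big(\R\times\calw([0,t])\big)$ with the same $\phi_t$ (projection on the position coordinate) and the same $\eta_{s,t}(x,w)=\big(x-w_t+w_s,\,w|_{[0,s]}\big)$, and then reduce to Proposition \ref{prop:measevoBelavkin} to obtain Equation \eqref{eq:belavkinOQBM}.

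The one substantive difference is the final conditioning step, and there your version is strictly weaker. You condition on the Brownian path first, and the factorisation ``conditional state $=\rrho_t^{HP}\otimes(\xi_z$ shifted$)$'' that you rely on only holds for product initial states $\rrho_0\otimes\xi_z$, which you indeed assume. The paper instead conditions on the initial position $X_0=x$: the conditional process is then the Hudson--Parthasarathy measured evolution of Proposition \ref{prop:measevoBelavkin} started from the conditional gyroscope state $\nu_0(x)$, and $X_t=X_0+W_t$. This covers an arbitrary initial state $\rho$ on $\hh_G\otimes\hh_z$, for which the gyroscope is genuinely correlated with the position; that generality matters, since the theorem is the continuous counterpart of Proposition \ref{prop:trajectoryOQBM}, whose initial data $(\rrho_0,X_0)$ may be correlated, and your claim that ``conditioning further on the position leaves the gyroscope state unchanged'' already fails at $t=0$ in that case. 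The repair is just this swap of conditioning order, after which your argument coincides with the paper's. Your closing worry is also overcautious: the paper does not route through Proposition \ref{prop:consistency} and indirect measurements here; once $\phi_t$ and $\eta_{s,t}$ are written explicitly, identifying the conditional process with the Hudson--Parthasarathy filtering process is a direct computation on the conditional states.
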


\begin{proof}
For any $f\in \cala_z$ and any $s\leq t$ we have
\begin{align*}
\gu_{s, t} f \gu_{s, t}&=Z_{s,t} U_{s,t} f U_{s,t}^* Z_{s,t}^*\\
&=Z_{s,t} f Z_{s,t}^*
\end{align*}
which is the operator of multiplication by the function $\tilde{f}_{s, t}(x, (w_u)_{u\leq T})=f(x-w_s+w_t)$. Hence the system $(\gu_t, \cala_z)_{0\leq t\leq T}$ is $\hh_G$-non demolition, and we have $\cala_t^\gu=\cala_z\otimes\cala_t=L^\infty(\R\times\calw([0, t]), Leb\otimes\mu)$. We choose the map $\phi_t: \R\times\calw([0, t])\rightarrow \R$ as the projection on the first coordinate, and for $s\leq t$ we take the map $\eta_{s, t}: \R\times\calw([0, t])\rightarrow \R\times\calw([0, s])$ defined by
\[
\eta_{s, t}(x, (w_u)_{0\leq u\leq t})=(x-w_t+w_s,~ (w_u)_{0\leq u\leq s})~.
\]
Let $(\rrho_t, X_t)_{0\leq t\leq T}$ be the random measured process corresponding to these maps. Write $h=\left(\gu_t (\rho\otimes \ket{\Omega}\bra{\Omega})\gu_t^*\right)_{G|\cala^\gu_t}$ (it is a random variable on $\R\otimes\calw([0, t])$), then $\rrho_t$ is the random variable on $\R\otimes\calw([0, T])$ defined by
\[
\rrho_t(x, (w_u)_{0\leq u\leq T})=h(x-w_T+w_t, (w_u)_{0\leq u\leq T})~. 
\]
For any $x\in \R$ consider the random variable on $\calw([0, T])$ obtained by conditioning $\rrho_t$ to $X_0=x$. This random variable is $\nu_t(x)=\rrho_t(x+w_t,(w_u)_{0\leq u\leq T})$. By definition of $Z_t$ it is actually equal to 
\[
\left(U_t (\nu_0(x)\otimes\ket{\Omega}\bra{\Omega}) U_t^*\right)_{G|\cala_t}~.
\]
Thus, $(\nu_t(x), W_t)_{0\leq t\leq T}$ is the random evolution corresponding to the measured evolution of $(U_t, \cala_t)_{0\leq t\leq T}$ with initial state $\nu_0(x)$, and by the definition of $\eta_t$ we have $X_t=X_T-W_T+W_t=X_0+W_T$ so Proposition \ref{prop:measevoBelavkin} yields Equation \eqref{eq:belavkinOQBM}.
\end{proof}

\subsection{Towards general convergence theorems for measured evolution}

The convergence of $\rho_{t,\tau}=\Lambda_{\tau}^\ent{t/\tau}(\rho)$ to $\rho_t=\Lambda_S^t(\rho)$ was obtained directly from the strong convergence of $\gu_{\tau, t}$ to $\gu_t$. On the contrary, the convergence in distribution of $(\rrho_{\tau, t}, X_{\tau, t})_{O\leq t\leq T}$ to $(\rrho_t, X_t)_{0\leq t\leq T}$ was shown as a consequence of Pellegrini's Theorem \ref{theo:HPOQBM}, which was proved by classical probabilistic methods without any reference to the operators $U_t$ on the Fock space and on the measured evolution. 

A natural question is: can we prove the convergence in distribution of a family of processes $(\rrho_{\tau, t}, X_{\tau, t})_{0\leq t\leq T}$ coming from a measured evolution $(U_{t, \tau}, \cala_\tau)_{0\leq t\leq T}$  just from the strong convergence of $U_{\tau, t}$ and of the algebras $\cala_\tau$ to some operator $U_t$ and some algebra $\cala$ ? 

This question turns out to be rather difficult, since the algebra $\cala^{U_\tau}_{\tau,t}$ also depends in $(U_{\tau, t})_{0\leq t\leq T}$. In what follows we present some results in this direction. 
\newline
A first result can be obtained when there is no evolution and we are only considering one measurement. 

\begin{prop}\label{prop:random_rho_cv}
Let $\rho_1, \rho_2 \in \gs\left(\hh_G\otimes L^2(\xx ,\mu)\right)$ be two states and let $\cala=L^\infty(\xx, \mu)$. For $i=1,2$ define the random variables $\rrho_i=(\rho_i)_{G|\cala}$ on $(\xx, \P_i)$ where $d\P_i=p_id\mu$ are defined as in Theorem \ref{theo:random_rho}. Then 
\begin{align}
\norm{p_1-p_2}_{L^1(\xx, \mu)}\leq \norm{\rho_1-\rho_2}_{\cals^1(\hh_G\otimes L^2(\xx)}\\
E_{\P_1}(\norm{\rrho_1-\rrho_2}_{\cals^1(\hh_G})\leq 2\norm{\rho_1-\rho_2}_{\cals^1(\hh_G\otimes L^2(\xx)}~.
\end{align}
\end{prop}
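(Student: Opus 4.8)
The plan is to reduce both bounds to the single fact that the map $u_\xx:\rho\mapsto\ssig$ of Theorem~\ref{theo:random_rho} is a contraction from $\cals^1(\hh_G\otimes L^2(\xx,\mu))$ to $L^1(\xx,\cals^1(\hh_G),\mu)$, and then to perform elementary manipulations using the relations $p_i=\tr{\ssig_i}$, $\rrho_i=\ssig_i/p_i$ and the pointwise positivity of the $\ssig_i(x)$ recalled there. Throughout I write the expectation as $\E_{\P_1}(\,\cdot\,)=\int_\xx(\,\cdot\,)\,p_1\,d\mu$, since $d\P_1=p_1\,d\mu$.

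The key step I would establish first is the contraction
\[
\norm{\ssig_1-\ssig_2}_{L^1(\xx,\cals^1(\hh_G),\mu)}=\int_\xx\norm{\ssig_1(x)-\ssig_2(x)}_{\cals^1(\hh_G)}\,d\mu(x)\leq\norm{\rho_1-\rho_2}_{\cals^1(\hh_G\otimes L^2(\xx))}.
\]
This is precisely the contractive form of the remark that $u_\xx$ is an isometry, and it follows by duality: $u_\xx$ is the preadjoint of the inclusion $\iota:\mm\hookrightarrow\bb(\hh_G\otimes L^2(\xx))$ of the von Neumann algebra $\mm=\bb(\hh_G)\otimes L^\infty(\xx,\mu)$, whose predual is $L^1(\xx,\cals^1(\hh_G),\mu)$. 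Since $\iota$ is an isometric $*$-homomorphism, its preadjoint is a contraction; concretely, the $L^1$-norm of $\ssig_1-\ssig_2$ is realized by testing against decomposable $F\in\mm$ with $\norm{F}_\infty\leq1$, and $\int_\xx\tr{(\ssig_1(x)-\ssig_2(x))F(x)}\,d\mu(x)=\tr{(\rho_1-\rho_2)F}\leq\norm{\rho_1-\rho_2}_{\cals^1}$. The first inequality of the proposition is then immediate: $\abs{p_1(x)-p_2(x)}=\abs{\tr{\ssig_1(x)-\ssig_2(x)}}\leq\norm{\ssig_1(x)-\ssig_2(x)}_{\cals^1(\hh_G)}$, and integrating in $x$ gives $\norm{p_1-p_2}_{L^1(\xx,\mu)}\leq\norm{\ssig_1-\ssig_2}_{L^1(\xx,\cals^1(\hh_G),\mu)}\leq\norm{\rho_1-\rho_2}_{\cals^1}$.

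For the second inequality I would work with the unnormalized states to sidestep the division. Multiplying by $p_1$ and using $p_1\rrho_1=\ssig_1$ and $\rrho_2=\ssig_2/p_2$, one gets $\mu$-a.e.\ on $\{p_1>0,\,p_2>0\}$
\begin{align*}
p_1\,\norm{\rrho_1-\rrho_2}_{\cals^1(\hh_G)}
&=\norm{\ssig_1-\tfrac{p_1}{p_2}\,\ssig_2}_{\cals^1(\hh_G)}\\
&\leq\norm{\ssig_1-\ssig_2}_{\cals^1(\hh_G)}+\abs{1-\tfrac{p_1}{p_2}}\,\norm{\ssig_2}_{\cals^1(\hh_G)},
\end{align*}
and since $\ssig_2(x)\geq0$ yields $\norm{\ssig_2(x)}_{\cals^1}=\tr{\ssig_2(x)}=p_2(x)$, the last term equals $\abs{p_1-p_2}$. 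On the remaining set $\{p_1>0,\,p_2=0\}$ one has $\ssig_2=0$, so for any convention on $\rrho_2$ the same pointwise bound $p_1\norm{\rrho_1-\rrho_2}_{\cals^1}\leq 2p_1=\norm{\ssig_1-\ssig_2}_{\cals^1}+\abs{p_1-p_2}$ holds. Integrating against $d\mu$ gives
\[
\E_{\P_1}\big(\norm{\rrho_1-\rrho_2}_{\cals^1(\hh_G)}\big)\leq\norm{\ssig_1-\ssig_2}_{L^1(\xx,\cals^1(\hh_G),\mu)}+\norm{p_1-p_2}_{L^1(\xx,\mu)}\leq2\,\norm{\rho_1-\rho_2}_{\cals^1},
\]
where the final step uses the contraction and the first inequality.

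The only genuine obstacle is the contraction property of $u_\xx$; everything else is bookkeeping with the normalization $\rrho=\ssig/p$. The subtle points there are identifying $L^1(\xx,\cals^1(\hh_G),\mu)$ as the predual of $\mm=L^\infty(\xx,\bb(\hh_G),\mu)$ and checking that the norm can be realized (or approximated) by decomposable contractions $F\in\mm$; once this is set up, the estimate $\tr{(\rho_1-\rho_2)F}\leq\norm{\rho_1-\rho_2}_{\cals^1}$ closes the argument.
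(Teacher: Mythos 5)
Your proof is correct and follows essentially the same route as the paper's: both reduce everything to the contraction property of $\rho\mapsto u_\xx(\rho)$ (which the paper asserts as a consequence of $h_i$ being the restriction of $\rho_i$ to $\bb(\hh_G)\otimes\cala$, and which you justify by duality), prove the first inequality by $\abs{p_1-p_2}=\abs{\tr{\ssig_1-\ssig_2}}$, and obtain the second by the same triangle-inequality splitting of $p_1(\rrho_1-\rrho_2)$ into $(\ssig_1-\ssig_2)$ plus a term controlled by $\abs{p_1-p_2}$. Your treatment of the set $\{p_1>0,\,p_2=0\}$ is a welcome extra care the paper's proof leaves implicit.
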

\begin{proof}
Write $h_i=u_{\xx}(\rho_i)$ the unnormalized states corresponding to $\rho_i$. Then $p_i(x)=\tr{h_i(x)}$ for $\mu$-almost every $x\in \xx$ so 
\[
\norm{p_1-p_2}_{L^1(\xx, \mu}\leq \int_{\xx} \tr{\abs{h_1(x)-h_2(x)}}d\mu(x)\leq \norm{\rho_1-\rho_2}_{\cals^1}
\]
the last inequality being a consequence of the fact that $h_i$ is the restriction to $\bb(\hh_G)\otimes \cala$ of the state $\rho_i$. Thus, 
\begin{align*}
E_{\P_1}(\norm{\rrho_1-\rrho_2}_{\cals^1(\hh_G})&= \int_{\xx} \tr{\abs{\rrho_1(x)-\rrho_2(x)}}p_1d\mu(x)\\
&\leq \int{\xx} \tr{\abs{p_1(x)\rrho_1(x)-p_2(x)\rrho_2(x)}}d\mu(x)+\int_{\xx}\tr{\abs{(p_1(x)-p_2(x))\rrho_2(x)}}d\mu(x)\\
&\leq 2\norm{\rho_1-\rho_2}_{\cals^1}~.
\end{align*}
\end{proof}

As a consequence we have the following: 

\begin{cor}\label{cor:random_rho_cv}
Let $(\rho_n)_{n\in \N}$ be a sequence of states on $\hh_G\otimes L^2(\xx, \mu)$ converging in $\cals^1(\hh_G\otimes \hh_B)$ to some state $\rho$. Consider the sequence of random variables $\rrho_n=\rho_{G|\cala}$ defined as in Theorem \ref{theo:random_rho}. Then $\rrho_n$ converges to $\rrho$ in distribution and in $L^1(\xx, \cals^1(\hh_S), p_\rho d\mu)$.
\end{cor}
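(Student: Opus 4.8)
The plan is to deduce both assertions directly from the two quantitative estimates of Proposition \ref{prop:random_rho_cv}, applied to the pairs $(\rho,\rho_n)$, the only genuine care being that $\rrho_n$ and $\rrho$ live a priori on the distinct probability spaces $(\xx,\P_n)$ and $(\xx,\P_\rho)$. Throughout I would work with the densities $p_n=\tr{u_\xx(\rho_n)(x)}$ and $p_\rho=\tr{u_\xx(\rho)(x)}$ and regard $\rrho_n(x)=u_\xx(\rho_n)(x)/p_n(x)$ and $\rrho(x)=u_\xx(\rho)(x)/p_\rho(x)$ as $\gs(\hh_G)$-valued functions defined $\mu$-almost everywhere on $\{p_n>0\}$ and $\{p_\rho>0\}$ respectively, extended by an arbitrary fixed state where undefined (the bound of Proposition \ref{prop:random_rho_cv} is insensitive to this extension, since its proof only uses $\tr{\rrho}=1$).

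For the $L^1$-convergence I would simply invoke the second inequality of Proposition \ref{prop:random_rho_cv} with $\rho_1=\rho$ and $\rho_2=\rho_n$, which reads
\[
\E_{\P_\rho}\big(\norm{\rrho-\rrho_n}_{\cals^1(\hh_G)}\big)=\int_\xx \norm{\rrho(x)-\rrho_n(x)}_{\cals^1(\hh_G)}\,p_\rho(x)\,d\mu(x)\le 2\,\norm{\rho-\rho_n}_{\cals^1}.
\]
Since the right-hand side tends to $0$ by hypothesis, this is precisely the convergence of $\rrho_n$ to $\rrho$ in $L^1(\xx,\cals^1(\hh_G),p_\rho\,d\mu)$.

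For the convergence in distribution, I would use that $\hh_G$ separable makes $\cals^1(\hh_G)$ a separable Banach space, so $\gs(\hh_G)$ is a separable metric space and weak convergence of laws is characterised by testing against bounded Lipschitz functions (Dudley's metrization). Fix such a $\Psi:\gs(\hh_G)\to\R$ with $\norm{\Psi}_\infty\le M$ and Lipschitz constant $L$ for the trace norm. Then I would estimate
\[
\big|\E_{\P_n}[\Psi(\rrho_n)]-\E_{\P_\rho}[\Psi(\rrho)]\big|\le \int_\xx |\Psi(\rrho_n)-\Psi(\rrho)|\,p_n\,d\mu+\int_\xx |\Psi(\rrho)|\,|p_n-p_\rho|\,d\mu.
\]
The first term is at most $L\,\E_{\P_n}(\norm{\rrho_n-\rrho}_{\cals^1})\le 2L\,\norm{\rho_n-\rho}_{\cals^1}$ by the second inequality of Proposition \ref{prop:random_rho_cv}, now applied with $\rho_1=\rho_n$, $\rho_2=\rho$ so that the reference measure becomes $\P_n$; the second term is at most $M\,\norm{p_n-p_\rho}_{L^1(\xx,\mu)}\le M\,\norm{\rho_n-\rho}_{\cals^1}$ by the first inequality. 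Both vanish in the limit, giving $\E_{\P_n}[\Psi(\rrho_n)]\to\E_{\P_\rho}[\Psi(\rrho)]$ for every bounded Lipschitz $\Psi$, hence convergence in distribution.

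The only point needing attention — and what I regard as the conceptual obstacle — is exactly the mismatch of reference measures: the distributional statement forces the expectation $\E_{\P_n}$, and therefore the second inequality with reference measure $\P_n$ \emph{together with} the density comparison $\norm{p_n-p_\rho}_{L^1}$, whereas the $L^1$ statement is taken against the fixed measure $p_\rho\,d\mu$. The accompanying null-set issue, namely the region $\{p_\rho>0,\,p_n=0\}$ (and its mirror) where one normalised state is undefined, is harmless: one has $\int_{\{p_n=0\}}p_\rho\,d\mu\le\norm{p_\rho-p_n}_{L^1}\to 0$, so the arbitrary extension of the random states off the support of their density does not affect any of the limits.
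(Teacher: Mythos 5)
Your proof is correct and follows the same route the paper intends: the corollary is stated there as an immediate consequence of Proposition \ref{prop:random_rho_cv}, with no further proof given, and your argument is exactly the fleshed-out version of that deduction (second inequality with reference measure $\P_\rho$ for the $L^1$ statement; both inequalities combined with bounded-Lipschitz test functions, plus the splitting $\Psi(\rrho_n)p_n-\Psi(\rrho)p_\rho=(\Psi(\rrho_n)-\Psi(\rrho))p_n+\Psi(\rrho)(p_n-p_\rho)$, for convergence in distribution). Your attention to the measure mismatch and to the extension of the normalised states off $\{p>0\}$ addresses precisely the points the paper glosses over.
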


Note that it would make no sense to ask that $\rrho_n$ converge to $\rrho$ in probability or almost surely since they are attached to different probability measures on $\xx$. The convergence in $L^1(\xx, \cals^1(\hh_S), p_\rho d\mu)$ is already a little strange from a probabilistic point of view though it is mathematically meaningful: the random state $\tilde{\rho_n}$ is in $L^1(\xx, \cals^1(\hh_S), p_\rho d\mu)$ since it is bounded in $\cals^1(\hh_S)$ and $p_\rho d\mu$ is a probability measure.
\newline

 In the case of measured evolutions, we were only able to obtain the following partial result, in which the convergence of the process $(X_{tau,t_n})_{t_n\in I_n}$ is obtained, but not the convergence of the random state.

\begin{prop}\label{prop:meas_evo_cv}
 Let $\xx=\R^d$ with Borelian algebra $\ff$ and a Radon measure $\mu$. For each $n\in \N$ let $\ff_n$ be a coarse sub-$\sigma$-algebra of $\ff$. Assume $\ff_n\subset \ff_{n+1}$ for each $n$ and write $\xx_n=\R^d/\ff_n$. identified with subsets of $\R^d$ such that $\xx_n\subset \xx_{n+1}\subset \xx$. We fix some time set $I=[0,T]$ upper-bounded by some $T\in \R$ and some finite set $I_n\subset I$ with $I_n\subset I_{n+1}$.

  Consider some Hilbert spaces $\hh_G$ and $\hh_C$ and write $\hh_B=L^2(\xx,\ff, \mu)\otimes \hh_C$. 
Consider $\cala=L^\infty(\xx,\ff, \mu)$ and let $(U_t, \cala)_{t\in I}$ be an $\hh_G$-non demolition measured evolution and $\rho$ a state on $\hh_G\otimes \hh_B$. We write $(X_t)_{t\in I} \in \xx^I$ and $(\rrho_t)_{t\in I}$ the random variables obtained by measuring $\cala$ under the evolution.

 For each $n\in \N$ fix a closed subspace $\hh_{n,C}\subset \hh_C$ with $\hh_{n,C}\subset \hh_{n+1,C}$. Write $\hh_n=L^2(\xx, \ff_n, Leb)\otimes \hh_{n, C}$ and let $P_n$ the orthogonal projection on $\hh_n$. Note that $P_n$ commutes with every elements of $\cala$, we define $\cala_n=P_n \cala$ and $\xx_n=\R^d/\ff_n$.
 Consider a process of unitary operators $(U_{n,t})_{t\in I_n}$ on $\hh_G\otimes\hh_n$ (that we may see as partial isometries on $\hh_G\otimes \hh_B$), and a state $\rho^n$ on $\hh_G\otimes \hh_n$ (that we may see as a state on $\hh_G\otimes \hh_B$). Assume that $(U_{n,t}, \cala_n)_{t\in I_n}$ is $\hh_G$-non demolition for all $t$. Define the process $(X_{n,t})_{t\in I_n}$ with values in $\xx_n$ and $(\rrho_{n,t})_{t\in I_n}$ obtained by the measured evolution of $\cala_n$ under  the evolution $U_{n,t}$ with initial state $\rho^n$. We still write $t\in I\rightarrow X_{n, t}$ the extension of $t\in I_n\rightarrow X_{n t}$ to $I$ by linear interpolation, and the same for $\rrho_{n, t}$. 
 
We make the following assumptions: 

\begin{assumption}\label{as:ant}
Writing $I_n=\{t_{1, n}, \cdots, t_{k_n, n}\}$ (in increasing order) we assume that 
\[
l_n=max\left\{t_{i+1,n}-t_{i, n}|1\leq i \leq k_n\right\}
\]
converges to $0$ as $n\rightarrow \infty$. 
\end{assumption}

\begin{assumption}\label{as:density_f}
 For any $x\in \R^d$ write 
\[
C_{\ff_n}(x)=\bigcap_{A\in \ff_n,~x\in A} A~.
\]
Then we assume that
\[
\lim_{n\rightarrow \infty} sup_{x\in \R^d} ~diam(C_{\ff_n}(x))=0~.
\]
\end{assumption}

\begin{assumption}\label{as:tightness}
The sequence of processes $(X_{n,t})_{t\in I}$ is tight for the topology of the uniform convergence on the set of continuous functions on $I$, and $(X_t)_{t\in I}$ is almost surely continuous.
\end{assumption}

\begin{assumption}\label{as:cvu}
The sequence of projections $(P_n)_{n\in \N}$ strongly converges to the identity and the state $\rho^n$ converges to $\rho$ in $\bb^1$ as $n \rightarrow 0$ and for all sequence $(t_n)_{n\in \N}$ with $t_n\in I_n$ converging to some $t\in I$ the operator $U_{n, t_n}$ strongly converge to $U_t$ on $\hh_G\otimes \hh_B$.
\end{assumption}

 Then $(X_{n,t})_{t\in I}$ converges in distribution (in the topology of uniform convergence) to $(X_t)_{t\in I}$.
\end{prop}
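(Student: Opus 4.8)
The plan is to deduce convergence in distribution in the space $C(I,\R^d)$ of continuous paths (uniform topology) from the classical criterion: tightness together with convergence of finite-dimensional distributions (see Billingsley). Tightness of $(X_{n,t})_{t\in I}$ is exactly Assumption \ref{as:tightness}, and the limit $(X_t)_{t\in I}$ is assumed almost surely continuous, so by Prohorov's theorem it suffices to prove that for every finite family of times the joint law of the $X_{n,t}$'s converges weakly to that of the $X_t$'s. The whole argument then reduces to one computation: writing each such joint law as a quantum trace and passing to the limit in it.

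First I would record the trace formula for the joint law. Because $(U_t,\cala)_{t\in I}$ is $\hh_G$-non demolition, the algebra $\cala^U_T$ generated by the $U_{T,s}\,\cala\,U_{T,s}^*$ for $s\le T$ is commutative; by the very construction of the measured evolution, $X_s$ is read off by measuring $\cala$ transported into $\cala^U_T$, the defining relations of the maps $\phi_s,\eta_{s,T}$ giving $\calg^U_T M_{f\circ\phi_s\circ\eta_{s,T}}(\calg^U_T)^*=U_{T,s}(I_G\otimes M_f)U_{T,s}^*$. Hence, for $s_1\le\cdots\le s_m\le T$ and bounded Borel functions $f_1,\dots,f_m$ on $\R^d$,
\[
\E\Big[\textstyle\prod_{j=1}^m f_j(X_{s_j})\Big]=\tr{U_T\rho U_T^*\ \textstyle\prod_{j=1}^m U_{T,s_j}(I_G\otimes M_{f_j})U_{T,s_j}^*},
\]
the product being over commuting factors. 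The same identity holds in the discrete model, with $(U_T,\rho,T)$ replaced by $(U_{n,T_n},\rho^n,T_n)$ where $T_n:=\max I_n$, the times taken in $I_n$, and each $M_{f_j}$ replaced by $M_{f_j^{(n)}}$, where $f_j^{(n)}:=\E[f_j\mid\ff_n]$ is the $\ff_n$-measurable version of $f_j$ (the only available one, since $X_{n,s}$ takes values in $\xx_n=\R^d/\ff_n$). The key point is that the dependence of the generated algebra $\cala^{U_n}_{n,\cdot}$ on $U_n$ — the obstruction flagged before the statement — is absorbed here: the joint law appears as a product of individually convergent evolved multiplication operators, so no convergence of generated algebras is needed.

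Next, fix target times $s_1<\cdots<s_m$ and choose $s_{j,n}\in I_n$ with $s_{j,n}\to s_j$, possible since Assumption \ref{as:ant} forces $l_n\to0$ (so $\bigcup_nI_n$ is dense and $T_n\to T$). Taking $f_j(x)=e^{i\scal{\xi_j,x}}$ in the two formulas, the discrete and continuous traces are precisely the joint characteristic functions of $(X_{n,s_{j,n}})_j$ and $(X_{s_j})_j$, so by L\'evy's theorem it suffices to show the discrete trace converges to the continuous one. Every factor is uniformly bounded (unitaries or partial isometries of norm one, and $\norm{M_{f_j^{(n)}}}\le\norm{f_j}_\infty$), so the convergence follows from three facts: (i) $\rho^n\to\rho$ in $\cals^1$ and $U_{n,s_{j,n}}\to U_{s_j}$, $U_{n,T_n}\to U_T$ strongly (Assumption \ref{as:cvu}), strong convergence of the (partial) isometries automatically upgrading to $\ast$-strong convergence of their adjoints — expand $\norm{U_{n,t}^*x-U_t^*x}^2$ and use $U_{n,t}U_t^*x\to x$ with $P_n\to I$; (ii) products of uniformly bounded $\ast$-strongly convergent operators converge $\ast$-strongly, and $\tr{\rho^nA_n}\to\tr{\rho A}$ whenever $A_n\to A$ $\ast$-strongly with $\sup_n\norm{A_n}<\infty$ and $\rho^n\to\rho$ in $\cals^1$; (iii) $M_{f_j^{(n)}}\to M_{f_j}$ strongly, which is exactly where Assumption \ref{as:density_f} is used: for the continuous $f_j=e^{i\scal{\xi_j,\cdot}}$ the shrinking of the atoms $C_{\ff_n}(x)$ makes $\E[f_j\mid\ff_n]\to f_j$ locally uniformly (a Lebesgue-differentiation / martingale statement), hence the corresponding multiplication operators converge strongly. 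Assembling the factors yields convergence of the joint characteristic functions, i.e. $(X_{n,s_{j,n}})_j\xrightarrow{d}(X_{s_j})_j$.

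Finally I would transfer from the grid times $s_{j,n}$ to the target times $s_j$. Since $X_{n,\cdot}$ is the linear interpolation of its values on $I_n$ and $|s_{j,n}-s_j|\le l_n\to0$, tightness (Assumption \ref{as:tightness}) controls the modulus of continuity, so $|X_{n,s_j}-X_{n,s_{j,n}}|\to0$ in probability; together with the previous step this gives $(X_{n,s_j})_j\xrightarrow{d}(X_{s_j})_j$, hence convergence of all finite-dimensional distributions, and with tightness the claimed $(X_{n,t})_{t\in I}\xrightarrow{d}(X_t)_{t\in I}$. The main obstacle is conceptual rather than computational, and is the one already identified in the text: the random variables $X_{n,t}$ are defined through the commutative algebra $\cala^{U_n}_{n,\cdot}$, which itself depends on $U_n$, so no naive ``algebra convergence'' is available; the trace formula is what circumvents this, by reducing each finite-dimensional law to a product of separately convergent evolved observables. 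It is precisely because this trick does not survive the normalisation $\rrho=\ssig/\tr{\ssig}$ that the statement controls only the positions $X$ and not the random states $\rrho_{n,t}$.
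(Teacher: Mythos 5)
Your proof is correct, and it takes a genuinely different route from the paper's. The paper never goes through finite-dimensional distributions or characteristic functions: it fixes a bounded Lipschitz functional on path space, coarse-grains both the discrete and the continuous processes onto a fixed coarse algebra $\ff_k$ and finite grid $I_k$, and uses Assumptions \ref{as:density_f} and \ref{as:tightness} to make the coarse-graining error small \emph{uniformly in} $n$ (inequality \eqref{eq:proofcv1}, which the paper singles out as the crucial point); for fixed $k$, the two coarsened processes are then realized as indirect pointer measurements via Proposition \ref{prop:consistency}, encoded in the states $W_n(\rho^n\otimes\sigma)W_n^*$ and $W(\rho\otimes\sigma)W^*$, and compared through the trace-norm continuity of the measurement map (Proposition \ref{prop:random_rho_cv}) together with Assumption \ref{as:cvu}. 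You instead prove f.d.d.\ convergence directly: your trace formula is a correct unwinding of the definitions of $\P$, $\phi_t$, $\eta_{t,T}$ (it is, in substance, the computation inside the proof of Proposition \ref{prop:consistency} with the pointer spaces stripped away, and it also explains why the non-uniqueness of $\eta,\phi$ is harmless, the f.d.d.'s being intrinsic), and the limit is passed factor by factor by $\ast$-strong convergence, Assumption \ref{as:density_f} entering only through $\norm{M_{f_j^{(n)}}-M_{f_j}}\rightarrow 0$, while tightness is used twice (Prohorov, and the grid-to-target transfer). Two points to tighten: with $f_j^{(n)}=\E[f_j\mid\ff_n]$ the discrete trace is not \emph{exactly} the characteristic function of $(X_{n,s_{j,n}})_j$, since $X_{n,t}$ sits at representative points of the atoms rather than their barycenters, but the discrepancy is at most $\abs{\xi_j}\,\sup_x \mathrm{diam}\, C_{\ff_n}(x)\rightarrow 0$, so your step (iii) absorbs it; and, like the paper's own proof, you read Assumption \ref{as:ant} as saying $I_n$ becomes dense in $[0,T]$, though as stated it only controls internal gaps. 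As for what each approach buys: yours is shorter, dispenses entirely with the pointer construction and Proposition \ref{prop:random_rho_cv}, and makes transparent why the method stops at $X$ (the normalization $\rrho=\ssig/\tr{\ssig}$ destroys the linear-in-$\rho$ trace structure); the paper's version stays inside its operational measurement framework, and its intermediate objects $\rrho_{k,n,t}=\E(\rrho_{n,t}|\ff_{k,n})$ pinpoint exactly where the corresponding claim for the random states fails, namely the lack of uniformity in $n$ of the conditioning error.
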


In the case of the OQBM, we choose a sequence $\tau_n$ such that $\delta_n/\delta_{n+1}\in \N$. We have $\xx=\R$ and $\xx_n=\delta_n\Z$, the algebra $\ff_n$ being generated by the sets $[\delta k, \delta (k+1)~)$ and we take $\hh_C=\Phi$ and $\hh_{C, n}=T_{\tau_n}\Phi$. Upon proving the tightness assumption \ref{as:tightness}, this theorem together with Theorem \ref{theo:measevoOQBM} provides an alternative proof of the convergence of $(X_{n,\ent{t/\tau}})_{t\in [0,T]}$ to a process solution of \eqref{eq:belavkinOQBM}. However, it is very incomplete since we do not prove the convergence of $\rrho_{n, \ent{t/\tau}}$.

Note that Assumption \ref{as:tightness} depends on the maps $\eta_{s,t}$ and $\phi_t$ chosen in the construction of the process, which are only defined up to a set of measure zero.

\begin{proof}
We separate the dependency on $I_{n}$ and $\cala_n$ on the one hand and on $U_{n, t}$ on the other hand. For $k\leq n$ and any $t\in n I_k$ we write 
\[
X_{k, n, t}=C_{\ff_k}(X_{ n, t})
\]
and we consider the $\sigma$-algebra $\ff_{k, n}$ generated by $(X_{k, n, t})_{t\in I_k}$ and define
\[
\rrho_{k, n, t}=\E(\rrho_{n, t}|\ff_{k, n})~.
\]
Then $(\rrho_{k, n, t},X_{k, n, t})_{t\in I_k}$ is a measured evolution corresponding to the system $(U_{n, t}, \cala_k)_{t\in I_k}$. We also write
\[
X_{k, \infty, t}=C_{\ff_k}(X_t)~
\]
and $\ff_{k, \infty}$ the $\sigma$-algebra generated by $(X_{k, \infty,t})_{t\in I_k}$, and $\rrho_{k, \infty, t}=\E(\rrho_t|\ff_{k, \infty})$, so that $(\rrho_{k,\infty, t}, X_{k, \infty, t})_{t\in I_k}$ is a measured evolution corresponding to the system $(U_t, \cala_k)_{t\in I_k}$. We extend all these functions to $I$ by linear interpolation. 

We prove the convergence in distribution of $(X_{n,t})_{t\in I}$. Let $f$ be a bounded Lipschitz function on the space $\dd$ of continuous functions from $[0, T]$ to $ \xx$. We want to show that $\E(f((X_{n,t})_{t\in [0, T]}))$ converges to $\E(f(( X_t)_{t\in [0, T]}))$ as $n\rightarrow \infty$.

We fix $\eps>0$. For any $k$ sufficiently large, we have $diam(C_{\ff_k}(x))\leq \eps$ for all $x\in \xx$. By the tightness assumption, there is $C>0$ such that for any $n$ sufficiently large,  with probability higher than $1-\eps$  we have $\norm{X_{n,t}-X_{n,s}}\leq \eps $ if $\abs{t-s}\leq C$. Since $d_k\rightarrow 0$ as $n\rightarrow \infty$ this implies that for all $n$ and $k$ large enough we have $\norm{X_{k,n,t}-X_{n,t}}\leq 2\eps $ for all $t$  with probability higher than $1-\eps$. Writing $M=\max \abs{f}$ and $L$ the Lipschitz constant for $f$, this means that there is $K\in \N$ such that for any $n,k\geq K$,
\begin{align}\label{eq:proofcv1}
\abs{\E(f((X_{k, n, t})_{t\in I}))-\E(f((X_{n,t})_{t\in I}))}\leq \eps M+ 2\eps L~.
\end{align}
The crucial point is that this bound is uniform in $n$. The same reasoning shows that for any $k$ large enough we have
\begin{align}\label{eq:proofcv2}
\abs{\E(f(( X_{k, \infty, t})_{t\in I}))-\E(f(( X_{t})_{t\in I}))}\leq \eps M +2\eps L~.
\end{align}
Thus we can fix some $k$ such that the two above quantities are less than $\eps$, and compare $( X_{k, \infty, t})_{t\in I_k}$ and $(X_{k, n, t})_{t\in I_k}$. They are both measurements of discrete algebras on a discrete set of times, so we can actually describe them as indirect measurement, as follows.

We write $I_{k, n}=\set{t_1, \cdots, t_m}$ with $t_0<t_1<\cdots < t_m$. For $1\leq l\leq m$ consider some copies $\yy_l$ of $\xx_{k}$ and write $\hh_l=L^2(\yy_l, \nu)$ with $\nu$ the counting measure. We fix $a\in \xx_{k}$ and define the state $\sigma_l=\ket{a}\bra{a}$ on $\hh_l$. We consider a pointer map $\psi:\xx_{k}\times \xx_{k}\rightarrow \xx_{k}$ such that $\psi(x,a)=x$ for all $x\in \xx_{k}$, and we define the pointer unitaries $Z_l$ on $\hh_B\otimes \hh_l$ as in Definition \ref{def:indirect_meas}. Write $\hh_Y=\bigotimes_{0\leq l \leq m} \hh_l$ and $\sigma=\sigma_0\otimes\cdots\otimes\sigma_m$ and:
\begin{align*}
W&=Z_m U_{t_m, t_{m-1}}Z_{m-1} U_{t_{m-1}, t_{m-2}}Z_{m-2}\cdots Z_0 U_{t_0}~. \\
W_{{n}}&=Z_m U_{n,t_m, t_{m-1}}Z_{m-1} U_{n,t_{m-1}, t_{m-2}}Z_{m-2}\cdots Z_0 U_{n,t_0}~.
\end{align*}

Consider the states $\rho_{W}=W(\rho\otimes \sigma )W^*$ and $\rho_{W_{n}}=W_{n}(\rho^{n}\otimes \sigma)W_{n}^*$.  Now, write $\cala_Y=L^\infty(\prod_{0\leq l\leq m}\yy_l, \nu)$, then the result of the measurement of $\cala_Y$ in the state $\rho_W$ is a process $(Y_t)_{t\in I_{k}}$. By Proposition \ref{prop:consistency} it has the same distribution as $(X_{k,\infty,t})_{t\in I_{k}}$. Likewise, the result of the measurement of $\cala_Y$ in the state $\rho_{W_{n}}$ is a process $(Y_{n,t})_{t\in I_{k}}$ with same distribution as $(X_{k,n, t})_{t\in I_{k, n_1}}$. Now by Assumption \ref{as:cvu} the operator $W_{n}$ converges strongly to $W$ as $n_2\rightarrow n$ and $\rho^{k}$ converges to $\rho$ in $\bb^1$ so $\rho_{W_{k}}$ converges to $\rho_{W}$ in $\bb^1$, so by Proposition \ref{prop:random_rho_cv} the process $(Y_{n, t})_{t\in I_{k}}$ converges in distribution to the process $(Y_{t})_{t\in I_{k}}$.

This implies that for $n$ large enough, 
\[
\abs{\E(f(( X_{k, \infty, t})_{t\in I}))-\E(f(( X_{k,n,t})_{t\in I}))}\leq \eps~.
\]
But $k$ was fixed large enough so that $\abs{\E(f((X_{k, n, t})_{t\in I}))-\E(f((X_{n,t})_{t\in I}))}\leq \eps$  this implies that 
\[
\abs{\E(f(( X_{n, t})_{t\in I}))-\E(f(( X_{t})_{t\in I}))}\leq 3\eps~
\]
thus proving the convergence in distribution of $(X_{n, t})_{t\in I}$. 
\end{proof}

The key point is the inequality \eqref{eq:proofcv1} which is uniform in $n$. Such a uniform estimate could not be obtained for $\rrho_{k,n,t}$. Indeed, even if the $\sigma$-algebra $\ff_{k, n}$ is very close to the full $\sigma$-algebra for $k$ large enough, this does not implies that $\rrho_{k,n,t}=\E(\rrho_{n,t}|\ff_{k,n})$ is close to $\rrho_{n,t}$ for $k$ large enough uniformly in $n$. 

The hypothesis that $\cala_n$ is coarse and $I_n$ is finite is actually not necessary. To go without it, we may use coarse subalgebras $\cala_{k, n}$ of $\cala_n$ and finite subsets $I_{n,k}\subset I_n$ and look at the measured evolutions of $(U_{n,t}, \cala_{k,n})_{t\in I_{n,k}}$.

\subsection{Open questions and prospects}

Three questions are left open in Theorem \ref{prop:meas_evo_cv}:
\begin{enumerate}
\item Are the assumptions sufficient to ensure the convergence in distribution of $(\rrho_{n,t})_{t\in I}$ ?
\item On what condition does an $\hh_G$-non-demolition system $(U_t, \cala_t)_{0\leq t \leq T}$ admit a measured evolution process $(\rrho_t, X_t)_{0\leq t\leq T}$ which is almost surely continuous in time? It is the case for $U_t$ defined by the Hudson-Parthasarathy Equation and $\cala_t=L^\infty(\calw([0, t]), \mu)$, but it is not the case when $\cala_t$ is the algebra generated by the $a^1_1(s)$ for $s\leq t$ (the measured evolution has jumps in this case, see for example \cite{pellegrini_jumps_10}).
\item Considering a family of $\hh_G$-non demolition systems $(U_{\tau, t}, \cala_\tau)_{t\in I_\tau}$ with measured evolutions $(\rrho_{t, \tau}, X_{t, \tau})_{0\leq t\leq T}$. Is there any condition on the unitaries and algebras to ensure the tightness of the family of processes in the space of continuous functions? 
\end{enumerate}

Some questions concern the OQBM more specifically. 

\begin{enumerate}[resume]
\item In the trajectories of the Open Quantum Brownian Motion, there is no back-action of the position $X_t$ on the state $\rrho_t$, which satisfies a closed equation. This framework is insufficient in the context of quantum control, where we would want $X_t$ to represent some control function which depends on the history of the trajectory. What if $N$ and $H$ depends on the position $X_t$ ? We may expect that under some regularity condition on the functions $x\mapsto N(x)$ and $x\mapsto H(x)$ (for example, Schwartz functions), there exists an \emph{inhomogeneous OQBM}, whose unitary operator $\gu_t$ is solution of the equation
\[
d\gu_t=\Big((-iM_H-\frac{1}{2}M_{N^*N}+\frac{1}{2}\partial_x^2-\partial_x M_N)dt+(M_N-\partial_x)da^0_1(t)+(-M_N^*-\partial_x)da^1_0(t) \Big) \gu_t~
\]
where $M_N$ is the operator on $\hh_G\otimes \hh_z=L^2(\R, \hh_G)$ defined by $M_Nf(x)=N(x)f(x)$. This idea was raised in the original article on the OQBM, \cite{OQBM}. Formally, everything works the same way as the homogeneous OQBM, the equation for the measured evolution being expected to be of the form
\[
\left\{
\begin{array}{ll}
d\rrho_t&=\call_{X_t}(\rrho_t)dt+\left(N(X_t)\rrho_t+\rrho_t N(X_t)^*-\rrho_t\Tau_{X_t}(\rho_t)\right)dB_t\\
dX_t&=\Tau_{X_t}(\rrho_t)dt+dB_t~.
\end{array}
\right.
\]
However, proving the existence of $\gu_t$ is far more complex than for the homogeneous OQBM, since the operators $\partial_x$ and $M_N$ are no more commuting, and the space of bandlimited functions $\dd_C$ is no more preserved. The existenc of solutions of Hudson-Parthasarathy Equations with unbounded coefficients have been studied in \cite{Fagnola2006} and \cite{FagnolaWillsUnbounded03}, but the convergence of discretisations in the toy Fock space in the spirit of Attal and Pautrat has never been studied for unbounded operators.
\item The generalization of the homogeneous OQBM to higher dimensions is straightforward. Going further, we may study an inhomogeneous OQBM on a manifold. With an Einstein manifold for example, this may provide a semiclassical model for a relativistic quantum particle, in the spirit of the relativistic Brownian motion \cite{angst2016}, \cite{franchiLeJan}.
\end{enumerate}

\bibliography{/home/andreys/Documents/in_progress/bibli_quantum.bib}

\end{document}